\newcommand{\preg}[2][]{\mathrm{PR}_{#1}(#2)}
\newcommand{\pregs}{\mathrm{PR}}
\newcommand{\er}[2][]{\mathrm{ER}_{#1}(#2)}
\newcommand{\fer}[2][]{\mathrm{FER}_{#1}(#2)}
\newcommand{\fers}{\mathrm{FER}}
\newcommand{\ir}[2][]{\mathrm{IR}_{#1}(#2)}
\newcommand{\cir}[2][]{\mathrm{CIR}_{#1}(#2)}
\newcommand{\cirs}{\mathrm{CIR}}
\newcommand{\fcir}[2][]{\mathrm{FCIR}_{#1}(#2)}
\newcommand{\fcirs}{\mathrm{FCIR}}
\newcommand{\creg}[2][]{\mathrm{CR}_{#1}(#2)}
\newcommand\numberthis{\addtocounter{equation}{1}\tag{\theequation}}
\title{Mechanisms for a No-Regret Agent:\\Beyond the Common Prior\footnote{
		This work began as part of the 2018 Special Quarter on Data Science and Online Markets at Northwestern. We are especially grateful to Simina Br\^{a}nzei and Katya Khmelnitskaya for their early contributions to this project. We are also grateful to Eddie Dekel, Marciano Siniscalchi, and several anonymous referees for helpful comments, in addition to audiences at the 70th Midwest Theory Day and Northwestern. Jason Hartline and Aleck Johnsen were supported in part by NSF grant CCF-1618502.
	}}
\author{Modibo Camara\footnote{Department of Economics, Northwestern University. Email: modibokhane@gmail.com.}
	\quad Jason Hartline\footnote{Department of Computer Science, Northwestern University. Email: hartline@northwestern.edu.}
	\quad Aleck Johnsen\footnote{Department of Computer Science, Northwestern University. Email: aleckjohnsen@u.northwestern.edu.}}
\begin{document}

	\begin{titlepage}
	
	\maketitle
	
	\thispagestyle{empty}
	
	\begin{abstract}
		A rich class of mechanism design problems can be understood as incomplete-information games between a principal who commits to a policy and an agent who responds, with payoffs determined by an unknown state of the world. Traditionally, these models require strong and often-impractical assumptions about beliefs (a common prior over the state). In this paper, we dispense with the common prior. Instead, we consider a repeated interaction where both the principal and the agent may learn over time from the state history. We reformulate mechanism design as a reinforcement learning problem and develop mechanisms that attain natural benchmarks without any assumptions on the state-generating process. Our results make use of novel behavioral assumptions for the agent -- centered around \textit{counterfactual internal regret} -- that capture the spirit of rationality without relying on beliefs.
	\end{abstract}
	
	%\newpage\tableofcontents
	
\end{titlepage}
	
%	\note{review referee comments
%		
%		maybe outline sections 5, 7, 8 at the beginning of those sections
%		
%}
	
	\section{Introduction}\label{S1}

Mechanism design is a branch of economic theory concerned with the design of social institutions. It encompasses a wide range of phenomena that have historically been of interest to economists, including, but not limited to, auctions \parencite{Vickrey61, Myerson81}, matching markets \parencite{GS62, Roth82}, taxation \parencite{Mirrlees71}, contracts \parencite{SZ71, Ross73}, and persuasion \parencite{KG11}.

%Within economics, much of the literature on mechanism design has been geared towards explanation, as opposed to prescription. Often, the researcher's intention is to gain insight into why certain designs are better than others, and to understand why existing institutions take the form that they do. The models that researchers rely on are judged primarily by this standard. The mechanisms derived from these models are taken seriously, but not literally. An unfortunate side effect is that it is rarely clear whether and how these mechanisms can be implemented in practice.

Despite this field's potential, it is often unclear whether and how mechanisms derived from economic theory can be implemented in practice. In particular, one modeling practice stands out as a barrier to implementation: the \textit{common prior} assumption. Many mechanism design problems are only interesting in the presence of uncertainty, and this uncertainty is typically modeled as stochasticity. The \textit{state} of the world is drawn according to some distribution and, importantly, the distribution is commonly known by the designer and all participants in the mechanism.\footnote{This assumption is limiting in two ways. First, mechanisms based on a common prior may not be practicable, because they rely on knowledge that a real-world designer is unlikely to possess. Second, even if the designer knows the distribution (resp. has beliefs), the participants may not arrive with the same knowledge (resp. share those beliefs).}

This paper will dispense with the common prior assumption. In its place, we consider a model of adversarial online learning where the principal and a single agent are learning about the state, over time, using data. The static mechanism design problem is a Stackelberg game of incomplete information. The principal chooses a policy $p$, the agent chooses a response $r$, nature chooses a state $y$, and payoffs are realized. In the online problem, this game is repeated $T$ times, where state $y_t$ is revealed at the end of period $t$. The sequence of states is arbitrary and the principal's mechanism should perform well without prior knowledge of the sequence. The principal's present choices can affect the agent's future behavior; this makes mechanism design a reinforcement learning problem in our model.

In the absence of distributional assumptions, standard restrictions on the agent's behavior, like Bayesian rationality, become toothless. In its place, we define \textit{counterfactual internal regret} (CIR) and assume that the agent obtains low CIR. This is an ex post definition of rationality that includes Bayesian rationality (with a well-calibrated prior) as a special case. We develop data-driven mechanisms that are guaranteed to perform well under our behavioral assumptions. Specifically, we prove upper bounds on the principal's regret from following our mechanism, relative to the single fixed policy that performs best in hindsight. Our results take the form of reductions from the principal's problem to robust versions of static mechanism design with a common prior.

\paragraph{Running Example.}

Bayesian persuasion is a model of strategic communication, due to \textcite{KG11}. It has received considerable attention from economists and, more recently, algorithmic game theorists (e.g. \cite{DX16}, \cite{CDHW20}). It is a useful test case for our framework because (a) it is interesting even with only one agent, (b) the optimal solution varies with the agent's beliefs, and (c) it has the potential to be widely applicable.\footnote{Bayesian persuasion has been used to study a wide range of topics, including recommendation systems \parencite{MSSW16}, traffic congestion \parencite{DKM17}, congested social services \parencite{AIM20}, financial stress-testing \parencite{GL18}, and worker motivation \parencite{ES20}.%genetic testing \parencite{SS18},
}

Our running example is adapted from \textcite{KG11}. %\footnote{This example is formally equivalent to the judge-prosecutor game. However, that interpretation is less well suited for models with learning, which require the state (or at least the payoffs) to be observed eventually. A judge cannot reliably learn, following a trial, whether she correctly sentenced the defendant.}
A drug company (the principal) seeks approval from a regulator (the agent) for a newly-developed drug. The state $y\in\{\mathrm{High}, \mathrm{Low}\}$ describes the drug's quality. Neither the regulator nor the company know the quality in advance. The company needs to design a clinical trial that will generate (possibly noisy) information about the drug's quality. Roughly, a trial $p$ specifies the probability $p(m,y)$ of sending a message $m$ to the regulator, conditional on the drug quality $y$. Informally, the message describes the outcome of the trial. After hearing the message, the regulator decides whether to approve the drug. The regulator receives a payoff if it approves a high-quality drug or rejects a low-quality drug. The company receives a payoff if the regulator approves, regardless of quality. Its challenge is to design a clinical trial that convinces the regulator to approve as many drugs as possible.

To predict behavior in incomplete-information games, we need to make assumptions about how the agents deal with uncertainty. The common prior is one such assumption. In our running example, the common prior would specify a probability $q\in[0,1]$ that the drug is high quality. Consider the case $q=1/3$. If the company does not run a trial -- e.g. it recommends ``approve'' in every state -- the regulator would never approve, as the drug is more likely to be low quality than high quality ex ante. If the company runs the most thorough trial possible -- e.g. it recommends ``approve' if and only if the drug is high quality -- the regulator would approve with probability $1/3$. Finally, consider the optimal trial. The optimal trial always recommends ``approve'' if the drug is high quality. If the drug is low quality, it recommends ``approve'' and ``reject'' with equal probability. After hearing ``approve'', the regulator's posterior puts equal weight on both states, and so it might as well approve. Here, the regulator approves with probability $2/3$.

\paragraph{Online Mechanism Design.}

In our model, both the company and the regulator would be learning about drug quality over time. New drugs arrive sequentially. For each drug, the company designs a clinical trial and generates a message. The regulator hears the message and decides whether to approve. Regardless of whether the drug is approved, both parties eventually learn the drug's true quality, and the next drug arrives. The company's strategy, called a \textit{mechanism}, maps the drug (i.e. state) history and the approval decision (i.e. response) history to a trial for the current drug. The regulator's strategy, called a learning algorithm or \textit{learner}, maps the drug quality history and the trial (i.e. policy) history to an approval decision for the current drug. This model is \textit{online} because the company and regulator must make decisions while the drugs are still arriving. It is \textit{adversarial} in the sense that we impose no assumptions on the sequence of drugs, and so any results (e.g. claiming that a mechanism performs well) must hold for all such sequences.

%\paragraph{The Principal's Problem.}

The company's problem is to develop a mechanism that performs as well as the best-in-hindsight trial. That is, the company should not regret following its mechanism relative to any simple alternative where it picks the same trial $p$ in every period. To evaluate what would have happened under an alternative sequence of trials, the company must take into account how the regulator's behavior would have changed. Therefore, the company faces a reinforcement learning problem and its benchmark corresponds to the notion of \textit{policy regret} in the literature on bandit learning with adaptive adversaries (e.g. \cite{ADT12}). %\parencite{ADMM18, MOSW02, FM06, ADT12, HKR16, RH08}.
%\footnote{The ``policy'' in policy regret has nothing to do with the principal's policy in our model. To avoid confusion, we avoid referring to policy regret in the body of the paper. However, this term will be useful in the introduction.}
In that setting, \textcite{ADT12} show that guaranteeing sublinear (policy) regret is generally impossible.\footnote{\textcite{ADT12} obtain positive results when the adversary satisfies a bounded memory assumption. \textcite{RH08} obtain positive results under a different kind of assumption, that the environment is sufficiently ``forgiving'' of mistakes. These papers reflect two prominent approaches in reinforcement learning: (a) restricting attention to Markov decision processes, and (b) assuming an ability to ``reset'' the problem \parencite{KMN99}.} This fact precludes a simple solution to the company's problem; we must constrain the regulator's behavior.\footnote{\textcite{ADMM18} consider policy regret in a repeated game and use the self-interest of the adaptive adversary to motivate behavioral restrictions. This is reminiscent of a literature on multi-agent reinforcement learning when the state is Markovian \parencite{BBS10, UV03, Littman94, HW98}. Unlike these papers, we do not have the ability in our model to advise all participants simultaneously.}

% Namely, we take the perspective of the principal and develop a good strategy under permissive assumptions on the agent's behavior. 

%\paragraph{The Agent's Problem.}

The standard way to constrain the regulator/agent's behavior -- i.e. to capture ``self-interest'' in the absence of a meaningful notion of ex ante optimality -- is to impose upper bounds on the agent's regret. This will be our approach as well. We build on existing no-regret assumptions, in ways that are intended to refine and better motivate those assumptions.

%\note{make a note here that this isn't what we're doing, but will inspire our solution -- CIR... we will tkae this same basic approach}

\paragraph{No-Regret Agents.}

%These bounds do not imply that the judge is using a no-regret learning algorithm, only that she obtains low regret on the realized state sequence.
Two notions of regret have been used historically: external and internal (or swap) regret (ER and IR). For example, \textcite{NST15} show how ER bounds combined with bidding data can be used to partially identify bidder valuations in a dynamic auction. \textcite{BMSW18} consider a dynamic pricing problem against no-ER agents.\footnote{In their model, the agent is ``learning'' an appropriate response to the principal's pricing strategy. If the agents use naive mean-based learners, \textcite{BMSW18} provide a mechanism that extracts the full surplus. In particular, the agent fails to anticipate the mechanism that the principal is using. As they point out, this leads to odd behavior: the agent may purchase goods at a price exceeding her valuation. In our setting, the agent does not face uncertainty with respect to the mechanism; instead, she faces uncertainty with respect to the state sequence.} Their analysis is generalized by \textcite{DSS19}, who study repeated Stackelberg games of complete information. Furthermore, the literature on no-regret learning in games has established that if agents satisfy a no-ER (resp. no-IR) property in a repeated game, the empirical distribution of their actions will converge to a coarse correlated equilibrium (resp. correlated equilibrium) \parencite{FV97, HM01, BHLR08, HST15}.

Both ER and IR can be thought of as ``non-policy'' regret, because they do not take into account how the agent's behavior affects the behavior of others. The justification for these regret bounds is that (a) they are satisfied by well-known learning algorithms (see e.g. \cite{LW94} for ER), and (b) they generalize optimality conditions associated with a stationary equilibrium. Nonetheless, these regret bounds can be problematic. Effectively, they assume that agents are (a) sophisticated enough to obtain low non-policy regret, but (b) not aware that their true objective is policy regret. Keep in mind that an agent who minimizes policy regret can easily obtain high non-policy regret, and thereby violate the regret bounds.

%Of course, as mentioned above, it is generally impossible to guarantee low policy regret.
%\footnote{}

To avoid this problem, the principal in our model can commit to a mechanism that is \textit{nonresponsive} to the agent's behavior: the policy $p_t$ depends on the state history but not on the agent's response history.
When mechanisms are nonresponsive, non-policy regret and policy regret coincide for the agent. 
Then, bounds on the agent's regret are permissive assumptions that allow a wide range of sophisticated and self-interested behavior, including Bayesian rationality. Keep in mind, there is no need to resort to responsiveness if nonresponsive mechanisms tightly bound the principal's regret.\footnote{This approach seems spiritually similar to that of \textcite{IMSW20}, who develop mechanisms that incentivize efficient social learning. By restricting attention to simple disclosures (i.e. unbiased subhistories), they significantly simplify the agents' inferential problem and can motivate a permissive notion of frequentist rationality. Having restricted disclosure in this manner, they nonetheless design mechanisms with optimal rates of convergence.}

\paragraph{Counterfactual Internal Regret.}

Without constraints on the agent's behavior, an early mistake by the principal can result in a permanent, undesirable shift in the agent's behavior. As we will see, this can occur when the agent behaves as if she has additional information about the state of the world that is not accounted for in our description of the model. The agent can make the principal's problem infeasible if she is willing to exploit her information selectively, i.e. based on the principal's choice of policies. Unfortunately, neither no-ER nor no-IR assumptions can rule out selective use of information.

%For example, we did not account for the defendant's wealth in our state space, but an experienced judge may know that poor defendants are more likely to be innocent. Formally, information arises when the adversary correlates the agent's strategy and the state sequence in a way that makes the agent appear prescient. As we will illustrate, it is straightforward for the agent to make the principal's problem infeasible, so long as she is willing to exploit her information selectively.%By selectively exploiting exploit this information in a way that depends on the policy, the agent can make the principal's problem infeasible.

%By deciding whether and how to exploit this information in a way that depends on the principal's policy, the agent can make the principal's problem infeasible.

Our notion of rationality requires the agent to fully and consistently exploit her information, regardless of the principal's chosen policies. Existing benchmarks like external and internal regret cannot capture this requirement. To see why, it helps to consider the fable of the tortoise and the hare. Both animals have an hour to traverse a one-mile track. For the tortoise, this requirement is feasible and binding: finishing in time means hustling, without substantial breaks or detours. For the hare, however, the requirement is hardly restrictive: it may stop for a break, walk rather than run, or even run around in circles while still finishing the race in time. Benchmarks like external or internal regret imply reasonable behavior for an uninformed agent (i.e. the tortoise). But for an informed agent (i.e. the hare), these benchmarks are easy enough to satisfy that it may engage in all kinds of frivolous behavior -- possibly to the detriment of the principal.

The solution to our analogy is to strengthen the hare's benchmark. If the hare has to traverse the track in three minutes, it needs to hustle, like the tortoise. Similarly, if the agent has to obtain no-regret with her information as additional context, this would preclude the kind of frivolous behavior that makes the principal's problem infeasible. Of course, setting this benchmark requires us to know the nature and quality of the agent's information, just as we needed to know the top speed of the hare. The idea behind counterfactual internal regret is that we can identify the agent's information with her past behavior under counterfactual mechanisms. Intuitively, any information that is useful should eventually reveal itself through variation in behavior.

\paragraph{Main Results.}

This paper considers three variations on our model: one where the principal knows the agent's information, one where the agent has no private information, and one where the agent may have private information. In each case, we propose a mechanism and bound on the principal's regret in terms of the agent's counterfactual internal regret (CIR).

Our first mechanism is intended as a warmup. It requires oracle access to the agent's information and has poor performance in finite samples, but avoids some complications associated with information asymmetry between the principal and agent. First, the mechanism produces a calibrated forecast of the state in the current period using off-the-shelf algorithms, using the oracle as additional context for the forecast. Then, it chooses the worst-case optimal policy in a (hypothetical) $\epsilon$-robust version of the common prior game. In that game, the agent's response only needs to be $\epsilon$-approximately optimal, and the mechanism substitutes its forecast for the prior.

Theorem \ref{T1} bounds the principal's regret under this mechanism, under some restrictions on the stage game. Suppose there are $n_\mathcal{Y}$ states, $n_\mathcal{P}$ policies, and $n_\mathcal{R}$ responses. Fix a parameter $\epsilon>0$ (controlling robustness) and $\delta>0$ (controlling the fineness of a grid). Our bound is
\begin{equation}\label{E5}
\underbrace{O(\epsilon)}_\text{cost of $\epsilon$-robustness}
+\frac{1}{\epsilon}\left(
\underbrace{O(\cirs)}_\text{agent's regret}
+\underbrace{\tilde{O}\left(\frac{\delta^{1-n_\mathcal{Y}}n_\mathcal{Y}n_\mathcal{R}^{2n_\mathcal{P}}}{T^{1/4}}\right)}_\text{forecast miscalibration}
+\underbrace{O\left(\delta^{1/2}\right)}_\text{approximation error}
\right)
\end{equation}
If the agent satisfies no-CIR, i.e. $\cirs\to0$ as $T\to\infty$, then the principal's regret vanishes in $T$ as long as $\epsilon,\delta\to0$ at the appropriate rates. Moreover, the principal's average payoffs converge to a natural benchmark: what he would have obtained in a stationary equilibrium of the repeated game with a common prior (the empirical distribution conditioned on agent's information).

Our second mechanism applies when the agent is as uninformed as the principal. This mechanism is identical to the first, except its forecast does not use information revealed by the learner. We formalize ``uninformedness'' by assuming that the agent's external regret is non-negative (in conjunction with no-CIR). Theorem~\ref{T2} bounds the principal's regret under this mechanism, under some additional restrictions on the stage game. Our bound is
\begin{equation}\label{E6}
\underbrace{O(\epsilon)}_\text{cost of $\epsilon$-robustness}
+\frac{1}{\epsilon}\left(
\underbrace{O(\cirs)}_\text{agent's regret}
+\underbrace{\tilde{O}\left(\frac{\delta^{1-n_\mathcal{Y}}n_\mathcal{Y}}{T^{1/4}}\right)}_\text{forecast miscalibration}
+\underbrace{O\left(\delta^{1/2}\right)}_\text{approximation error}
\right)
\end{equation}
Compared to \eqref{E5}, this drops the exponential dependence on the number $n_{\mathcal{P}}$ of policies. This is because the principal's forecast does not need to take into account the agent's information, which significantly reduces the forecast miscalibration in finite samples.

Our third mechanism applies even when the agent is more informed than the principal. Here, we consider an ``informationally robust'' version of the stage game, due to \textcite{BM13}, where the agent receives a private signal from an unknown information structure. Like before, we formulate an $\epsilon$-robust version of this game, where the agent's response need only be $\epsilon$-approximately optimal. Our mechanism is identical to the second mechanism, except that it chooses the worst-case optimal policy in the $\epsilon$-informationally-robust game instead of the $\epsilon$-robust game.

Theorem \ref{T3} bounds the principal's regret under this mechanism, under some restrictions on the stage game. Let $\hat{\pi}_T$ denote the empirical distribution of states $y_{1:T}$. Given a common prior $\pi$, let $\nabla(\pi)$ be the difference between the principal's maxmin payoff and his maxmax payoff across all possible information structures. Roughly, our bound is
\begin{equation}\label{E7}
\underbrace{\nabla(\hat{\pi}_T)}_\text{cost of informational robustness}
+\underbrace{O(\epsilon)}_\text{cost of $\epsilon$-robustness}
+\frac{1}{\epsilon}\left(
\underbrace{O(\cirs)}_\text{agent's regret}
+\underbrace{\tilde{O}\left(\frac{\delta^{1-n_\mathcal{Y}}n_\mathcal{Y}}{T^{1/4}}\right)}_\text{forecast miscalibration}
+\underbrace{O\left(\delta^{1/2}\right)}_\text{approximation error}
\right)
\end{equation}
Unlike \eqref{E5} and \eqref{E6}, the principal's regret does not vanish as $T\to\infty$. However, it is vanishing up to the cost of informational robustness $\nabla(\hat{\pi}_T)$ that would also be present under a common prior, if the agent were more informed than the principal.

Finally, although our focus is not on computational complexity, the reader should note that the computational tractability of our mechanisms will depend critically on our ability to solve robust mechanism design problems under a common prior. So, while our bounds on the principal's regret apply to a large class of games, evaluating tractability may require a case-by-case analysis.

 %Thus, we establish the common prior as a plausible limiting phenomenon over the long run, without requiring any direct assumptions on the data-generating process. Of course, the existence of good, locally-robust mechanisms is essential to this result.

%This distinguishes us from a literature on no-regret mechanism design against strategic agents \parencite{GJM18, HLW18}. Our goal is not to take advantage of the agent's impatience.

\paragraph{Additional Related Work.}

Within computer science, many researchers share our goal of replacing prior knowledge in mechanism design with data. The literature on sample complexity in mechanism design allows the principal to learn the state distribution from i.i.d. samples \parencite{BBHM08,CR14,MR15,Syrgkanis17}. Here, the data arrives as a batch rather than online, there is no repeated interaction and the question of responsiveness does not arise. However, there has also been work that applies online learning to auction design (e.g. \cite{DS16}; \cite{DHLSSV17}) and Stackelberg security games (e.g. \cite{BBHP15}). Here, agents are either short-lived or myopic, whereas our agent is long-lived and potentially forward-looking.

These papers can avoid the agent's learning problem because they emphasize applications where the agent does not face uncertainty, or where truthfulness is a dominant strategy. In contrast, \textcite{IMSW20, CDHW20} study problems that are closer to our own, insofar as both the principal and the agent must learn from data. They impose behavioral assumptions that are suited for i.i.d. data, whereas our model generalizes to adversarial data.

Within economics, research has focused on relaxing prior knowledge, rather than replacing it entirely. Part of the literature on robust mechanism design relaxes the common prior to some kind of approximate agreement on the distribution \parencite{MM11,OT12,AKS13,OP17,JMM12}. Our approach will suggest $\epsilon$-robustness and $\epsilon$-informational-robustness as alternatives to ``approximate agreement''.

\paragraph{Organization.}

Section \ref{S2} introduces the stage game and $\epsilon$-robustness. Section \ref{S3} introduces the repeated game. Section \ref{S4} defines external, internal, and counterfactual internal regret. Section \ref{S4} presents our mechanism and regret bounds when the agent's learner is known. As preparation for the remaining results, section \ref{S5} introduces the stage game with private signals. Section \ref{S4} presents our mechanism and regret bounds when the agent is uninformed. Section \ref{S4} presents our mechanism and regret bounds when the agent may be more informed than the principal. Section \ref{S9} concludes with a discussion of open problems.

Appendix \ref{Ap1} applies these results to two special cases: our running example, and a principal-agent problem. Appendix \ref{Ap2} considers the complexity of the agent's learning problem. Appendix \ref{Ap3} describes our forecasting algorithms in more detail. Appendix \ref{Ap4} relaxes some of the restrictions on the stage game and generalizes our results. Appendix \ref{Ap5} collects proofs.

	\section{Stage Game}\label{S2}

Our model features three participants: a male principal, a female agent, and nature. As advertised, we are interested in a repeated interaction between these participants. To begin with, however, we describe the stage game, which will constitute a single-round of the repeated game. In the stage game, the principal moves first and commits to a policy $p\in\mathcal{P}$. Next, the agent observes the policy $p$ and then chooses a response $r\in\mathcal{R}$. Utility functions depend on the response $r$, the policy $p$, and an unknown state of the world $y\in\mathcal{Y}$, chosen by nature. Formally, the agent's utility function is $U:\mathcal{R}\times\mathcal{P}\times\mathcal{Y}\to[0,1]$ while the principal's utility function is $V:\mathcal{R}\times\mathcal{P}\times\mathcal{Y}\to[0,1]$.

\begin{assume}[Regularity]\label{A1}
	We impose the following regularity conditions.
	\begin{enumerate}
		\item The state space $\mathcal{Y}$ is finite.
		\item The response space $\mathcal{R}$ is a compact space with metric $d_\mathcal{R}$.%\footnote{To be clear, any finite spaces $\mathcal{R}$, $\mathcal{P}$ satisfy these assumptions.}
		\item The policy space $\mathcal{P}$ is a compact space with metric $d_{\mathcal{P}}$.
		\item The utility $U$ is equi-Lipschitz continuous in $(r,p)$ for Lipschitz constants $K^U_{\mathcal{R}}$ and $K^U_{\mathcal{P}}$, i.e.
		\[
		\forall y\in\mathcal{Y}:\quad\left|U(r,p,y)-U(\tilde r,\tilde p,y)\right|\leq K^U_{\mathcal{R}}d_\mathcal{R}(r,\tilde{r})+K^U_{\mathcal{P}}d_{\mathcal{P}}(p,\tilde{p})
		\]
		\item The utility $V$ is equi-Lipschitz continuous in $(r,p)$ for Lipschitz constants $K^V_{\mathcal{R}}$ and $K^V_{\mathcal{P}}$, i.e.
		\[
		\forall y\in\mathcal{Y}:\quad\left|V(r,p,y)-V(\tilde r,\tilde p,y)\right|\leq K^V_{\mathcal{R}}d_\mathcal{R}(r,\tilde{r})+K^V_{\mathcal{P}}d_{\mathcal{P}}(p,\tilde{p})
		\]
	\end{enumerate}
\end{assume}

Later on, we will use covers to convert infinite action spaces into discrete approximations. For example, our running example involved an infinite policy space.

\begin{defn}[Covers]
	Let $\mathcal{X}$ be a metric space with metric $d_{\mathcal{X}}$. Generally, lower case letters $x$ denote elements of $\mathcal{X}$ while upper case letters $X$ denote subsets.
	\begin{enumerate}
		\item Fix $\delta_{\mathcal{X}}>0$. Let the partition $\mathcal{C}_{\mathcal{X}}$ be a $\delta_{\mathcal{X}}$ \emph{cover} of $\mathcal{X}$. That is, for every set $X\in\mathcal{C}_X$, any two elements $x,\tilde{x}\in X$ must be within distance $\delta_{\mathcal{X}}$ of one another, i.e. $d_{\mathcal{X}}(x,\tilde{x})<\delta_{\mathcal{X}}$.
		
		\item To reduce notation, we also let $\mathcal{C}_{\mathcal{X}}$ denote a discretized subset of $\mathcal{X}$. That is, for each set $X\in\mathcal{C}_{\mathcal{X}}$, choose a unique $x\in X$ to represent $X$. In that case, we say $x\in\mathcal{C}_{\mathcal{X}}$.
		
		\item Let $x\in\mathcal{X}$ and $\tilde{x}\in\mathcal{C}_{\mathcal{X}}$. We say that $\tilde{x}$ is the \emph{discretization} of $x$ if $x,\tilde{x}$ belong to the same subset $X\in\mathcal{C}_{\mathcal{X}}$.
	\end{enumerate}
\end{defn}

We will refer to covers $\mathcal{C}_{\mathcal{P}}$ of the policy space (with metric $d_{\mathcal{P}}$), $\mathcal{C}_{\mathcal{R}}$ of the response space (with metric $d_{\mathcal{R}}$), and $\mathcal{C}_{\Delta(\mathcal{Y})}$ of the state distributions $\Delta(\mathcal{Y})$ (with the $l_1$ metric).\footnote{Note that $\mathcal{P}$, $\mathcal{R}$, $\Delta(\mathcal{Y})$ are all compact. Therefore, we can always construct a finite cover.} Of course, if the underlying set $\mathcal{X}$ is finite to begin with, we can simply set $\delta_{\mathcal{X}}=0$ and let $\mathcal{C}_{\mathcal{X}}=\mathcal{X}$.

The stage game plays an important role in our analysis. Two of our results (theorems \ref{T1} and \ref{T2}) are best understood as reducing the online mechanism design problem to the simpler task of finding a ``locally-robust'' policy in the stage game. In the locally-robust problem, we maintain the traditional common prior assumption: that is, the state $y$ is drawn from a commonly known distribution $\pi$. However, we relax the assumption that the agent maximizes her expected utility $\ex[y\sim\pi]{U(r,p,y)}$. Instead, she chooses a response (or a distribution $\mu$ over responses) that guarantees her an expected utility that is within an additive constant $\epsilon$ of the optimum. Since this assumption only partially identifies the agent's behavior, the principal's utility can take on a range of values. The principal's worst-case utility from following policy $p$ is described by the function
\[
\alpha_p(\pi,\epsilon)=\min_{\mu\in\Delta(\mathcal{R})}\ex[y\sim\pi]{\ex[r\sim\mu]{V(r,p,y)}}
\quad\mathrm{s.t.}\quad
\max_{\tilde{r}\in\mathcal{R}}\ex[y\sim\pi]{U(\tilde{r},p,y)}-\ex[y\sim\pi]{\ex[r\sim\mu]{U(r,p,y)}}\leq\epsilon
\]
and his best-case utility is described by
\[
\beta_p(\pi,\epsilon)=\max_{\mu\in\Delta(\mathcal{R})}\ex[y\sim\pi]{\ex[r\sim\mu]{V(r,p,y)}}
\quad\mathrm{s.t.}\quad
\max_{\tilde{r}\in\mathcal{R}}\ex[y\sim\pi]{U(\tilde{r},p,y)}-\ex[y\sim\pi]{\ex[r\sim\mu]{U(r,p,y)}}\leq\epsilon
\]
The worst-case optimal (or $\epsilon$-robust) policy, defined below, is one of two main ingredients in our proposed mechanisms (the other is a calibrated forecasting algorithm).

\begin{defn}[$\epsilon$-Robustness]
	The \emph{$\epsilon$-robust} policy is worst-case optimal over all response distributions $\mu$ that achieve at least the agent's optimal expected utility minus $\epsilon$. Formally,
	policy is $$p^*(\pi,\epsilon)\in\arg\max_{p\in\mathcal{P}}\alpha_p(\pi,\epsilon)$$
\end{defn}

\begin{defn}[Cost of $\epsilon$-Robustness]
	Fix a distribution $\pi$ and parameter $\epsilon>0$. The cost of $\epsilon$-robustness is the distance between the principal's best-case utility (under the best-case optimal policy) and worst-case utility (under the worst-case optimal policy). Formally,
	\[
	\Delta(\pi,\epsilon)=\max_{p\in\mathcal{P}}\beta_p(\pi,\epsilon)-\alpha_{p^*(\pi,\epsilon)}(\pi,\epsilon)
	\]
\end{defn}

The cost of $\epsilon$-robustness will be a key variable in our upper bounds on the principal's regret in the repeated game. It will be convenient to assume that this cost is growing at most linearly in $\epsilon$, although this assumption is not really necessary (see appendix \ref{Ap4}).

\begin{assume}\label{A2}
	For any distribution $\pi$, $\Delta(\pi,\epsilon)=O(\epsilon)$.
\end{assume}

Finally, the following lemma will be important to our results. Suppose that the principal misjudges the agent. Instead of choosing a response that achieves at least her optimal expected utility minus $\epsilon$, the agent only achieves her optimal expected utility minus $\epsilon+\tilde{\epsilon}$, for $\tilde{\epsilon}>0$. Nonetheless, if the principal uses the $\epsilon$-robust policy, his utility degrades smoothly in the residual $\tilde{\epsilon}$.

\begin{lemma}\label{L1}
	Assume regularity (assumption \ref{A1}). For any distribution $\pi$, policy $p$, and constants $\epsilon,\tilde{\epsilon}>0$, the principal's worst-case and best-case utilities satisfy
	\[
	\alpha_p(\pi,\epsilon+\tilde\epsilon)\geq\alpha_p(\pi,\epsilon)-\frac{\tilde\epsilon}{\epsilon}
	\quad\mathrm{and}\quad
	\beta_p(\pi,\epsilon+\tilde\epsilon)\leq\beta_p(\pi,\epsilon)+\frac{\tilde\epsilon}{\epsilon}
	\]
\end{lemma}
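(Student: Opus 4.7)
The plan is a standard mixture argument: I will show that any feasible response distribution for the looser constraint $\epsilon+\tilde\epsilon$ can be mixed with the unconstrained best response (which has zero regret) to obtain a feasible distribution for the tighter constraint $\epsilon$, and the utility change can be bounded using the mixture weight.

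\textbf{Construction.} Fix $\pi$, $p$, $\epsilon,\tilde\epsilon>0$. Under assumption \ref{A1}, $\mathcal{R}$ is compact and $U,V$ are continuous, so by a standard argument $\Delta(\mathcal{R})$ is weakly compact, the feasible set $\{\mu:\max_{\tilde r}\ex[y\sim\pi]{U(\tilde r,p,y)}-\ex[\mu,y]{U}\leq\epsilon+\tilde\epsilon\}$ is closed, and the objective is continuous in $\mu$. Hence the minimum defining $\alpha_p(\pi,\epsilon+\tilde\epsilon)$ is attained by some $\mu^\dagger\in\Delta(\mathcal{R})$. Pick $r^*\in\arg\max_{\tilde r\in\mathcal{R}}\ex[y\sim\pi]{U(\tilde r,p,y)}$ (again by compactness-continuity), and denote the point mass at $r^*$ by $\delta_{r^*}$, which has zero regret and is thus feasible for every $\epsilon'\geq 0$. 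For $\lambda=\tilde\epsilon/(\epsilon+\tilde\epsilon)\in(0,1)$, define the mixture $\mu'=\lambda\,\delta_{r^*}+(1-\lambda)\mu^\dagger$.

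\textbf{Feasibility of the mixture for the tighter constraint.} Since the agent's expected utility under $\mu$ and the max are both linear/constant in $\mu$, the regret functional is linear; using the fact that $\delta_{r^*}$ has zero regret and $\mu^\dagger$ has regret at most $\epsilon+\tilde\epsilon$,
\[
\max_{\tilde r}\ex[y\sim\pi]{U(\tilde r,p,y)}-\ex[\mu',y\sim\pi]{U(r,p,y)}=(1-\lambda)\bigl(\max_{\tilde r}\ex[y\sim\pi]{U(\tilde r,p,y)}-\ex[\mu^\dagger,y\sim\pi]{U(r,p,y)}\bigr)\leq(1-\lambda)(\epsilon+\tilde\epsilon)=\epsilon,
\]
so $\mu'$ is feasible in the program defining $\alpha_p(\pi,\epsilon)$.

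\textbf{Comparing values.} By linearity of $V$-expectation in $\mu$, and because $\mu'$ is feasible for the $\epsilon$-program,
\[
\alpha_p(\pi,\epsilon)\leq\ex[\mu',y\sim\pi]{V(r,p,y)}=\lambda\,\ex[y\sim\pi]{V(r^*,p,y)}+(1-\lambda)\alpha_p(\pi,\epsilon+\tilde\epsilon).
\]
Rearranging and using $V\in[0,1]$ (so $\ex[y\sim\pi]{V(r^*,p,y)}\leq 1$ and $\alpha_p(\pi,\epsilon)\geq 0$),
\[
\alpha_p(\pi,\epsilon+\tilde\epsilon)\geq\alpha_p(\pi,\epsilon)+\frac{\lambda}{1-\lambda}\bigl(\alpha_p(\pi,\epsilon)-\ex[y\sim\pi]{V(r^*,p,y)}\bigr)\geq\alpha_p(\pi,\epsilon)-\frac{\lambda}{1-\lambda}=\alpha_p(\pi,\epsilon)-\frac{\tilde\epsilon}{\epsilon},
\]
since $\lambda/(1-\lambda)=\tilde\epsilon/\epsilon$. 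This proves the first inequality.

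\textbf{The best-case bound.} The argument for $\beta_p$ is symmetric: take a maximizer $\mu^\ddagger$ for $\beta_p(\pi,\epsilon+\tilde\epsilon)$, form the same mixture $\mu'=\lambda\delta_{r^*}+(1-\lambda)\mu^\ddagger$, which by the identical computation is feasible for the $\epsilon$-program, and obtain $\beta_p(\pi,\epsilon)\geq\lambda\ex[y\sim\pi]{V(r^*,p,y)}+(1-\lambda)\beta_p(\pi,\epsilon+\tilde\epsilon)$. Rearranging and using $V\geq 0$, $\beta_p(\pi,\epsilon)\leq 1$ gives $\beta_p(\pi,\epsilon+\tilde\epsilon)\leq\beta_p(\pi,\epsilon)+\tilde\epsilon/\epsilon$. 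There is no real obstacle here; the only subtlety is recognizing that the right mixture weight is $\lambda=\tilde\epsilon/(\epsilon+\tilde\epsilon)$, which is precisely the fraction of probability mass needed to ``absorb'' the excess $\tilde\epsilon$ of allowed regret, and that the $1/\epsilon$ factor in the conclusion arises from $\lambda/(1-\lambda)$ rather than from any Lipschitz estimate.
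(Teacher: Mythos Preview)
Your proof is correct and is essentially the same argument as the paper's: both exploit that mixing any feasible $\mu$ with the exact best response $\delta_{r^*}$ yields a distribution feasible for a proportionally tighter constraint, together with $V\in[0,1]$. The paper packages this as ``$\alpha_p$ is convex and decreasing in $\epsilon$, hence any supporting line at $\epsilon$ has slope at least $-1/\epsilon$,'' whereas you carry out the mixture with the explicit weight $\lambda=\tilde\epsilon/(\epsilon+\tilde\epsilon)$; these are two presentations of the same three-point convexity inequality.
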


%Roughly, this lemma will imply that our mechanisms do not need to know (a) exactly how much regret the agent is accumulating, and (b) in which periods the agent is accumulating regret.

Appendix \ref{Ap1} describes two well-known special cases of our model: Bayesian persuasion and the principal-agent problem. For each case, we provide a simple example, check that the example satisfies all relevant assumptions, and evaluate our results. In that sense, these examples serve as sanity checks for the rest of the paper, which involves assumptions and solutions that are sometimes rather abstract.

	\section{Repeated Game}\label{S3}

In the repeated game, the stage game is repeated $T$ times. In period $t$, the principal chooses policy $p_t$, the agent chooses response $r_t$, and nature chooses the state $y_t$. At the end of period $t$, the state $y_t$ is revealed to both the principal and the agent.

The agent's repeated game strategy (henceforth, \textit{learner} $L$) maps the state history $y_{1:t-1}$, the response history $r_{1:t-1}$, the policy history $p_{1:t-1}$, and the current policy $p_t$ to a distribution $\mu_t$ over responses. Formally, the response distribution in the $t$\textsuperscript{th} period is given by\footnote{The fact that the response distribution $\mu_t$ may depend on realized response history $r_{1:t-1}$ allows the learner to introduce correlation between responses across time, if desired.}
$$
L_t:\mathcal{Y}^{t-1}\times\mathcal{R}^{t-1}\times\mathcal{P}^t\to\Delta(\mathcal{R})
$$
%Let $\mu_t$ denote the agent's response distribution in period $t$, i.e.
%\[
%\mu_t=L_t(y_{1:t-1},r_{1:t-1},p_{1:t})
%\]

The principal's repeated game strategy (henceforth, \textit{mechanism} $\sigma$) maps the state history $y_{1:t-1}$, the response history $r_{1:t-1}$, and the policy history $p_{1:t-1}$ to a distribution $\nu_t$ over policies. Formally, the policy distribution in the $t$\textsuperscript{th} period is given by
$$
\sigma_t:\mathcal{Y}^{t-1}\times\mathcal{R}^{t-1}\times\mathcal{P}^{t-1}\to\Delta(\mathcal{P})
$$
%Let $\nu_t$ denote the principal's policy distribution in period $t$, i.e. 
%\[
%\nu_t=\sigma_t(y_{1:t-1},r_{1:t-1},p_{1:t-1})
%\] 

%In appendix \ref{Ap0}, we consider two special cases of this model. Appendix \ref{Ap0-1} deals with Bayesian persuasion. Appendix \ref{Ap0-2} deals with contract design.

Our goal is to design a mechanism $\sigma^*$ that the principal would not regret using, relative to a finite set of alternative mechanisms. Regret -- which we define momentarily -- measures the gap in performance between $\sigma^*$ and the alternative mechanism $\sigma$ that performed best in hindsight, given the realized sequence of states $y_{1:T}$. We consider a simple set of alternative mechanisms, corresponding to some finite set of fixed policies $\mathcal{P}_0\subseteq\mathcal{P}$ that the principal wishes to consider.\footnote{Many of our results and definitions can be adapted to any finite set of nonresponsive mechanisms.} Formally, by a fixed policy $p$, we mean a \textit{constant mechanism} $\sigma^p$ that selects the same policy $$\sigma^p_t(y_{1:t-1},r_{1:t-1},p_{1:t-1})=p$$ in all periods $t$ and for all histories.

To define the principal's regret, we need notation for the agent's behavior under the proposed mechanism $\sigma^*$, as well as under the counterfactual mechanisms $\sigma^p$. Fix the state sequence $y_{1:T}$. Let $\mu^*_t$ describe the agent's behavior under $\sigma^*$, i.e.
\[
\mu^*_t=L_t\left(y_{1:t-1},r^*_{1:t-1},p^*_{1:t}\right)
\]
given the realized history of responses $r^*_{1:t-1}$ and policies $p^*_{1:t}$ under $\sigma^*$. Let $\mu_t^p$ describes the agent's behavior under $\sigma^p$, i.e.
\[
\mu^p_t=L_t(y_{1:t-1},r^p_{1:t-1},(\underbrace{p,\ldots,p}_\text{t times}))
\]
given the realized history of responses $r^p_{1:t-1}$ under $\sigma^p$.

\begin{defn}[Principal's Regret]\label{D3}
	The \emph{principal's regret} relative to the best-in-hindsight fixed policy $p\in\mathcal{P}_0$ is
	\[
	\preg{L,y_{1:T}}=\sup_{p\in\mathcal{P}_0}\frac{1}{T}\sum_{t=1}^T\left(\ex[r\sim\mu^p_t]{V(r,p,y_t)}-\ex[r\sim\mu^*_t]{V(r,\sigma^*(y_{1:t-1},r^*_{1:t-1},p_{1:t-1}),y_t)}\right)
	\]
\end{defn}

The mechanism $\sigma^*$ satisfies no-regret if the principal's regret is $o(1)$, i.e. it vanishes as $T\to\infty$. Recall that the no-regret mechanism design problem is infeasible without further assumptions on the learner $L$. The following proposition formalizes this simple observation.

\begin{prop}[Impossibility Result for Unrestricted Learners]\label{P1}
	In our running example, for every mechanism $\sigma^*$, there exists a learner $L$ along with a state sequence $y_{1:\infty}$ such that the principal's regret does not vanish, i.e.
	\[
	\lim_{T\to\infty}\preg{L,y_{1:T}}>0
	\]
\end{prop}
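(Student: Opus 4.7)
The approach is to construct, for each mechanism $\sigma^*$, an adversarial learner that keeps the principal's regret bounded below by a constant. The central device is a one-parameter family of ``cooperative-until-deviation'' learners, paired with a pigeonhole argument on the mechanism's first-period policy.

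For each policy $p \in \mathcal{P}$, define the learner $L_p$ as follows: after observing a policy history $p_{1:t}$, play the ``approve regardless of message'' response if $p_s = p$ for every $s \le t$, and otherwise play the ``reject regardless of message'' response. Both are valid responses in the running example, since the regulator's strategy is a function from messages to $\{\text{approve}, \text{reject}\}$ and the two constant functions are always admissible. Under the constant mechanism $\sigma^p$, the policy history is identically $p$, so $L_p$ approves in every round and the principal earns payoff $V(\text{approve}, p, y_t) = 1$; the best-in-hindsight benchmark against $p$ therefore yields average payoff $1$. Under any mechanism $\sigma^*$ that first plays some $\tilde p \neq p$ at period $\tau$, the learner rejects in period $\tau$ and in every subsequent period (since the policy history permanently contains a non-$p$ entry), bounding the principal's average payoff by $(\tau - 1)/T$.

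Include two distinct policies $p_A, p_B$ in $\mathcal{P}_0$ and fix the state sequence arbitrarily (e.g., all $\mathrm{Low}$; the argument is insensitive to states, since $V$ in the running example depends only on the response). For any mechanism $\sigma^*$, let $\nu_1$ denote its distribution over policies in period $1$; since $\nu_1(\{p_A\}) + \nu_1(\{p_B\}) \leq 1$, at least one, say $\nu_1(\{p_A\})$, is $\leq 1/2$. Take $L = L_{p_A}$. With probability at least $1/2$ the mechanism plays $\sigma^*_1 \neq p_A$, after which $L_{p_A}$ rejects forever and the principal earns $0$ per round; on the complementary event the principal earns at most $1$ per round. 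Hence the principal's expected average payoff under $(\sigma^*, L_{p_A})$ is at most $1/2$, while the benchmark against $p_A$ is $1$, so the principal's (expected) regret is at least $1/2 - o(1)$ uniformly in $T$, which does not vanish.

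The main obstacle is reconciling a potentially randomized, history-dependent $\sigma^*$ with the deterministic combinatorial structure of $L_p$. Two observations resolve this: (i) $L_p$'s response depends only on the observed policy history, not on its own past responses or realized states, so the counterfactual executions under $\sigma^p$ and under $\sigma^*$ are well-defined without any fixed-point issues; and (ii) the pigeonhole argument applies to the period-$1$ marginal of $\sigma^*$, which is determined by the empty history and is thus an unambiguously defined distribution, so randomization in later periods is irrelevant to the bound.
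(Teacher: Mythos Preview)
Your argument is correct and is considerably more elementary than the paper's. Your grim-trigger learner $L_{p}$---approve forever while the principal plays $p$, reject forever after any deviation---together with the pigeonhole on the period-$1$ policy distribution, cleanly forces expected regret at least $1/2$ against any $\sigma^*$ (indeed, regret equal to $1$ with probability at least $1/2$). The observation that $V$ in the running example is constant under ``always approve'' or ``always reject'' responses, independent of state and policy, is exactly what makes the argument so short. One cosmetic point: the ``$-\,o(1)$'' in your final bound is unnecessary, and you are implicitly assuming $|\mathcal{P}_0|\ge 2$, which is harmless but should be stated.

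The paper takes a genuinely different route. It proves Proposition~\ref{P2} (the impossibility result for \emph{no-ER} learners) using the ``selective superinefficiency'' construction of Example~\ref{Ex3}: a learner whose behavior switches between the best-in-hindsight response and a mixture of ex-post optimal and ex-post pessimal responses, depending jointly on $(y_1,p_1)$, with the mixing probability tuned so that external regret is exactly zero in every branch. Proposition~\ref{P1} then follows as a corollary. Your learner, by contrast, blatantly violates no-ER (under $\sigma^{p_A}$ with all-Low states, ``always approve'' has constant regret $1$), so your argument does not extend to Proposition~\ref{P2}. What the paper's more intricate construction buys is precisely this: a single learner that is adversarial to the principal while still passing the no-ER test, which is the content of the stronger proposition. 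For Proposition~\ref{P1} alone, your approach is both sufficient and preferable.
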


	\section{Behavioral Assumptions}\label{S4}

In this section, we develop a restriction on the learner $L$ that captures ``rational'' behavior by the agent, without requiring assumptions on the state sequence $y_{1:T}$. In particular, we build on no-regret assumptions pioneered in the literature on learning in games.%, and reinvigorated by \textcite{NST15} and \textcite{BMSW18}.

In online learning, regret measures how much better or worse off the agent would have been had she followed the best-in-hindsight ``simple'' strategy instead of her learner. Different notions of regret correspond to different definitions of simplicity. All of the regret notions used in this paper will be special cases of \textit{contextual regret}, defined as follows. Given a sequence $z_{1:T}$ of variables in some arbitrary set $\mathcal{Z}$, contextual regret considers a strategy ``simple'' if, for any two periods $t$ and $\tau$, sharing the same context $z_t=z_\tau$ implies taking the same response $r_t\neq r_\tau$. %Intuitively, the strategy must be measurable with respect to the context.

\begin{defn}\label{D4}
	Given a sequence $z_{1:T}$ of covariates, the agent's \emph{contextual regret} relative to a best-in-hindsight modification rule $h:\mathcal{Z}\to\mathcal{R}$ is
	\[
	\creg{p_{1:T},y_{1:T}}=\max_{h}\frac{1}{T}\sum_{t=1}^T\left(U(h(z_t),p_t,y_t)-U(r_t,p_t,y_t)\right)
	\]
\end{defn}

Note that, unlike our definition of the principal's regret, the agent's contextual regret does not take into account how changes in her past behavior would have also affected the principal's behavior. This omission is justified when the mechanism is nonresponsive.

\begin{defn}[Responsiveness]
	A mechanism $\sigma$ is \emph{nonresponsive} if $$\sigma_t(y_{1:t-1},r_{1:t-1},p_{1:t-1})=\sigma_t(y_{1:t-1},\tilde r_{1:t-1},p_{1:t-1})$$
	for any period $t$, state history $y_{1:t-1}$, policy history $p_{1:t-1}$, and response histories $r_{1:t-1},\tilde{r}_{1:t-1}$.
\end{defn}

Our mechanisms will be nonresponsive. This is a design choice, not an assumption. In restricting attention to nonresponsive mechanisms, we simplify the agent's problem and make our behavioral assumptions more credible. If our mechanisms were responsive, non-myopic agents would not necessarily satisfy no-regret as defined above. For example, an agent might decide to forgo an otherwise-optimal response if she believes said response would trigger an undesirable policy by the principal going forward.\footnote{For instance, in models of repeated sales, a buyer may refuse to purchase a good at a reasonable price if she believes that holding out will cause the seller to reduce prices in the future \parencite{DPS19, ILPT17}.} This behavior would be perfectly reasonable but could cause the agent to accumulate regret. Finally, as it turns out, even nonresponsive mechanisms can guarantee vanishing principal's regret in two of the scenarios we study (sections \ref{S5} and \ref{S7}). In these scenarios, there is limited room for responsive mechanisms to improve our guarantees.

In the remainder of this section, we define three special cases of contextual regret: \textit{external regret} (ER), \textit{internal regret} (IR), and \textit{counterfactual internal regret} (CIR).

	\subsection{External Regret}\label{S4-1}

In our model, external regret is contextual regret where the policy $p_t$ is the context in period $t$. That is, no-ER requires the agent to perform as well as the best-in-hindsight mapping from policies $p_t$ to responses $r_t$. Now, why should external regret include the policy as context? Because our stage game is an extensive form. A strategy in the stage game is not a response; it is a function from the observed policy to a response. Our definition of external regret compares the agent's performance to the best-in-hindsight strategy in the stage game.\footnote{Suppose that, instead, we compared the agent's performance to the best-in-hindsight response $r\in\mathcal{R}$. Defining external regret in this way would confound variation in policies with variation in the state, and could lead to odd behavior. For example, consider the``mean-based'' learner in \textcite{BMSW18}, which never deviates far from the response that maximizes the agent's empirical utility. In that paper, the learner engages in odd behavior, like spending more than the agent's valuation.
	
	Our definition is more similar to that of \textcite{HJNZ19}, where agents following a dashboard provided by the mechanism will best respond to an allocation rule given the empirical value distribution, rather than best respond to the empirical bid distribution. This way, the agent adapts sensibly to changes in the principal's policy.}

An immediate difficulty with defining ER is that the set $\mathcal{P}$ may be continuous.\footnote{If $\mathcal{P}$ is continuous, the policy $p_t$ may be unique in every period $t=1,\ldots,\infty$. In that case, requiring no-ER would be equivalent to requiring ex post optimality. That is unreasonably strong.} For instance, this is true in our running example. To ensure that the agent's learning problem is feasible in that case, we allow the agent to group together nearby policies according to the cover $\mathcal{C}_{\mathcal{P}}$ (defined in section \ref{S2}), and consider regret with respect to this coarser context. Of course, when the policy space $\mathcal{P}$ is finite, there is no need for this, and we can set $\mathcal{C}_{\mathcal{P}}=\mathcal{P}$.

\begin{defn}[External Regret]
	The agent's \emph{external regret (ER)} relative to the best-in-hindsight modification rule $h:\mathcal{C}_\mathcal{P}\to\mathcal{R}$ is
	\[
	\er{p_{1:T},y_{1:T}}=\max_{h}\frac{1}{T}\sum_{t=1}^T\left(U(h(p_t),p_t,y_t)-U(r_t,p_t,y_t)\right)
	\]
	Note the slight abuse of notation. By $h(p_t)$, we mean $h(P_t)$ where $P_t$ is the unique set in the partition $\mathcal{C}_{\mathcal{P}}$ that contains $p_t$.
\end{defn}

%\note{push some of this to the appendix; note that we gave some high-level intuition in the introduction with our hare-tortoise analogy}

Although common in the literature (e.g. \cite{NST15}, \cite{BMSW18}), no-ER assumptions are insufficient for our problem. They do not circumvent the impossibility result (proposition \ref{P1}) that motivated us to restrict the agent's behavior in the first place. In particular, this is because they fail to rule out certain pathological behaviors. Because these pathological behaviors are clearly not in the agent's best interest, we also conclude that no-ER fails to rule out ``irrational'' behavior and is therefore not a good definition of ``rationality''. The following proposition (and its proof) clarifies the issue.

\begin{prop}[Impossibility Result for No-ER Learners]\label{P2}
	In our running example, for every mechanism $\sigma^*$, there exists a learner $L$ that guarantees no-ER on all state/policy sequences, i.e.
	\[
	\lim_{T\to\infty}\sup_{\tilde{p}_{1:T},\tilde{y}_{1:T}}\ex[L]{\er{\tilde{p}_{1:T},\tilde{y}_{1:T}}}=0
	\]
	along with a state sequence $y_{1:\infty}$ such that the principal's regret does not vanish, i.e.
	\[
	\lim_{T\to\infty}\preg{L,y_{1:T}}>0
	\]
\end{prop}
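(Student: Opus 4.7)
I plan to prove the impossibility by constructing, for every mechanism $\sigma^*$, a no-ER learner together with a state sequence that forces the principal's regret to remain bounded away from zero. The construction exploits a fundamental weakness of external regret: when multiple responses tie for the agent's best, the learner may freely choose among them, and that choice can cost the principal an arbitrarily large amount.

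I would specialize to $\mathcal{P}_0 = \{p_{\text{null}}, p_{\text{info}}\}$ consisting of the uninformative trial and the fully informative trial, together with the alternating state sequence $y_t = H$ for $t$ odd and $y_t = L$ for $t$ even. Under this sequence, facing $p_{\text{info}}$ the agent has a strictly dominant response ``follow the message'' (call it $r_3$) that gives the principal $V = 1/2$ per period; facing $p_{\text{null}}$, however, ``always approve'' ($r_1$) and ``always reject'' ($r_2$) are both tied for the agent's best at average utility $1/2$, but give the principal $V = 1$ and $V = 0$ respectively. This asymmetry is the engine of the counterexample.

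Next, I would define two candidate no-ER learners $L_1, L_2$, each implementing a contextual exponential-weights algorithm over responses within each policy group (which guarantees vanishing external regret on every state/policy sequence) but with different tiebreaking rules at indifference. Learner $L_1$ breaks $p_{\text{null}}$-ties toward $r_1$ as long as the observed policy history is pure $p_{\text{null}}$, and switches permanently to $r_2$ once $p_{\text{info}}$ has ever appeared. Learner $L_2$ always breaks $p_{\text{null}}$-ties toward $r_2$. Both play $r_3$ under $p_{\text{info}}$. Because the tiebreaking only changes behavior at genuine indifference (up to the vanishing residual of the underlying algorithm), both $L_1, L_2$ satisfy the no-ER guarantee uniformly over sequences. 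I would then case-split on the empirical frequency $f$ with which $\sigma^*$ plays $p_{\text{null}}$: if $f \geq 1/2$, then under $L_2$ the principal earns at most $(1-f)/2 \leq 1/4$ per period while $\sigma^{p_{\text{info}}}$ earns $1/2$, giving regret $\geq 1/4$; if $f < 1/2$, then under $L_1$ the counterfactual $\sigma^{p_{\text{null}}}$ sees a pure-$p_{\text{null}}$ history and never triggers the switch, earning the full $V = 1$ every period, while $\sigma^*$ must play $p_{\text{info}}$ at some point and thereafter suffers $V = 0$ on every subsequent $p_{\text{null}}$ period — a short calculation bounds its average by $1/2 + f/2 < 3/4$, yielding regret $\geq 1/4$ against $p_{\text{null}}$. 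Picking whichever of $L_1, L_2$ gives the larger regret yields a single adversarial learner against $\sigma^*$.

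The main obstacle is ensuring that the tiebreaking really preserves no-ER on \emph{every} state/policy sequence — not only the alternating sequence I have engineered. The worry is that on an unbalanced state sequence the hardcoded tiebreaking could lock the learner into a strictly dominated response. I would resolve this by making tiebreaking secondary to the base no-ER algorithm: whenever any response strictly beats the tiebreaking choice by more than the vanishing regret bound of exponential weights, the learner migrates toward the true empirical best; tiebreaking only decides among responses that are within $o(1)$ of the empirical optimum, which is exactly the regime in which the choice costs the agent nothing in regret terms. A secondary technicality is that the case analysis above treats $f$ as a long-run frequency, so the argument should be stated at a sequence of increasing horizons $T$ with $o(1)$ slack absorbed into the regret bound.
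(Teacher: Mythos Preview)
Your approach differs substantially from the paper's. The paper constructs what it calls a \emph{selectively superefficient} learner: one that is capable of predicting the state perfectly and deploys or withholds this capability depending on the pair $(y_1,p_1)$. On the designated state sequence (in the running example, near-$1/2$ guilt rate), the learner either plays the best-in-hindsight fixed response (favorable to the principal, who can then persuade at rate close to $1$) or the ex-post optimal response each period (unfavorable, since a perfectly informed regulator approves only at rate $q\approx 1/2$). A fallback to any standard no-ER algorithm whenever the realized history departs from the designated sequence secures the uniform no-ER guarantee. The essential feature is that the switch hinges on the \emph{first-period} policy $p_1$, which is committed before the principal observes anything; the adversary then chooses the set $P\subsetneq\mathcal{P}$ and $y_1$ so that, whatever distribution $\sigma^*$ places on $p_1$, some fixed $p\in\mathcal{P}_0$ would have landed in the favorable case.

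Your tiebreaking idea is a legitimate alternative source of pathology, but the two-learner case split on the frequency $f$ has a genuine gap when $\sigma^*$ is responsive, because $f$ is then not a property of $\sigma^*$ alone. Consider the mechanism that plays $p_{\mathrm{null}}$ at $t=1$, observes the response, and thereafter plays $p_{\mathrm{null}}$ forever if it saw $r_1$ but switches to $p_{\mathrm{info}}$ forever if it saw $r_2$. Against $L_1$ this yields $f=1$ and $V=1$ every period, matching the best fixed policy $p_{\mathrm{null}}$; against $L_2$ it yields $f\to 0$ and average $V\to 1/2$, again matching the best fixed policy (since under $L_2$ the $p_{\mathrm{null}}$ counterfactual gives $V=0$). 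Neither learner forces nonvanishing regret on this single $\sigma^*$, so ``pick whichever of $L_1,L_2$ gives larger regret'' returns zero. The paper's one-shot trigger on $p_1$ is immune to this maneuver, since nothing $\sigma^*$ does from period $2$ onward can revise the case that was locked in at period $1$; your trigger (``has $p_{\mathrm{info}}$ ever appeared?'') can be indefinitely postponed by a mechanism that first reads the agent's tiebreak before deciding whether to ever deviate.
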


	\subsection{Internal and Counterfactual Internal Regret}\label{S4-2}

Before defining CIR, we provide a brief intuition: what went wrong with external regret? Recall the tortoise and hare analogy in the introduction. For a behavioral assumption to rule out pathological behaviors, it may have to adapt to the information of the agent (or the speed of the animal).

What do we mean by information? Implicit in most stochastic models is the idea that the state is fundamentally unpredictable. But there is no ex ante sense in which the deterministic sequence $y_{1:T}$ is predictable or not. In particular, the agent may behave as if she possesses ``private information'' about the sequence of states that goes beyond the ``public information'' inherent in the description of the model. In practice, the agent may have access to data that the principal lacks, notice a pattern that did not occur to the principal, or succeed through dumb luck. Formally, this reflects an adversary who simultaneously chooses the state sequence $y_{1:T}$ and the learner $L$ to cause the mechanism $\sigma^*$ to underperform. In particular, even though the agent may not observe $y_t$ when choosing a response $r_t$, this cannot prevent the adversary from ``correlating'' $r_t$ and $y_t$.\footnote{To be clear, this ``correlation'' is non-causal. For example, the adversary might choose a state sequence such that $y_t=1$ on even periods and $y_t=0$ on odd periods, and a learner $L$ such that $r_t=1$ on even periods and $r_t=0$ on odd periods. Empirically-speaking, there would be a correlation between the states and the responses. However, if we subsequently changed the value of state $y_t$ in some period $t$, this would not affect the response $r_t$, because the state is not observed and cannot affect the output of the learner $L$. That is, there is no causal relationship between $r_t$ and $y_t$.}

No-CIR requires the agent to consistently and fully exploit her private information. In the spirit of revealed preference, private information is identified with her behavior across counterfactual mechanisms. Intuitively, if the agent is able to distinguish between periods $t,\tau$ and finds it useful to do so, then her behavior should also differ between those two periods. If her behavior under one mechanism reveals private information, this information should also be accessible to her under a different mechanism. This logic allows us to define a purely ex post notion of rationality that does not refer to the agent's beliefs or to a distribution over state sequences.

No-CIR refines no-IR, a weaker condition that was developed in the literature on calibration (e.g. \cite{FV97}). Internal regret is contextual regret where the context is the agent's own behavior $r_{1:T}$. To ensure that the agent's learning problem is feasible when the response space $\mathcal{R}$ is infinite, we allow the agent to group together nearby responses according to the cover $\mathcal{C}_{\mathcal{R}}$, and consider regret with respect to this coarser context. Of course, when the response space $\mathcal{R}$ is finite, as in our running example, there is no need for this, and we can set $\mathcal{C}_{\mathcal{R}}=\mathcal{R}$.

\begin{defn}[Internal Regret]
	The agent's \emph{internal regret (IR)} relative to the best-in-hindsight modification rule $h:S_{\mathcal{P}}\times S_{\mathcal{R}}\to\mathcal{R}$ is
	\[
	\ir{p_{1:T},y_{1:T}}=\max_{h}\frac{1}{T}\sum_{t=1}^T\left(U(h(p_t,r_t),p_t,y_t)-U(r_t,p_t,y_t)\right)
	\]
	Like earlier, note the slight abuse of notation. By $h(p_t,r_t)$, we mean $h(P_t,R_t)$ where $(P_t,R_t)$ is the unique set in the collection $\mathcal{C}_{\mathcal{P}}\times\mathcal{C}_{\mathcal{R}}$ that contains $(p_t,r_t)$.
\end{defn}

Counterfactual internal regret is contextual regret where the context is the concatenation of: the policy $p^*_t$ under the proposed mechanism $\sigma^*$; the agent's behavior $r^*_{1:T}$ under $\sigma^*$; and her counterfactual behavior $r^p_{1:T}$ under the fixed policies $p\in\mathcal{P}_0$. The following definitions formalize this.

\begin{defn}[Information]
	Let the \emph{information partition} be
	$$
	\mathcal{I}=\underbrace{S_{\mathcal{P}}}_\text{policy $p^*_t$}
	\times\underbrace{S_{\mathcal{R}}}_\text{response $r_t^*$}
	\times\underbrace{\left(S_{\mathcal{R}}\right)^{|\mathcal{P}_0|}}_\text{responses $r^p_t$ for $p\in\mathcal{P}_0$}
	$$
	and let the \emph{information} $I_t$ in period $t$ be the unique set in $\mathcal{I}$ that satisfies
	$$
	I_t\ni\left(p^*_t,r_t^*,(r^p_t)_{p\in\mathcal{P}_0}\right)
	$$
\end{defn}

Note that, by definition, the same information $I_t$ is available to the agent regardless of whether the principal follows our mechanism $\sigma^*$ or deviates to some fixed policy $p\in\mathcal{P}_0$. Intuitively, the principal's choice of mechanism should not affect what information the agent has available.

\begin{defn}[Counterfactual Internal Regret]
	The agent's \emph{counterfactual internal regret (CIR)} relative to the best-in-hindsight modification rule $h:S_\mathcal{P}\times S_{\mathcal{R}}^{|\mathcal{P}_0|+1}\to\mathcal{R}$ is
	\[
	\cir{p_{1:T},y_{1:T}}=\max_{h}\frac{1}{T}\sum_{t=1}^T\left(U(h(I_t),p_t,y_t)-U(r_t,p_t,y_t)\right)
	\]
\end{defn}

The discussion in the proof of proposition \ref{P2} clarifies how no-CIR rules out the kinds of pathological or irrational behavior that no-ER fails to rule out. In the next section, we will see the crucial role that no-CIR plays in our proving our bounds on the principal's regret. The essential property is that, conditional on information $I_t$, the agent chooses a roughly constant response that is approximately best-in-hindsight for whichever mechanism the principal is considering.

%Appendix \ref{???} discusses the sample complexity of the no-CIR learning problem.

	\section{Mechanism for an Informed Principal}\label{S5}

Our first result should be viewed as pedagogical. It bounds the principal's regret under a mechanism that requires oracle access to the agent's learner. This requirement is unrealistic and will be removed in sections \ref{S5} and \ref{S6}. Likewise, the bound itself will feature an exponential dependence on the size of the policy space. This dependence will also be removed in later sections.

%\footnote{One case in which an information oracle may not be unrealistic is if the principal is running the learner $L$ on the agent's behalf. In that case, the agent would have to be willing to follow good-faith recommendations from the principal, so long as those recommendations guarantee no-regret for the agent (c.f. \cite{HJNZ19}).}%In that setting, our result can be interpreted as saying that any no-CIR learner $L$ can be paired with our mechanism to deliver good no-regret guarantees for both the principal and agent.

\begin{defn}[Information Oracle]
	The \emph{information oracle} $\Omega_t:\mathcal{P}\to\mathcal{I}$ specifies the information $I_t$ that the learner $L$ would generate in period $t$ given any policy $p_t\in\mathcal{P}$ and the realized history.
\end{defn}

This case is a convenient starting point because it avoids the bulk of the information asymmetries between the principal and the agent that our later results need to address. That follows from the fact that any private information generated by the learner can be anticipated by the principal with access to the information oracle. This case is also a convenient point of departure from the common prior assumption because it permits a wider range of agent behavior without relaxing the principal's knowledge of said behavior. To be clear, under a common prior, the fact that the principal knows the agent's prior means that he also has precise knowledge of the agent's learner. In addition, since the agent is Bayesian, the agent does not find it beneficial to randomize and her learner will typically be deterministic. Essentially, the common prior provides an information oracle for free.

\begin{mech}\label{M1}
	Let the distribution $\pi_t$ be a forecast of the state $y_t$ generated by a calibrated forecasting algorithm that uses the agent's information as context.
	\begin{itemize}
		\item %All of our mechanisms refer to a calibrated forecasting algorithm. By forecasting, we mean that it outputs a probabilistic forecast $\pi_t\in\Delta(\mathcal{Y})$ of the state in period $t$. By calibrated, we mean that it has vanishing internal regret. 
		Our forecasting algorithm applies a generic no-internal-regret algorithm due to \textcite{BM07} in an auxilliary learning problem where the action space consists of discretized forecasts $\pi\in\mathcal{C}_{\Delta(\mathcal{Y})}$ and the loss function is the negated quadratic scoring rule $S$. In each period, the algorithm makes a prediction $\pi_t$ and incurs loss $-S(\pi_t,y_t)$. Further details as well as rates of convergence are in appendix \ref{Ap3}.
		\item The context is the vector of outputs $\Omega(p)$ of the information oracle under discretized policies $p\in\mathcal{C}_{\mathcal{P}}$. The forecasting algorithm is run separately for each context.
	\end{itemize}
	Fix a parameter $\bar{\epsilon}>0$. In period $t$, the \emph{informed-principal mechanism} $\sigma^*$ chooses the discretization of the $\bar\epsilon$-robust policy $p^*(\pi_t,\bar{\epsilon})$ that treats the forecast $\pi_t$ as a common prior.
\end{mech}

%some representative element of the set $P$ in the partition $\mathcal{C}_{\mathcal{P}}$ that contains the

Before stating the theorem in full, we present the reasoning behind the result and clarify the components of the regret bound, as well as the assumptions required. First, we require some additional notation. Let ``$t\in I$'' indicate that information $I$ is present in period $t$, i.e. $I_t=I$. Let $n_I=\sum_{t=1}^T\textbf{1}(t\in I)$ indicate the number of periods with information $I$. Let $\hat{\pi}_I$ be the empirical distribution conditioned on the agent having information $I$, i.e.
\[
\hat{\pi}_I(y)=\frac{1}{n_I}\sum_{t\in I}\textbf{1}(y_t=y)
\]

We begin with a straightforward but important observation: across all periods $t\in I$, the agent's response $r^*_t$ is roughly constant, as are her counterfactual responses $r^p_t$ under fixed policies $p\in\mathcal{P}_0$. By regularity \eqref{A1}, slight variations in responses have correspondingly slight impacts on the agent's and principal's utility. Suppose that these responses are exactly constant, i.e. $r_t=r_I$. Note that $p_t=p_I$ is exactly constant as well, across these time periods, for all constant mechanisms $\sigma^p$ as well as the proposed mechanism $\sigma^*$, which uses discretized policies. With everything constant, the principal's average utility across context $I$ takes on a familiar form:
\[
\frac{1}{n_I}\sum_{t\in I}V(r_I,p_I,y_t)=\ex[y\sim\hat{\pi}_I]{V(r_I,p_I,y)}
\]
Similarly, the agent's average utility is
\[
\frac{1}{n_I}\sum_{t\in I}U(r_I,p_I,y_t)=\ex[y\sim\hat{\pi}_I]{U(r_I,p_I,y)}
\]
Essentially, within each context $I$, we have recreated the stage game with common prior $\hat{\pi}_I$. The agent accumulates regret
\[
\epsilon_I=\max_{\tilde{r}}\ex[y\sim\hat{\pi}_I]{U(\tilde{r},p,y)}-\ex[y\sim\hat{\pi}_I]{U(r_I,p,y)}
\]

Under mechanism \ref{M1}, the principal chooses (roughly) the $\bar{\epsilon}$-robust policy for the forecast $\pi_t$. Suppose for the moment that the forecasts are also roughly constant for all periods $t\in I$, i.e. $\pi_t=\pi_I$. Since the forecast is calibrated and uses information $I_t$ as context, $\pi_I$ cannot be too far in the $l_1$ distance from $\hat{\pi}_I$ (this is essentially the definition of calibration, and follows from results in appendix \ref{Ap3}). It follows from regularity that the $\bar{\epsilon}$-robust policy for $\pi_I$ is nearly $\bar{\epsilon}$-robust for $\hat{\pi}_I$.

At this point, the principal has (roughly) applied the $\bar\epsilon$-robust policy for the empirical distribution $\hat{\pi}_I$, to an agent that obtains regret $\epsilon_I$. In that sense, the principal has misjudged the agent's capacity to make mistakes. However, recall lemma \ref{L1}: this affects the principal's best-case and worst-case utilities by at most $\epsilon_I/\bar{\epsilon}$. It follows that, roughly-speaking, the principal's utility is not much worse than the worst-case optimal utility. At the same time, it cannot be much better than the best-case optimal utility. More precisely,
\begin{equation}\label{E1}
\max_{\tilde{p}}\beta_{\tilde{p}}(\hat{\pi}_I,\bar\epsilon)+\frac{\epsilon_I}{\bar{\epsilon}}
\geq\ex[y\sim\hat{\pi}_I]{V(r_I,p_I,y)}
\geq\max_{\tilde{p}}\alpha_{\tilde{p}}(\hat{\pi}_I,\bar\epsilon)-\frac{\epsilon_I}{\bar{\epsilon}}
\end{equation}
By assumption \ref{A2}, the difference between the upper bound and the lower bound is
\begin{equation}\label{E2}
O(\bar\epsilon)+O\left(\frac{\epsilon_I}{\bar{\epsilon}}\right)
\end{equation}
This pins down the principal's utility under mechanism \ref{M1}. Moreover, the upper bound in \eqref{E1} also applies to any constant mechanism $\sigma^p$ for $p\in\mathcal{P}_0$. Therefore, \eqref{E2} also bounds the regret accumulated by the principal in context $I$.

This brings us to our key assumption: the agent's CIR is at most some constant $\epsilon$.

\begin{assume}[Bounded CIR]\label{A3}
	Let $y_{1:T}$ be the realized state sequence and let $p^*_{1:T}$ be the policy sequence generated by the proposed mechanism $\sigma^*$. There exists a constant $\epsilon\geq 0$ such that
	\[
	\epsilon\geq\cir{y_{1:T},p^*_{1:T}}\quad\mathrm{and}\quad \epsilon\geq\cir{y_{1:T},\underbrace{p,\ldots,p}_\text{$t$ times}},\:\forall p\in\mathcal{P}_0
	\]
\end{assume}

\begin{remark}
	It is worth emphasizing that this bound applies only to the realized state sequence $y_{1:T}$. That is, the agent does not need to perform well over all state sequences, and her objective need not be worst-case regret minimization. If the agent is Bayesian, for example, she will obtain low CIR as long as her beliefs are well-calibrated.
\end{remark}

Since CIR is contextual regret with information $I_t$ as context, bounded CIR ensures that
\[
\epsilon\geq\frac{1}{T}\sum_{I\in\mathcal{I}}n_I\epsilon_I
\]
Combine this with our bound \eqref{E2} on the agent's regret $\epsilon_I$ in the context of information $I$, and it follows that the principal's regret is bounded above by
\[
O(\bar\epsilon)+O\left(\frac{\epsilon}{\bar{\epsilon}}\right)
\]

To transform this intuition into a result, we need to address an assumption made along the way: that the forecast $\pi_t$ is roughly constant across all periods $t\in I$. This is not necessarily true. The adversary can choose a sequence of states $y_{1:T}$ that makes the principal appear more informed than the agent. Indeed, variation in forecasts can be interpreted as private information of the principal, even if it is spurious. On the other hand, any variation in $\pi_t$ that affects the policy $p_t$ will also be included in the agent's information $I_t$. What remains is variation in $\pi_t$ that does not affect the policy -- useless information from the principal's perspective, but not necessarily useless to the agent. If the principal expects the agent to exploit this information and the agent does not, this can lead to a suboptimal policy choice.

The following assumption restricts attention to stage games where this problem does not arise; that is, the agent's failure to exploit information that is useless to the principal does not affect the principal's utility. In appendix \ref{Ap4}, we avoid this restriction by instead assuming that the principal -- using our publicly-announced mechanism -- is not more informed than the agent.

%There are at least two ways to bypass this issue. First, in appendix \ref{???}, we assume that the principal -- using our mechanism --  This seems reasonable, since the agent knows the principal's mechanism. However, it complicates our behavioral assumptions. Second, we can restrict attention to stage games where 

%We also will make the following assumption, which can be relaxed if we refine CIR to FCIR. The point is that if the principal can distinguish between two periods and the agent cannot; we are still assuming that the agent shares our belief in a period; so it has to be that we would have done the same thing even if we didn't distinguish between the two periods, or at least that we wouldn't have been much better off. ``If two forecasts lead to roughly the same policy, then that policy should be near optimal in any convex combination of the forecasts''.

\begin{assume}\label{A4}
	Let $\epsilon>0$. Let $\pi$ and $\tilde{\pi}$ be distributions in the stage game. If the $\epsilon$-robust policies under $\pi$ and under $\tilde{\pi}$ are close to one another, then they are also close to the $\epsilon$-robust policy under any convex combination of these distributions. Formally, for any $\lambda\in[0,1]$,
	\[
	d_{\mathcal{P}}\left(p^*(\pi,\epsilon),p^*\left(\lambda\pi+(1-\lambda)\tilde\pi,\epsilon\right)\right)=O\left(d_{\mathcal{P}}\left(p^*(\pi,\epsilon),p^*(\tilde\pi,\epsilon)\right)\right)
	\]
\end{assume}

The following theorem formalizes the preceding discussion and bounds the principal's regret under mechanism \ref{M1}.

\begin{theorem}\label{T1}
	Assume regularity (assumption \ref{A1}), restrictions on the stage game (assumptions \ref{A2}, \ref{A4}), and $\epsilon$-bounded CIR (assumption \ref{A3}). Let $\sigma^*$ be the mechanism \ref{M1}. Given access to the information oracle, for any constant $\bar{\epsilon}>0$, the principal's expected regret $\ex[\sigma^*]{\preg{L,y_{1:T}}}$ is at most
	\begin{align*}
	\underbrace{O(\bar{\epsilon})}_\text{cost of $\bar{\epsilon}$-robustness}
	+\frac{1}{\bar{\epsilon}}\cdot\left(
	\underbrace{O(\epsilon)}_\text{agent's regret}
	+\underbrace{\tilde{O}\left(T^{-1/4}\sqrt{|\mathcal{Y}||\mathcal{C}_{\Delta(\mathcal{Y})}||\mathcal{C}_R|^{(|\mathcal{P}_0|+|\mathcal{C}_{\mathcal{P}}|)/2}}\right)}_\text{forecast miscalibration}
	+\underbrace{O\left(\delta_{\Delta(\mathcal{Y})}^{1/2}\right)+O(\delta_{\mathcal{R}})+O(\delta_{\mathcal{P}})}_\text{discretization error}
	\right)
	\end{align*}
\end{theorem}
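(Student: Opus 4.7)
The plan is to reduce the repeated game to a per-context stage game by partitioning time according to the information partition $\mathcal{I}$, and then to invoke Lemma~\ref{L1} together with Assumption~\ref{A2}. Concretely, for any mechanism $\sigma\in\{\sigma^*\}\cup\{\sigma^p:p\in\mathcal{P}_0\}$ I would rewrite $\tfrac{1}{T}\sum_t V(r_t,p_t,y_t)=\sum_{I\in\mathcal{I}}\tfrac{n_I}{T}\cdot\tfrac{1}{n_I}\sum_{t\in I}V(r_t,p_t,y_t)$. Because $\mathcal{I}$ is a product of cells of $\mathcal{C}_\mathcal{P}$ and $\mathcal{C}_\mathcal{R}$, the realized policy $p^*_t$ and both the realized and counterfactual responses $r^*_t,r^p_t$ lie in balls of diameter $\delta_\mathcal{P}$ or $\delta_\mathcal{R}$ for all $t\in I$. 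Equi-Lipschitz regularity (Assumption~\ref{A1}) then lets me replace the inner average by $\ex[y\sim\hat\pi_I]{V(r_I,p_I,y)}$ up to a uniform $O(\delta_\mathcal{R}+\delta_\mathcal{P})$ error, and analogously for $U$.

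Inside each context, the agent's response is by construction $\epsilon_I$-approximately optimal against prior $\hat\pi_I$ and policy $p_I$ (with $\epsilon_I$ the within-context empirical regret), so it is feasible in the program defining $\alpha_{p_I}(\hat\pi_I,\epsilon_I)$ under $\sigma^*$ and $\beta_p(\hat\pi_I,\epsilon_I^p)$ under $\sigma^p$. Lemma~\ref{L1} then yields
\[
\ex[y\sim\hat\pi_I]{V(r_I,p_I,y)}\;\geq\;\alpha_{p_I}(\hat\pi_I,\bar\epsilon)-\frac{\epsilon_I}{\bar\epsilon}, \qquad \ex[y\sim\hat\pi_I]{V(r_I^p,p,y)}\;\leq\;\beta_p(\hat\pi_I,\bar\epsilon)+\frac{\epsilon_I^p}{\bar\epsilon},
\]
the first inequality being trivial when $\epsilon_I\leq\bar\epsilon$ since $\alpha_p$ is non-increasing in its regret argument. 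Assumption~\ref{A3} controls both $\sum_I(n_I/T)\epsilon_I$ and $\sum_I(n_I/T)\epsilon_I^p$ by $\epsilon$, up to an additional $O(\delta_\mathcal{R}+\delta_\mathcal{P})$ discretization slack, because either sum coincides (modulo regularity error) with the $\cirs$ of the corresponding mechanism partitioned along $I$.

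The hard step, most specific to Mechanism~\ref{M1}, is to upgrade $\alpha_{p_I}(\hat\pi_I,\bar\epsilon)$ to $\alpha_{p^*(\hat\pi_I,\bar\epsilon)}(\hat\pi_I,\bar\epsilon)=\max_p\alpha_p(\hat\pi_I,\bar\epsilon)$, so that the $\Delta(\hat\pi_I,\bar\epsilon)=O(\bar\epsilon)$ bound of Assumption~\ref{A2} closes the loop against the $\beta_p$ upper bound above. Because $p_I$ is the common discretization of $p^*(\pi_t,\bar\epsilon)$ for every $t\in I$, all pre-discretization robust policies in a context lie within $\delta_\mathcal{P}$ of one another; Assumption~\ref{A4} then promotes this to $p^*(\bar\pi_I,\bar\epsilon)$ being $O(\delta_\mathcal{P})$-close to $p_I$, where $\bar\pi_I=(1/n_I)\sum_{t\in I}\pi_t$ is the within-context average forecast. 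The contextual no-internal-regret forecaster of Appendix~\ref{Ap3} is calibrated in the sense that $\sum_I(n_I/T)\|\bar\pi_I-\hat\pi_I\|_1$ is bounded by the stated $T^{-1/4}$ miscalibration rate; the $|\mathcal{C}_\mathcal{R}|^{(|\mathcal{P}_0|+|\mathcal{C}_\mathcal{P}|)/2}$ factor traces back to the size of its context, the oracle output vector across $p\in\mathcal{C}_\mathcal{P}$. A Lipschitz bound for $\alpha_p(\pi,\bar\epsilon)$ in both $p$ and $\pi$ -- obtained by perturbing separately the $V$-objective and the feasibility constraint in the defining program and invoking Lemma~\ref{L1} once more to absorb the constraint perturbation at cost $1/\bar\epsilon$ -- then converts these closeness statements for policies and priors into closeness of $\alpha$-values.

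Assembling the three pieces, the per-context regret against any $p\in\mathcal{P}_0$ is at most $\Delta(\hat\pi_I,\bar\epsilon)+(\epsilon_I+\epsilon_I^p)/\bar\epsilon$ plus discretization and calibration errors; averaging over $I$, taking a supremum over $p\in\mathcal{P}_0$, and using the CIR bound yields exactly the four-term decomposition stated in the theorem. The outer expectation over $\sigma^*$ enters only through the forecaster's randomness and is absorbed into the miscalibration term. I expect the main technical obstacle to be the third step above: keeping track of how the $1/\bar\epsilon$ factor propagates through the chain (calibration) $\Rightarrow$ ($\bar\pi_I$ close to $\hat\pi_I$ in $\ell_1$) $\Rightarrow$ (via Assumption~\ref{A4}, $p_I$ close to $p^*(\hat\pi_I,\bar\epsilon)$) $\Rightarrow$ (closeness of $\alpha$-values), without accumulating dependencies on the sizes of $\mathcal{C}_\mathcal{P}$ or $\mathcal{C}_\mathcal{R}$ beyond those already present in the stated bound.
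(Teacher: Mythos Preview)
Your proposal is correct and tracks the paper's own argument closely: the paper likewise partitions time by the information context $I$, uses regularity to reduce each block to the stage game with prior $\hat\pi_I$, sandwiches the principal's payoff between $\alpha_{p_I}(\hat\pi_I,\bar\epsilon)-\epsilon_I/\bar\epsilon$ and $\max_p\beta_p(\hat\pi_I,\bar\epsilon)+\epsilon_I^p/\bar\epsilon$ via Lemma~\ref{L1}, and then closes the gap to $\Delta(\hat\pi_I,\bar\epsilon)=O(\bar\epsilon)$ by upgrading $p_I$ to $p^*(\hat\pi_I,\bar\epsilon)$ through exactly the chain you describe (Assumption~\ref{A4} on convex combinations of the within-context forecasts, plus the calibration bound from Appendix~\ref{Ap3} run separately on each oracle-output context). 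The only organizational difference is that the paper first proves a slightly more general statement assuming forecastwise CIR and then specializes, whereas you go directly; the substantive steps and the way the $1/\bar\epsilon$ factor propagates are the same.
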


\begin{remark}Here are a few comments on this result.
	\begin{enumerate}
		\item The bound depends on the size of the partitions $\mathcal{C}_{\mathcal{P}}$, $\mathcal{C}_{\mathcal{R}}$, and $\mathcal{C}_{\Delta\mathcal{Y}}$. However, if we define these partitions to be as small as possible, we can replace these terms with the covering numbers of $\mathcal{P}$, $\mathcal{R}$, and $\Delta(\mathcal{Y})$, respectively. In that sense, our finite sample bounds will deteriorate as one increases the complexity of the action and state spaces.
		
		\item Furthermore, if we define these partitions to be the smallest possible, then theorem \ref{T1} implies that the principal's regret vanishes if $T\to\infty$ and $\epsilon,\bar{\epsilon},\delta_{\Delta(\mathcal{Y})},\delta_{\mathcal{P}},\delta_{\mathcal{R}}\to 0$ at the appropriate rates. It also follows from the proof that the principal's payoffs converge to a natural benchmark: what he would have obtained in a stationary equilibrium of the repeated game where it is common knowledge that $y_t$ is drawn independently from the empirical distribution $\hat{\pi}_{I_t}$. Formally,
		\[
		\frac{1}{T}\sum_{t=1}^TV(r_t,p_t,y_t)
		-\frac{1}{T}\sum_{I\in\mathcal{I}}n_I\max_{p\in\mathcal{P}}\beta_p(\hat{\pi}_I,0)
		\to0
		\]
		
		\item Finally, note the exponential dependence on the number of alternative mechanisms $|\mathcal{P}_0|$ and the size of the policy space cover $|\mathcal{C}_{\mathcal{P}}|$. This dependence, which is not present in theorems \ref{T2} and \ref{T3}, reflects the fact that the mechanism \ref{M1} uses the agent's information $I_t$ as context for its forecast $\pi_t$. Since our bound is uniform across all learners that satisfy $\epsilon$-bounded CIR on the realized state sequence $y_{1:T}$, it must accommodate learners that generate a lot of information, regardless of whether that information is useful. As mentioned at the beginning of this section, this is another reason why the ``informed principal'' setting seems less compelling than the settings studied in sections \ref{S7} and \ref{S8}.
	\end{enumerate}
\end{remark}

	\section{Stage Game with Private Signals}\label{S6}

In general, we cannot expect the principal to have access to an information oracle. Fortunately, we can still construct mechanisms $\sigma^*$ that obtain vanishing or bounded principal's regret without any knowledge of the learner. However, in order to state the relevant assumptions (sections \ref{S7} and \ref{S8}) and describe the mechanism (section \ref{S8}), we need to consider scenarios where the agent has private information that the principal lacks. This requires a brief detour. In this section, we revisit the stage game in order to introduce terminology that reflects agent's private information.

Suppose that the state $y$ is drawn from a known distribution $\pi$, but the agent has access to a private signal $I\in\mathcal{I}$ generated by the \textit{information structure} $\gamma$.

\begin{defn}[Information Structure]
	An \emph{information structure} is a function $\gamma:\mathcal{I}\times\mathcal{Y}\to[0,1]$ where $\gamma(\cdot,y)$ is a probability distribution over $\mathcal{I}$.
\end{defn}

The game proceeds as follows. First, nature chooses a hidden state $y\sim\pi$. Second, the principal chooses a policy $p$. Third, the agent observes a signal $I\sim\gamma(\cdot,y)$ and chooses a response $r_I$. For instance, if the agent maximizes her expected utility, her responses after signals $I$ would be
\[
r_I\in\arg\max_{\tilde{r}_I\in\mathcal{R}}\ex[y\sim\pi]{\ex[I\sim\gamma(\cdot,y)]{U(\tilde{r}_I,p,y)}}
\]
Finally, the state $y$ is revealed and payoffs are determined. 

As in section \ref{S2}, suppose the agent does not necessarily maximize her expected utility. Instead, she chooses responses $r_I$ (or distributions $\mu_I$ over responses) that guarantees her an expected utility that is within an additive constant $\epsilon$ of the optimum. For a given information structure $\gamma$, the principal's worst-case utility from following policy $p$ is described by
\begin{align*}
	\alpha_p(\pi,\gamma,\epsilon)&=\min_{\mu_I\in\Delta(\mathcal{R})}\ex[y\sim\pi]{\ex[I\sim\gamma(\cdot,y)]{\ex[r\sim\mu_I]{V(r,p,y)}}}\\
	\mathrm{subject\:to}\quad
	\max_{\tilde{r}_I\in\mathcal{R}}&\ex[y\sim\pi]{\ex[I\sim\gamma(\cdot,y)]{U(\tilde{r}_I,p,y)}}-\ex[y\sim\pi]{\ex[I\sim\gamma(\cdot,y)]{\ex[r\sim\mu_I]{U(r,p,y)}}}\leq\epsilon
\end{align*}
and his best-case utility is described by
\begin{align*}
	\beta_p(\pi,\gamma,\epsilon)&=\max_{\mu_I\in\Delta(\mathcal{R})}\ex[y\sim\pi]{\ex[I\sim\gamma(\cdot,y)]{\ex[r\sim\mu_I]{V(r,p,y)}}}\\
	\mathrm{subject\:to}\quad
	\max_{\tilde{r}_I\in\mathcal{R}}&\ex[y\sim\pi]{\ex[I\sim\gamma(\cdot,y)]{U(\tilde{r}_I,p,y)}}-\ex[y\sim\pi]{\ex[I\sim\gamma(\cdot,y)]{\ex[r\sim\mu_I]{U(r,p,y)}}}\leq\epsilon
\end{align*}
Note that $\alpha(\pi,\epsilon)$, the worst-case utility in the stage game without a private signal, is equivalent to $\alpha(\pi,\gamma,\epsilon)$ when $\gamma$ is uninformative. The same applies to $\beta$.

Recall that our theorem \ref{T1} could be interpreted as reducing the online mechanism design problem to the simpler task of finding a $\epsilon$-robust policy in the stage game without a private signal. The same is true of our next result, theorem \ref{T2}. In contrast, theorem \ref{T3} reduces the online problem to solving for a robust policy when the agent has a private signal generated by an unknown information structure. This corresponds to notion of informational robustness introduced by \textcite{BM13} and applied by \textcite{BBM17}, applied to our single-agent setting.

\begin{defn}[$\epsilon$-Informational-Robustness]
	The worst-case optimal (or \emph{$\epsilon$-informationally-robust}) policy for an unknown information structure $\gamma$ is
	\[
	p^\dagger(\pi,\epsilon)\in\arg\max_{p\in\mathcal{P}}\inf_{\gamma}\alpha_p(\pi,\gamma,\epsilon)
	\]
\end{defn}

\begin{defn}[Cost of $\epsilon$-Informational-Robustness]
	Fix a distribution $\pi$ and parameter $\epsilon>0$. The cost of \emph{$\epsilon$-informational-robustness} is the distance between the principal's best-case utility (under the best-case optimal policy (for the best-case information structure) and worst-case utility (under the worst-case optimal policy for the worst-case information structure). Formally,\footnote{Why do we evaluate the cost of informational robustness under the worst-case information structure? Because the regret guarantee that we obtain in theorem \ref{T3} applies uniformly across all learners $L$. As we will see, different learners will induce different empirical information structures $\gamma$. Our cost of informational robustness must accommodate the worst-case information structure, which loosely corresponds to the worst-case learner.}
	\[
	\nabla(\pi,\epsilon)
	=\max_{p\in\mathcal{P}}\sup_{\gamma}\beta_p(\pi,\gamma,\epsilon)
	-\max_{p\in\mathcal{P}}\inf_{\gamma}\alpha_p(\pi,\gamma,\epsilon)
	\]
\end{defn}

Let $\nabla(\pi)=\nabla(\pi,0)$ denote the cost of informational robustness in the traditional setting where the agent is optimizing exactly ($\epsilon=0$). It will be convenient to assume that the cost is growing at most linearly in $\epsilon$, although this assumption is not really necessary (see appendix \ref{Ap4}).

\begin{assume}\label{A5}
	For any distribution $\pi$, $\nabla(\pi,\epsilon)=\nabla(\pi)+O(\epsilon)$.
\end{assume}

Finally, we verify that lemma \ref{L1} still applies in the presence of private signals.

\begin{lemma}\label{L2}
	Assume regularity (assumption \ref{A1}). For any distribution $\pi$, information structure $\gamma$, policy $p$, and constants $\epsilon,\tilde{\epsilon}>0$, the principal's worst-case and best-case utilities satisfy
	\[
	\alpha_p(\pi,\gamma,\epsilon+\tilde\epsilon)\geq\alpha_p(\pi,\gamma,\epsilon)-\frac{\tilde\epsilon}{\epsilon}
	\quad\mathrm{and}\quad
	\beta_p(\pi,\gamma,\epsilon+\tilde\epsilon)\leq\beta_p(\pi,\gamma,\epsilon)+\frac{\tilde\epsilon}{\epsilon}
	\]
\end{lemma}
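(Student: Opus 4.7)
The plan is to reduce Lemma \ref{L2} to a convex-combination argument that is essentially the proof of Lemma \ref{L1} carried out pointwise over the signal $I$. The key observations are that the agent's feasibility set
$F(\epsilon) := \{(\mu_I)_{I\in\mathcal{I}} : \max_{\tilde{r}} \ex[y\sim\pi]{\ex[I\sim\gamma(\cdot,y)]{U(\tilde{r}_I,p,y)}} - \ex[y\sim\pi]{\ex[I\sim\gamma(\cdot,y)]{\ex[r\sim\mu_I]{U(r,p,y)}}} \le \epsilon\}$
is convex in the strategy profile $(\mu_I)_{I\in\mathcal{I}}$, monotone in $\epsilon$ (so $F(\epsilon) \subseteq F(\epsilon+\tilde\epsilon)$), and that both the agent's slack and the principal's $V$-expectation are linear in $\mu$. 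No appeal to regularity (Assumption \ref{A1}) is required; the lemma is purely about the geometry of $F$.

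I would fix an exactly-optimal pure strategy $r^* : \mathcal{I} \to \mathcal{R}$ attaining the agent's optimum, and given any $\mu \in F(\epsilon + \tilde\epsilon)$ form $\mu' := \lambda\mu + (1-\lambda)\delta_{r^*}$ with $\lambda := \epsilon/(\epsilon+\tilde\epsilon)$. Because $\delta_{r^*}$ has zero slack and slack is linear in the strategy, the slack of $\mu'$ is $\lambda$ times that of $\mu$, hence at most $\lambda(\epsilon+\tilde\epsilon)=\epsilon$, so $\mu' \in F(\epsilon)$. For the $\alpha_p$ bound, the definition of $\alpha_p(\pi,\gamma,\epsilon)$ applied to $\mu'$ gives $\lambda\,\ex{V(\mu)} + (1-\lambda)\ex{V(r^*)} \ge \alpha_p(\pi,\gamma,\epsilon)$; solving for $\ex{V(\mu)}$ and using $V \le 1$ to upper-bound $(1-\lambda)\ex{V(r^*)}$ yields $\ex{V(\mu)} \ge \alpha_p(\pi,\gamma,\epsilon) - (\tilde\epsilon/\epsilon)(1 - \alpha_p(\pi,\gamma,\epsilon)) \ge \alpha_p(\pi,\gamma,\epsilon) - \tilde\epsilon/\epsilon$. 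Taking the infimum over $\mu \in F(\epsilon+\tilde\epsilon)$ delivers the first inequality. The $\beta_p$ bound is symmetric: from $\ex{V(\mu')} \le \beta_p(\pi,\gamma,\epsilon)$, using $V \ge 0$ to discard the $(1-\lambda)\ex{V(r^*)}$ term, we get $\ex{V(\mu)} \le \beta_p(\pi,\gamma,\epsilon)/\lambda = \beta_p(\pi,\gamma,\epsilon)(1 + \tilde\epsilon/\epsilon) \le \beta_p(\pi,\gamma,\epsilon) + \tilde\epsilon/\epsilon$.

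The only bookkeeping concern — and the closest thing to a "hard part" — is checking that the convex combination in the product space $\prod_{I\in\mathcal{I}}\Delta(\mathcal{R})$ commutes with the iterated expectation over $(y, I) \sim \pi \otimes \gamma(\cdot,y)$, which is immediate by linearity of integration. In that sense, Lemma \ref{L2} is a direct transcription of the signalless Lemma \ref{L1}: introducing the signal structure $\gamma$ adds an outer conditioning step but no new substantive content, because $\mu$ is the only object varied and the best-response comparator $r^*$ is chosen signal-by-signal.
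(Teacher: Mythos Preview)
Your proof is correct and is essentially the same argument as the paper's, just unpacked: the paper observes that $\epsilon\mapsto\alpha_p(\pi,\gamma,\epsilon)$ is convex (because the feasible sets $B(\pi,\gamma,\epsilon)$ interpolate convexly) and bounded in $[0,1]$, then bounds the subgradient at $\epsilon$ by $1/\epsilon$ via the chord to $\epsilon=0$; your explicit mixture $\mu'=\lambda\mu+(1-\lambda)\delta_{r^*}$ with $\lambda=\epsilon/(\epsilon+\tilde\epsilon)$ is exactly that convexity argument written out between the endpoints $0$ and $\epsilon+\tilde\epsilon$. One small quibble: you do use regularity, since the bounds $0\le V\le 1$ (and the existence of the exact maximizer $r^*$) come from Assumption~\ref{A1}.
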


	\section{Mechanism for an Uninformed Agent}\label{S7}

Our second result bounds the principal's regret under a mechanism that does not require detailed knowledge of the learner $L$. Instead, this result assumes that the agent is not more informed than the principal. To begin, the mechanism is as follows.

\begin{mech}\label{M2}
	Let the distribution $\pi_t$ be a forecast of the state $y_t$.
	\begin{itemize}
		\item Our forecasting algorithm applies a generic no-internal-regret algorithm due to \textcite{BM07} in an auxilliary learning problem where the action space consists of discretized forecasts $\pi\in\mathcal{C}_{\Delta(\mathcal{Y})}$ and the loss function is the negated quadratic scoring rule.
	\end{itemize}
	Fix a parameter $\bar{\epsilon}>0$. In period $t$, the \emph{uninformed-agent mechanism} $\sigma^*$ chooses the discretization of the $\bar\epsilon$-robust policy $p^*(\pi_t,\bar{\epsilon})$ that treats the forecast $\pi_t$ as a common prior.
\end{mech}

What does it mean for an agent to be uninformed? Following the intuition developed in section \ref{S4}, the agent's behavior cannot reveal an understanding of the state sequence that goes far beyond the principal's forecast. This can be formalized by adding a lower bound on the agent's ER to our upper bound on the agent's (counterfactual) IR.\footnote{Although they study a different problem, \textcite{BGLS18} also use lower bounds on ER to prove results, exploiting the fact that exponential weights guarantees non-negative expected ER \parencite{GM16}.}

 \begin{assume}[Lower-Bounded ER]\label{A6}
	Let $y_{1:T}$ be the realized state sequence and let $p^*_{1:T}$ be the policy sequence generated by the proposed mechanism $\sigma^*$. There exists a constant $\tilde\epsilon\geq 0$ such that
	\[
	\er{y_{1:T},p^*_{1:T}}\geq-\tilde\epsilon\quad\mathrm{and}\quad\er{y_{1:T},\underbrace{p,\ldots,p}_\text{$t$ times}}\geq-\tilde\epsilon,\:\forall p\in\mathcal{P}_0
	\]
\end{assume}

While there is no a priori sense in which the deterministic sequence $y_{1:T}$ is predictable or not, this combination of bounds can be seen as an ex post definition of unpredictability. Intuitively, if an agent fully exploits the information she reveals under the proposed mechanism $\sigma^*$ (no-IR) without outperforming the best use of public information (non-negative ER), her private information cannot be particularly useful. Fully exploiting useless information generally means ignoring it.

To see this, suppose the policy $p$ is fixed and that the learner obtains non-positive IR and non-negative ER.  It is trivial to show that IR is non-negative and bounded below by ER, so it follows that the learner's IR and ER both equal zero. In turn, IR and ER can only be equal when the best-in-hindsight responses conditional on the context (i.e. the learner's response) are the same in every context. That is, the context is useless. To achieve zero IR, the learner's response must equal some best-in-hindsight response conditional on the context. If the best-in-hindsight response is unique, this means that the learner's response is the same in every period.

What this amounts to, essentially, is that our reasoning for theorem \ref{T1} largely applies to theorem~\ref{T2}. Let us recall the first steps of that argument. Previously, we considered all periods $t\in I$ with information $I$ as context. It followed immediately from the definition of information that the agent's responses $r_t$ were roughly some constant $r_I$. Furthermore, since the principal's forecasts used $I_t$ as context, the constant policy $p_I$ was calibrated to the empirical distribution $\hat{\pi}_I$.

Now, our mechanism does not have access to $I_t$ and is not calibrated to $\hat{\pi}_I$. Instead, for every policy context $P\in\Sigma_{\mathcal{P}}$, it is calibrated to the empirical distribution $\hat{\pi}_P$ conditioned on $p_t\in P$. Formally,
\[
\hat{\pi}_P(y)=\frac{1}{n_P}\sum_{t\in P}\textbf{1}(y_t=y)
\]
where $t\in P$ indicates $p_t\in P$ and $n_P$ is the number of periods $t\in P$. The policy context $P$ is coarser than information $I$, by definition of the latter. So, the principal behaves as if the agent shares his prior $\hat{\pi}_P$, while the agent behaves as if she receives $I$ as a private signal.

This is where non-negative ER comes in. The agent's information $I$ is useless to her. If there is a unique best-in-hindsight response within policy context $P$, then the agent will choose roughly the same response $r_t=r_P$ in every period $t\in P$. In other words, the policy context $P$ coincides with the agent's information $I$, and the principal is correct in assuming that the agent (roughly) optimizes against the empirical distribution $\hat{\pi}_P$. Our previous argument goes through.

Again, we just assumed that there is a unique best-in-hindsight response within policy context $P$. What if this is not the case, i.e. the best-in-hindsight response is not unique? In general, the argument breaks down. The agent can condition her action on her private information $I$, which no longer necessarily coincides with $P$. To be clear, this private signal $I$ remains useless to the agent. Moreover, the $\bar{\epsilon}$-robust policy is by definition robust to multiplicity of best responses. However, if the agent's best response is correlated with the state, this can undermine the principal's utility even if it does not affect the agent's.\footnote{For example, consider a stage game with a binary response $r\in\{0,1\}$, a binary state $y\in\{0,1\}$, and a binary policy $p\in\{\mathrm{Risky},\mathrm{Safe}\}$. The agent's utility is always zero. The principal's utility under the risky policy is $1$ if $r=y$ and $-1$ otherwise. It is slightly negative under the safe policy. If $y$ is drawn from the uniform distribution, and the agent optimizes without a signal, then the principal prefers the risky policy. If the agent receives a signal that is perfectly correlated with the state, and sets $r=1-y$, then the principal prefers the safe policy.}

The following assumption restricts attention to stage games where this issue does not arise. Informally, it asserts that if a private signal is useless to the agent, then it has limited relevance to the principal, assuming that the principal is following (nearly) optimal policies.  Formally, the value of information structure $\gamma$ to the agent in the stage game with common prior $\pi$ and policy $p$ is
\[
\phi_p(\pi,\gamma)=\max_{r,r_I\in\mathcal{R}}\ex[y\sim\pi]{\ex[I\sim\gamma(\cdot,y)]{U(r_I,p,y)}-U(r,p,y)}
\]
This is the expected utility of the agent that optimizes given information structure $\gamma$ minus the expected utility of the agent if she does not receive a private signal.

\begin{assume}\label{A7}
	Let $\pi$ be a distribution, $\epsilon>0$ be a constant, and $\gamma$ be an information structure (intuitively, one that is not useful to the agent).
	\begin{enumerate}
		\item If the principal uses $\epsilon$-robust policy $p^*(\pi,\epsilon)$, his maxmin payoff without $\gamma$, i.e. $\alpha_{p^*(\pi,\epsilon)}(\pi,\epsilon)$, is not much larger than his maxmin payoff with $\gamma$, i.e. $\alpha_{p^*(\pi,\epsilon)}(\pi,\gamma,\epsilon)$. That is,		
		\[
		\alpha_{p^*(\pi,\epsilon)}(\pi,\epsilon)-\alpha_{p^*(\pi,\epsilon)}(\pi,\gamma,\epsilon)
		=
		O\left(\phi_{p^*(\pi,\epsilon)}(\pi,\gamma)\right)+O(\epsilon)
		\]
		\item The principal's maxmax payoff with $\gamma$ under any policy $p\in\mathcal{P}$, i.e. $\beta_p(\pi,\gamma,\epsilon)$, is not much larger than his maxmax payoff without $\gamma$ under the best-case policy, i.e. $\max_{\tilde{p}\in\mathcal{P}}\beta_{\tilde{p}}(\pi,\epsilon)$. That is,
		\[
		\beta_p(\pi,\gamma,\epsilon)-\max_{\tilde{p}\in\mathcal{P}}\beta_{\tilde{p}}(\pi,\epsilon)
		=
		O\left(\phi_p(\pi,\gamma)\right)+O(\epsilon)
		\]
	\end{enumerate}
\end{assume}
Both parts of assumption \ref{A7} would be immediate if the information structure $\gamma$ were uninformative, because the left-hand sides would be non-positive. Basically, we require useless (to the agent) private signals to be similar to uninformative private signals in these two respects.

Finally, we are ready to bound the principal's regret under mechanism \ref{M2}.

\begin{theorem}\label{T2}
	Assume regularity (assumption \ref{A1}), restrictions on the stage game (assumptions \ref{A2}, \ref{A4}, \ref{A7}), $\epsilon$-bounded CIR (assumption \ref{A3}), and $\tilde{\epsilon}$-lower-bounded ER (assumption \ref{A6}). Let $\sigma^*$ be the uninformed-agent mechanism \ref{M2}. For any constant $\bar{\epsilon}>0$, the principal's expected regret $\ex[\sigma^*]{\preg{L,y_{1:T}}}$ is at most
	\begin{align*}
		\underbrace{O(\bar{\epsilon})}_\text{cost of $\bar{\epsilon}$-robustness}
		+\underbrace{O(\tilde{\epsilon})}_\text{agent's information}
		+\frac{1}{\bar{\epsilon}}\cdot\left(
		\underbrace{O(\epsilon)}_\text{agent's regret}
		+\underbrace{\tilde{O}\left(T^{-1/4}\sqrt{|\mathcal{Y}||\mathcal{C}_{\Delta(\mathcal{Y})}|}\right)}_\text{forecast miscalibration}
		+\underbrace{O\left(\delta_{\Delta(\mathcal{Y})}^{1/2}\right)+O(\delta_{\mathcal{R}})+O(\delta_{\mathcal{P}})}_\text{discretization error}
		\right)
	\end{align*}
\end{theorem}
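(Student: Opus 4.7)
My plan is to follow the strategy of Theorem \ref{T1} but to partition the $T$ periods using the discretized policy $P \in \mathcal{C}_{\mathcal{P}}$ as context instead of the full information $I \in \mathcal{I}$, and to absorb the mismatch between these two partitions using Assumptions \ref{A6} and \ref{A7}. Since Mechanism \ref{M2}'s forecaster runs without a context, the calibration guarantee of the no-IR forecasting algorithm (Appendix \ref{Ap3}) yields, on each subsequence $\{t : p_t \in P\}$, an average $\ell_1$ distance between the forecast $\pi_t$ and the empirical distribution $\hat{\pi}_P$ of order $\tilde{O}(T^{-1/4}\sqrt{|\mathcal{Y}||\mathcal{C}_{\Delta(\mathcal{Y})}|})$; the disappearance of the exponential-in-$|\mathcal{P}_0|$ factor from \eqref{E5} is immediate because there is no response context spawning parallel sub-forecasters. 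By Regularity \ref{A1} and Assumption \ref{A4}, the mechanism's policy within $P$ is then close to the $\bar{\epsilon}$-robust policy $p^*(\hat{\pi}_P, \bar{\epsilon})$.

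Within each policy cell, the time-averaged interaction is not quite a stage game with common prior $\hat{\pi}_P$, because the agent's response may depend, via the refinement of $P$ by $I$, on her counterfactual response histories. I will encode this dependence as an empirical information structure $\gamma_P$ recording the joint distribution of refined information and state within $P$. The CIR assumption \ref{A3} gives that the agent's average behavior on $P$ is $\epsilon_P$-optimal in the stage game $(\hat{\pi}_P, \gamma_P)$, with $\sum_P (n_P/T)\,\epsilon_P = O(\epsilon)$. The new ingredient is Assumption \ref{A6}: since the agent does not outperform constant policy-to-response rules, the value $\phi_{p_P}(\hat{\pi}_P, \gamma_P)$ of $\gamma_P$ to her is at most $O(\tilde{\epsilon})$ on average across $P$, up to the same $\epsilon_P$ slack. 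Assumption \ref{A7} then lets me eliminate $\gamma_P$ from the principal's payoff comparison: his worst-case utility under the realized policy is within $O(\phi_P) + O(\bar{\epsilon})$ of $\alpha_{p^*(\hat{\pi}_P, \bar{\epsilon})}(\hat{\pi}_P, \bar{\epsilon})$, and the best-case utility available to any comparator $p \in \mathcal{P}_0$ is within $O(\phi_P) + O(\bar{\epsilon})$ of $\max_{\tilde p}\beta_{\tilde p}(\hat{\pi}_P, \bar{\epsilon})$.

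Assembling these pieces reduces the per-cell regret to the bookkeeping already seen in Theorem \ref{T1}: Lemma \ref{L2} converts the agent's residual $\epsilon_P$-suboptimality into an $O(\epsilon_P / \bar{\epsilon})$ loss for the principal, Assumption \ref{A2} bounds the cost of $\bar{\epsilon}$-robustness by $O(\bar{\epsilon})$, and Regularity \ref{A1} absorbs the $O(\delta_{\mathcal{R}}) + O(\delta_{\mathcal{P}})$ discretization terms. Summing weighted by $n_P / T$ delivers the stated bound, with the new $O(\tilde{\epsilon})$ term exactly accounting for the agent's residual informational advantage under the mechanism.

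The main technical obstacle is the step that converts Assumption \ref{A6} into a per-cell bound on $\phi_P$ usable inside Assumption \ref{A7}. The ER benchmark compares the agent only to rules $h : \mathcal{C}_{\mathcal{P}} \to \mathcal{R}$ that cannot see her own response history, and hence cannot depend on the refinement of $P$ by $I$; the quantity $\phi_P$ instead measures her gain over the best constant response within $P$. Reducing one to the other cell-by-cell, then aggregating while keeping track of the $\epsilon_P$ slack from CIR and the Assumption-\ref{A4} transport between the $\bar{\epsilon}$-robust policies at $\hat{\pi}_P$ and at the forecast $\pi_t$ actually used by the mechanism, is where the bookkeeping is subtle.
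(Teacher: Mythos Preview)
Your proposal is correct and follows essentially the same route as the paper. The paper first proves a generalized statement (Theorem~\ref{T5}) partitioned by \emph{forecast} contexts $F$ using forecastwise CIR/ER, and then obtains Theorem~\ref{T2} by collapsing $F$ to policy cells $P$ via Assumption~\ref{A4} and subadditivity of the $\ell_1$ norm; you go straight to the $P$-partition, but the content is the same: build the empirical information structure $\hat{\gamma}_P$, use CIR to place the agent's play in $B(\hat{\pi}_P,\hat{\gamma}_P,\epsilon_P)$, bound $\phi_{p}(\hat{\pi}_P,\hat{\gamma}_P)$ by $\epsilon_P+\tilde{\epsilon}_P$ via the ER lower bound, and then invoke Assumption~\ref{A7} to strip $\hat{\gamma}_P$ from both $\alpha$ and $\beta$. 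The ``main technical obstacle'' you flag is exactly the computation the paper carries out in Lemmas~\ref{L4}--\ref{L5}, where $\epsilon_F+\tilde{\epsilon}_F$ is shown to dominate the value-of-information term up to $O(d_1(\pi_F,\hat{\pi}_F))+O(\delta_{\mathcal{P}})$ corrections.
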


\begin{remark}
	If we define the partition $\mathcal{C}_{\Delta\mathcal{Y}}$ to be the smallest possible, then theorem \ref{T2} implies that the principal's regret vanishes if $T\to\infty$ and $\epsilon,\bar{\epsilon},\tilde{\epsilon},\delta_{\Delta(\mathcal{Y})},\delta_{\mathcal{P}},\delta_{\mathcal{R}}\to 0$ at the appropriate rates. It also follows from the proof that the principal's payoffs converge to a natural benchmark: what he would have obtained in a stationary equilibrium of the repeated game where it is common knowledge that $y_t$ is drawn independently from the empirical distribution $\hat{\pi}_{P_t}$. Formally,
	\[
	\frac{1}{T}\sum_{t=1}^TV(r_t,p_t,y_t)
	-\frac{1}{T}\sum_{P\in\mathcal{C}_{\mathcal{P}}}n_P\max_{p\in\mathcal{P}}\beta_p(\hat{\pi}_P,0)
	\to0
	\]
\end{remark}

	\section{Mechanism for an Informed Agent}\label{S8}

In section \ref{S4}, we assumed that the principal knows the agent's learner $L$. The implication of this assumption is that the principal is as informed as the agent. In section \ref{S5}, we assumed that the agent is as uninformed as the principal. In this section, we allow the agent to be more informed than the principal. This generality comes at a cost: we no longer ensure vanishing principal's regret. Instead, we show that, in the limit, the following mechanism guarantees regret that is no greater than the cost of informational robustness.

\begin{mech}\label{M3}
	Let the distribution $\pi_t$ be a forecast of the state $y_t$.
	\begin{itemize}
		\item Our forecasting algorithm applies a generic no-internal-regret algorithm due to \textcite{BM07} in an auxilliary learning problem where the action space consists of the discretized forecasts $\pi\in\mathcal{C}_{\Delta(\mathcal{Y})}$ and the loss function is the negated quadratic scoring rule.
	\end{itemize}
	Fix a parameter $\bar{\epsilon}>0$. In period $t$, the \emph{informed-agent mechanism} $\sigma^*$ chooses the discretization of the $\bar\epsilon$-informationally-robust policy $p^\dagger(\pi_t,\bar{\epsilon})$ that treats the forecast $\pi_t$ as a common prior.
\end{mech}

Theorem \ref{T3} builds on the same reasoning as theorems \ref{T1} and \ref{T2}. First, we need to adapt assumption \ref{A4} to the case with private signals.

\begin{assume}\label{A8}
	Let $\epsilon>0$. Let $\pi$ and $\tilde{\pi}$ be distributions in the stage game. If the $\epsilon$-informationally-robust policies under $\pi$ and under $\tilde{\pi}$ are close to one another, then they are also close to the $\epsilon$-informationally-robust policy under any convex combination of these distributions. Formally, for any $\lambda\in[0,1]$,
	\[
	d_{\mathcal{P}}\left(p^\dagger(\pi,\epsilon),p^\dagger\left(\lambda\pi+(1-\lambda)\tilde\pi,\epsilon\right)\right)=O\left(d_{\mathcal{P}}\left(p^\dagger(\pi,\epsilon),p^\dagger(\tilde\pi,\epsilon)\right)\right)
	\]
\end{assume}

Next, recall how, in the previous section, we were concerned that the principal's policy $p_t$ in period $t$ was calibrated to the empirical distribution $\hat{\pi}_P$ given policy context $P\in\mathcal{C}_{\mathcal{P}}$ (where $t\in P$) rather than the empirical distribution $\hat{\pi}_I$ given information $I=I_t$. There, we resolved that problem by assuming the agent was uninformed (non-negative ER). Here, our solution is even simpler: choose a policy $p_t$ that is robust to the agent's private information $I$, whatever that may be.

To be more precise, recall that the policy context $P$ is coarser than information $I$. We can interpret periods $t\in I$ as those periods in which the agent received a private signal $I$. By looking at the frequency of information $I$ within policy context $P$, we can define an empirical information structure $\hat{\gamma}_P$ using Bayes' rule, i.e.
\[
\hat{\gamma}_P(I,y)=\frac{n_I\hat{\pi}_I(y)}{n_P\hat{\pi}_P(y)}\cdot\textbf{1}(I\subseteq P)
\]
where $I\subseteq P$ is shorthand for $t\in I\implies t\in P$. Before, we could roughly approximate principal's and agent's utility as their expected utility in the stage game where the state $y$ was drawn from the empirical distribution $\hat{\pi}_I$. Now, the approximation is the expected utility in the stage game where $y\sim\hat{\pi}_P$ and the agent receives private signal $I$ from the empirical information structure $\hat{\gamma}_P$. Of course, the principal's policy $p_t$ is robust to all information structures $\gamma$, including $\hat{\gamma}_P$.

Next, we formalize this discussion and bound the principal's regret under mechanism \ref{M3}.

\begin{theorem}\label{T3}
	Assume regularity (assumption \ref{A1}), restrictions on the stage game (assumptions \ref{A5}, \ref{A8}), and $\epsilon$-bounded CIR (assumption \ref{A3}). Let $\sigma^*$ be the informed-agent mechanism \ref{M3}. For any constant $\bar{\epsilon}>0$, the principal's expected regret $\ex[\sigma^*]{\preg{L,y_{1:T}}}$ is at most
	\begin{align*}
		\underbrace{\frac{1}{T}\sum_{P\in \mathcal{C}_\mathcal{P}}n_P\nabla(\hat{\pi}_P)
		+O(\bar{\epsilon})}_\text{cost of $\bar{\epsilon}$-informational-robustness}
		+\frac{1}{\bar{\epsilon}}\cdot\left(
		\underbrace{O(\epsilon)}_\text{agent's regret}
		+\underbrace{\tilde{O}\left(T^{-1/4}\sqrt{|\mathcal{Y}||\mathcal{C}_{\Delta(\mathcal{Y})}|}\right)}_\text{forecast miscalibration}
		+\underbrace{O\left(\delta_{\Delta(\mathcal{Y})}^{1/2}\right)+O(\delta_{\mathcal{R}})+O(\delta_{\mathcal{P}})}_\text{discretization error}
		\right)
	\end{align*}
\end{theorem}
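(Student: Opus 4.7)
The plan is to mirror the structure of the proofs of Theorems~\ref{T1} and~\ref{T2}, but partition time by policy context $P \in \mathcal{C}_{\mathcal{P}}$ (as in Theorem~\ref{T2}) while sub-partitioning each context by the agent's information $I \subseteq P$ (as in Theorem~\ref{T1}). For each $P$, I would recognize the within-context averages as expected utilities in the stage game with common prior $\hat{\pi}_P$ and the empirical information structure $\hat{\gamma}_P(I,y) = n_I\hat{\pi}_I(y)/(n_P\hat{\pi}_P(y))$ (defined so that marginalizing over $I$ recovers $\hat{\pi}_P$). The constant policy identified with $P$ (up to discretization $\delta_{\mathcal{P}}$) is essentially $p^\dagger(\pi_t,\bar\epsilon)$ with $\pi_t$ a calibrated forecast of $\hat{\pi}_P$, so by Assumption~\ref{A8} and Lipschitz continuity it is close to $p^\dagger(\hat{\pi}_P,\bar\epsilon)$.

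The key estimates are as follows. First, the forecasting algorithm (the $\textcite{BM07}$ no-internal-regret algorithm against a quadratic scoring rule, which drops the dependence on $|\mathcal{C}_{\mathcal{R}}|^{|\mathcal{P}_0|+|\mathcal{C}_{\mathcal{P}}|}$ from Theorem~\ref{T1}) guarantees, per the appendix, that within each $P$ the forecasts $\pi_t$ are close to $\hat{\pi}_P$ in $\ell_1$, producing the forecast miscalibration term. Second, because information $I$ is an element of the agent's information partition under both $\sigma^*$ and every $\sigma^p$, her responses $r^*_t$ and $r^p_t$ are (up to $\delta_{\mathcal{R}}$) constant on each $I$. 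Third, $\epsilon$-bounded CIR decomposes as $\epsilon \geq (1/T)\sum_I n_I \epsilon_I$, where $\epsilon_I$ is the agent's per-information gap from the best fixed response in hindsight within $I$; marginalized over $I \subseteq P$, this yields a sub-optimality $\epsilon_P = \sum_{I\subseteq P}(n_I/n_P)\epsilon_I$ of the agent's aggregate response against policy $p^\dagger(\hat{\pi}_P,\bar\epsilon)$ relative to the best response given information structure $\hat{\gamma}_P$.

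The lower bound on the principal's utility under $\sigma^*$ within policy context $P$ then reads, via Lemma~\ref{L2}, at least $\alpha_{p^\dagger(\hat{\pi}_P,\bar\epsilon)}(\hat{\pi}_P,\hat{\gamma}_P,\bar\epsilon) - O(\epsilon_P/\bar\epsilon)$, which by definition of $p^\dagger$ is at least $\max_p \inf_\gamma \alpha_p(\hat{\pi}_P,\gamma,\bar\epsilon) - O(\epsilon_P/\bar\epsilon)$. Symmetrically, under the counterfactual $\sigma^p$ for $p \in \mathcal{P}_0$, the agent again obtains $\epsilon_P$-optimal aggregate responses given the same empirical information structure $\hat{\gamma}_P$ (up to relabeling of $I$), so the principal's utility is at most $\beta_p(\hat{\pi}_P,\hat{\gamma}_P,\bar\epsilon) + O(\epsilon_P/\bar\epsilon) \leq \max_{\tilde p}\sup_\gamma \beta_{\tilde p}(\hat{\pi}_P,\gamma,\bar\epsilon) + O(\epsilon_P/\bar\epsilon)$. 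Subtracting and invoking Assumption~\ref{A5} yields a per-context bound of $\nabla(\hat{\pi}_P) + O(\bar\epsilon) + O(\epsilon_P/\bar\epsilon)$; taking the $n_P/T$-weighted sum over $P$, combining with the CIR bound $\sum_P n_P \epsilon_P \leq T\epsilon$, and adding back the calibration and discretization slacks gives the stated inequality.

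The main obstacle I expect is justifying the step that replaces the realized, non-stationary dynamics within each $P$ by the stage game with common prior $\hat{\pi}_P$ and information structure $\hat{\gamma}_P$. This requires showing that (i) the agent's $\epsilon_I$ budgets translate, after marginalizing $I$ out, into a single $\epsilon_P$-near-optimal response against $\hat{\gamma}_P$ that is feasible in the definitions of $\alpha_p(\hat{\pi}_P,\hat{\gamma}_P,\cdot)$ and $\beta_p(\hat{\pi}_P,\hat{\gamma}_P,\cdot)$, (ii) the counterfactual information $I$ that the agent faces under $\sigma^p$ genuinely induces the same structure $\hat{\gamma}_P$ (which follows because $I$ is defined counterfactually across all $p \in \mathcal{P}_0$), and (iii) the discretization of $\pi_t \mapsto p^\dagger(\pi_t,\bar\epsilon)$ by $\mathcal{C}_{\mathcal{P}}$ does not destroy the calibration of $p_t^*$ to $\hat{\pi}_P$. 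Assumption~\ref{A8} plus Lemma~\ref{L2} handle (iii) once (i)-(ii) are established; the bookkeeping that converts per-$I$ CIR into a feasible $\epsilon_P$-perturbation of the stage-game program is where the main care is required.
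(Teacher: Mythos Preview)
Your proposal is correct and follows essentially the same approach as the paper: decompose by policy context $P$ and information $I\subseteq P$, recognize the within-$P$ averages as the stage game with prior $\hat\pi_P$ and empirical information structure $\hat\gamma_P$, apply Lemma~\ref{L2} for the $\epsilon$-mismatch, bound the $\sigma^*$ side below by $\inf_\gamma\alpha_{p^\dagger}$ and the $\sigma^p$ side above by $\sup_\gamma\beta$, and combine via Assumption~\ref{A5}. The paper organizes this slightly differently---it first proves a forecast-context version (Theorem~\ref{T6}) via two lemmas (\ref{L6}--\ref{L7}) and then specializes to policy contexts using Assumption~\ref{A8} and subadditivity of the $\ell_1$ norm to pass from $d_1(\pi_F,\hat\pi_F)$ to $d_1(\pi_P,\hat\pi_P)$---but the substance, including the aggregation step you flag under (i), is the same.
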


\begin{remark}
	In contrast to our previous results, this regret bound does not vanish. However, if we define the partition $\mathcal{C}_{\Delta\mathcal{Y}}$ to be the smallest possible, the bound does converge to
	\[
	\frac{1}{T}\sum_{P\in\mathcal{C}_\mathcal{P}}n_P\nabla(\hat{\pi}_P)
	\]
	as $T\to\infty$ and $\epsilon,\bar{\epsilon},\delta_{\Delta(\mathcal{Y})},\delta_{\mathcal{P}},\delta_{\mathcal{R}}\to 0$ at the appropriate rates. This is the best possible guarantee in a stationary equilibrium of the repeated game where (a) it is common knowledge that $y_t$ is drawn independently from the empirical distribution $\hat{\pi}_{P_t}$ and (b) the agent has access to an unknown information structure $\gamma$.
\end{remark}

	\section{Conclusion}\label{S9}

We studied single-agent mechanism design where the common prior assumption is replaced with repeated interaction and frequent feedback about the world.  Our primary motivation was to remove a barrier (the common prior) that makes it difficult to implement mechanisms in practice. However, we also want to emphasize that this work can be viewed as a learning foundation for (robust) mechanism design. Indeed, our results show that policies similar to those predicted by a common prior can perform well even without making any assumptions about the data-generating process. This lends credibility to researchers who invoke the common prior for tractability, but do not expect it to be taken literally. However, there are two caveats.
\begin{enumerate}
	\item Our policies are robust to agents that behave suboptimally by up to some $\epsilon>0$. In contrast, most papers on local robustness involve an optimizing agent with misspecified beliefs (e.g. \cite{MM11,OT12,AKS13}). These notions coincide sometimes but not always. In addition, our policies sometimes require informational robustness \parencite{BM13}.
	
	\item The number of interactions $T$ required for our mechanisms to approximate the static common prior game may be large. In particular, our bounds depend on features of the stage game, like the size of the policy and response spaces, and the number of states. These features may also affect the agent's learning rate, which in turn affects our bounds. In that sense, the common prior assumption may be less appealing in more games that are more complex.
\end{enumerate}

\paragraph{Further Work.}

There are several directions in which to generalize and improve this work. To begin, it is not clear whether our finite sample bounds have a tight dependence on the number of periods $T$ and various other parameters. For example, is it possible to remove the exponential dependence in theorem \ref{T1} on the size of the policy space?\footnote{One approach the principal might take is to attempt to discern the agent's beliefs from the description of her learner $L$, and substitute those beliefs for his own forecast. If successful, this would tie the principal's forecast miscalibration to the agent's counterfactual internal regret.} In addition, there may be opportunities for tightening our results in less abstract settings where the stage game has more structure.

Our analysis was restricted to single-agent problems. Suppose there are multiple agents. From the perspective of any one agent, her opponents correspond to adaptive adversaries (c.f. \cite{ADMM18}) whose future behavior is influenced by the agent's present response. However, if the number of participants is large and the mechanism's outcome preserves the differential privacy of each agent's response history (c.f. \cite{MT07}), the behavioral assumptions developed here may also be suitable for the multi-agent setting.

We assumed that the principal and agent observe the state after every interaction, but this may be unrealistic in many applications. For instance, in contract theory the state is a function from the agent's actions to outcomes. Let us briefly refer to the principal-agent problem in appendix \ref{Ap1-2}. There, if the agent chooses to work, we do observe whether the project succeeds or not. However, we may not learn whether the project would have succeeded had the agent shirked. To mitigate this issue, we could consider the case with bandit feedback, where participants observe their own payoffs but not the state. The challenge with bandit feeback is that it requires responsive mechanisms, as the mechanism must depend on the principal's payoffs, which in turn depend on the agent's response.\footnote{Relatedly, \textcite{BBHP15} consider a repeated Stackelberg game where the state is the agent's private type. The principal receives bandit feedback: he never observes the type directly but can infer it from the agent's behavior. The issues associated with responsiveness do not arise in this model as the agent is myopic (or more precisely, there is a sequence of short-lived agents).}

In section \ref{S8}, where the agent may be more informed than the principal, the principal's regret did not vanish but rather converged to the cost of informational robustness under a common prior. There is reason to believe that this result is not tight. Although the principal will never have access to the private signal $I$ of the agent, he may attempt to learn (via the agent's past behavior) about the information structure $\gamma$ that generates it. In turn, the agent may anticipate this and attempt to manipulate the principal's policy by feigning (partial) ignorance of her private signal. This suggests a less conservative definition of informational robustness, where the principal learns the quality of any information that the agent decides to exploit. However, in the repeated game, this approach would require responsive mechanisms.

As the last two paragraphs illustrate, we need a theory of behavior for responsive mechanisms. The no-regret conditions used here and elsewhere are not as well-motivated when the mechanism (or adversary) is responsive, insofar as they do not generalize traditional rationality assumptions. Extending the logic of no-regret conditions to a larger set of mechanisms -- but not necessarily all mechanisms -- is a clear priority for further work.

	\printbibliography
	
	\newpage\appendix
	
	\section{Special Cases}\label{Ap1}
	
	\subsection{Bayesian Persuasion}\label{Ap1-1}

There is an informed sender (i.e. principal) and an uninformed receiver (i.e. agent). The principal designs the process by which information is revealed to the agent. Let $\mathcal{M}$ be a finite set of messages that he can send. Knowing that the agent will react to an informative message, the principal attempts to persuade the agent towards actions that he prefers. Let $\mathcal{A}$ be a finite set of actions that the agent can take. The agent chooses a response $r:\mathcal{M}\to\mathcal{A}$ that maps messages to actions.

A policy is an information structure $p:\mathcal{Y}\to\Delta(\mathcal{M})$. That is, an information structure $p_t(y_t)$ describes the probability of a message $m_t$ being sent, conditional on the state being $y_t$. The agent receives the message $m_t$ and takes action $a_t=r_t(m_t)$. While the agent may not know the process that generated the state $y_t$, she understands the process $p_t$ that generates the message $m_t$ conditional on the state. Armed with this understanding, she can infer something about the state $y_t$ based on the message $m_t$.

All that remains is to specify payoffs. Let $u:\mathcal{A}\times\mathcal{Y}\to\reals$ be the agent's utility function from a given action in a given state. Similarly, let $v:\mathcal{A}\times\mathcal{Y}\to\reals$ be the principal's utility. In the previous subsection, the utility functions $U,V$ depended on the triple $(r,p,y)$ rather than the pair $(a,y)$. To reconcile the two models, we let participants evaluate $(r,p,y)$ by their expected utility conditional on the state. Formally,
\[
U(r,p,y)=\sum_{m\in\mathcal{M}}p(m,y)\cdot u(r(m),y)\quad\mathrm{and}\quad V(r,p,y)=\sum_{m\in\mathcal{M}}p(m,y)\cdot v(r(m),y)
\]
When the state is fixed, the residual variation in utility is due to the fact that messages are drawn randomly from the distribution $p(y)$. These distributions are common knowledge because the agent observes the principal's policy $p$ before taking an action. Indeed, the fact that the principal commits to an information structure is the defining feature of Bayesian persuasion.

\begin{example}[Judge-Prosecutor Game]
	The state space is $\mathcal{Y}=\{\mathrm{Innocent},\mathrm{Guilty}\}$ and the action space is $\mathcal{A}=\{\mathrm{Convict},\mathrm{Acquit}\}$. The judge has 0-1 utility $u$ and prefers to convict if the defendant is guilty and acquit if the defendant is innocent. Regardless of the state, the prosecutor's utility $v$ is 1 following a conviction and 0 following an acquittal.
	
	%Judge = fellowship committee / employer
	%Prosecutor = letter writer / advisor
	%Defendant = student / applicant
\end{example}

%\begin{remark}[discernibility]
%	when you hear ``convict'' from some info structure, if there is any prior less than $\pi$ that will choose convict, then $\pi\geq 0.5$ will convict. So any change in optimal response must come after hearing ``acquit''. but ``acquit'' is only sent when the person is innocent. so the agent should always acquit. so at least when perfectly optimizing, it seems like response doesn't change in that range of priors for that $\mathcal{P}_0$.
%\end{remark}

This example satisfies regularity \eqref{A1} with the discrete metric on $\mathcal{R}$, the $l_1$-metric on $\mathcal{P}$, and $K^U_\mathcal{R}=K^V_\mathcal{R}=K^U_\mathcal{P}=K^V_\mathcal{P}=1$.

The worst-case policy $p^*(\pi,\epsilon)$ sends the message ``convict'' whenever the defendant is guilty. If the defendant is innocent, it sends the message ``convict'' with probability
\[
q=\max\left\{1,\min\left\{0,\frac{p-\epsilon}{1-p}\right\}\right\}
\]
The cost of $\epsilon$-robustness $\Delta(\pi,\epsilon)=O(\epsilon)$ decreases smoothly with $\epsilon$. This game satisfies assumption \ref{???} since $q$ is increasing in $p$ (and hence convex combinations of distributions $p$ will yield $q$ that is bounded between the $\epsilon$-robust policies for the extremal distributions, which are close by assumption).

The worst-case policy $p^\dagger(\pi,\epsilon)$ for an unknown private signal is full transparency. The cost of informational robustness, i.e. $\nabla(\pi,0)$, is the difference between the principal's value under the common prior $\pi$ and his payoff under full transparency. This game satisfies assumption \ref{???} with $M_1=1$ and $M_2=O(\epsilon)$. It trivially satisfies assumptions \ref{???} and \ref{???} since $p^\dagger$ is constant.

	\subsection{Contract Design}\label{Ap1-2}

In classic models of moral hazard, the principal incentivizes an agent to put effort into a task the principal cares about. The timing of the game is as follows: (1) the principal commits to a contract, (2) the agent takes a hidden action, (3) nature randomly chooses an outcome, (4) the agent is paid based on the outcome, (5) the game concludes. For concreteness, we consider the limited liability model due to \textcite{Sappington83} where both participants are risk-neutral but the principal is not allowed to charge the agent. This model has been popularized by recent work in robust contract design (see e.g. \cite{Carroll15}, \cite{DRT19}).

Formally, let $\mathcal{R}$ be a finite set of actions that the agent can take. Let $\mathcal{O}$ be a finite set of outcomes $o$. The principal observes the outcome but not the action. The state $y:\mathcal{R}\to\mathcal{O}$ describes how actions map to outcomes. The employer commits to a contract $p:\mathcal{O}\to[0,\bar{p}]$ that specifies a non-negative payment for each outcome. The cost function $c:\mathcal{R}\to\reals$ describes how costly it is for the agent to take a particular action. The agent's utility function is
\[
U(r,p,y)=p(y(r))-c(r)
\]
The benefit function $b:\mathcal{O}\to\reals$ describes how beneficial a given outcome is to the principal. The principal's utility function is
\[
V(r,p,y)=b(y(r))-p(y(r))
\]
Through the contract $p$, the principal can incentivize the agent to take actions that, depending on the state, will lead to a more beneficial outcome.

\begin{example}\label{Ex5}
	The agent is given a task of unknown difficulty. There are two actions $\mathcal{A}=\left\{\mathrm{work},\mathrm{shirk}\right\}$, two outcomes $\mathcal{O}=\{\mathrm{success}, \mathrm{failure}\}$, and three states $\mathcal{Y}=\{\mathrm{trivial}, \mathrm{moderate}, \mathrm{impossible}\}$. In the trivial state, both actions lead to success. In the impossible state, both actions lead to failure. In the moderate state, work leads to success and shirk leads to failure.
	
	The principal's benefits are $b(\mathrm{success})=2$ and $b(\mathrm{failure})=0$. The agent's costs are $c(\mathrm{work})=1$ and $c(\mathrm{shirk})=0$. In the impossible and trivial states, the optimal contract pays nothing after both outcpmes and the agent will shirk. In the moderate state, the optimal contract pays $p(\mathrm{success})=5$ to cover the agent's costs if she works, otherwise $p(\mathrm{failure})=0$. Generally, if the principal pays the agent after success, the agent will have to take into account the risk that the task turns out to be impossible (where work induces costs without any payment) or trivial (where work is not required for payment). To incentivize work, the contract must compensate the agent accordingly.
\end{example}

%\begin{remark}[discernibility]
%	for contracts, when pivotalness is rare, principal does not want to pay and agent does not want to work. So there is some alignment. Maybe if we say $\mathcal{P}_0$ consists of admissible policies, only enough to get the agent to work when pivotalness meets some threshold? Thereafter, agent does not work regardless of policy in this space.
%\end{remark}

This example satisfies regularity \eqref{A1} with $U,V$ normalized, the discrete metric on $\mathcal{R}$, the sup-norm-metric on $\mathcal{P}$, and $K^U_\mathcal{R}=K^V_\mathcal{R}=K^U_\mathcal{P}=K^V_\mathcal{P}=1$.

The worst-case policy $p^*(\pi,\epsilon)$ sets $p(\mathrm{failure})=0$ and
\[
p(\mathrm{success})=\frac{c(\mathrm{work})-c(\mathrm{shirk})+\epsilon}{\pi(\mathrm{moderate})}
\]
so long as $p(\mathrm{success})\leq\bar{p}$ and the principal's $\pi$-expected payoff is greater than zero when the agent works. Otherwise, the worst-case policy sets all transfers to zero. The cost of $\epsilon$-robustness $\Delta(\pi,\epsilon)=O(\epsilon)$ decreases smoothly with $\epsilon$.

This game satisfies assumption \ref{???}. To see this, note that as long as working is strictly more costly than shirking, the optimal policies that induce effort are bounded away from the optimal policies that do not. Among the policies that do not induce effort, convex combinations of the distribution will not make inducing effort desirable. Among policies that do induce effort, the fact that the payments following success are decreasing in $\pi(\mathrm{moderate})$ means (as in the last section) that convex combinations of distributions lead to optimal policies that are between the extremal policies.

The worst-case policy $p^\dagger(\pi,\epsilon)$ for an unknown private signal is the same as the optimal policy under a common prior without a private signal. The cost of informational robustness, i.e. $\nabla(\pi,0)$, is the difference between the principal's value when the agent only works in the ``moderate'' state and the principal pays her cost of effort conditional on success (assuming the principal prefers this to shirking with zero transfers) and his value in the common prior game without a private signal. This game satisfies assumption \ref{???} with $M_1=O(\epsilon)$ and $M_2=0$.

	\section{Agent's Learning Problem}\label{Ap2}

%worst-case sample complexity is probably too strong a requirement to understand whether our behavioral assumptions are reasonable. however, it can point us to the kinds of environments or assumptions that would be needed to motivate no-CIR as a reasonable condition.
%
%------------
%
%argue that no-CIR is not a strong requirement
%
%1. there is a simple algo that obtains no-CIR: follow the principal's forecast
%
%2. if the states are in fact Markovian then CIR=ER (note that this does not assume the principal nor the agent know that it is Markovian)
%
%3. there is a complicated algo that obtains no-CIR but with terrible rates: Blum-Mansour reduction
%
%So in some sense (1) relaxes stochasticity but maintains common belief, while (2) relaxes common belief but maintains stochasticity.
%
%Whereas (3) attempts to relax both and does not provide a practical guarantee.
%
%Thus, it remains an open question whether in theory and in practice we can expect agents to obtain low CIR. From a theory perspective, the argument would be that for any learner $L$, there is a simple adjustment you can make to get the stronger no-$L$-CIR objective, without taking unreasonably more time.
%
%----------------

Upper bounds on external regret are often viewed as compelling assumptions (e.g. \cite{NST15}, \cite{BMSW18}) because there exist relatively simple algorithms that guarantee vanishing ER as $T\to\infty$. For example, the exponential weights algorithm (a.k.a. hedge algorithm, exponentiated gradient algorithm) satisfies no-ER. In contrast, our behavioral assumptions -- e.g. no-FCIR -- may appear daunting, insofar as the agent must solve a learning problem with a context space that is exponential in the number of alternative mechanisms, $|\Sigma_0|$. When both the sequence of states $y_{1:T}$ and the learner $L$ are particularly pathological, no-FCIR may indeed be too strong an assumption. When the the learner satisfies additional properties, or sequence of states has some stochastic structure (e.g. is i.i.d. or Markov), no-FCIR may be more reasonable.

In this section, we make one simple observation. There exists a learner that guarantees no-FCIR (and hence no-CIR) for the agent under our mechanism from theorem \ref{T2}, with the best rate of convergence we can hope for. %Second, there exists a learner that guarantees no-CIR for the agent under any mechanism (and can be adapted to no-FCIR for forecast mechanisms), even if its bounds are impractically large in finite samples. For now, exploring the space between these two extremes remains an open question.

\begin{defn}[CFL]
	Suppose the principal publicizes the forecast $\pi_t$ in every period $t$.\footnote{In our view, part of the principal's objective is to make the agent's problem as simple as possible. From a worst-case perspective, there is no benefit to hiding this information. With that said, we see no reason why this result should not apply under the weaker assumption that the principal's forecasting algorithm is public knowledge.} The \emph{common forecast learner} (CFL) sets
	\[
	r_t\in\arg\max_{r\in\mathcal{R}}\ex[y\sim\pi_t]{U(r_t,p_t,y_t)}
	\]
\end{defn}

Proposition \ref{P3} verifies that the CFL satisfies the behavioral assumptions of theorem \ref{T2}.

\begin{prop}\label{P3}
	Let $\sigma^*$ be the mechanism from theorem \ref{T2}. Then the CFL satisfies $\epsilon$-bounded FCIR \eqref{A4} in expectation, i.e.
	\[
	\ex[L,\sigma^*]{\fcirs}\leq\epsilon=\tilde{O}\left(\frac{1}{T^{1/4}}\sqrt{|\mathcal{Y}||\mathcal{F}|}\right)+O\left(\sqrt{|\mathcal{Y}|\delta_{\mathcal{F}}}\right)
	\]
	and $\tilde{\epsilon}$-lower-bounded FER \eqref{A5}, where $\tilde{\epsilon}=0$. Moreover, if the agent uses CFL, the principal's regret bound in theorem \ref{T2} applies regardless of whether alignment \eqref{A10} holds.
\end{prop}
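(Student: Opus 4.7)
The plan is to exploit the fact that CFL is a deterministic function of the principal's publicly-announced forecast. Under mechanism \ref{M2}, the forecast $\pi_t$ depends only on the past state sequence $y_{1:t-1}$, not on the policy or response history; so the forecast generated under any counterfactual constant mechanism $\sigma^p$ is identical to the one generated under $\sigma^*$. Since CFL's response is the (tie-broken) maximizer of $\mathbb{E}_{y\sim\pi_t}\left[U(\cdot,p_t,y)\right]$, both $r_t^*$ and each counterfactual $r_t^p$ are deterministic functions of $\pi_t$, and hence the entire information $I_t$ is a deterministic function of $\pi_t$ as well.

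For the $\tilde\epsilon$-lower-bounded FER claim, the argument is immediate. Since CFL's action in every period is a deterministic function $r^{\mathrm{CFL}}(p_t,\pi_t)$, the modification rule $h(p,\pi):=r^{\mathrm{CFL}}(p,\pi)$ exactly replicates CFL's response in every period, so $\fer{p_{1:T},y_{1:T}}\geq 0$, yielding $\tilde\epsilon=0$.

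For the FCIR upper bound, I would partition periods by the realized forecast $\pi_t$. Because $I_t$ is a deterministic function of $\pi_t$, this partition is at least as fine as the FCIR context, and the best-in-hindsight modification rule conditioning on $(I_t,\pi_t)$ can do no better than the best rule conditioning on $\pi_t$ alone. Within each forecast group $\{t:\pi_t=\pi\}$, CFL plays the constant response $r^{\mathrm{CFL}}(\pi)$, namely the best response to $\pi$ under the mechanism's chosen policy $p^*(\pi,\bar\epsilon)$. The hindsight-optimal modification rule on this group plays the best response to the empirical state distribution $\hat\pi_\pi$. By the calibration guarantee for the BM07-based forecaster (Appendix \ref{Ap3}), the aggregate calibration error $\mathbb{E}\!\left[\sum_\pi n_\pi \lVert \hat\pi_\pi-\pi\rVert_1\right]$ is controlled by roughly $\tilde O(T^{3/4}\sqrt{|\mathcal{Y}||\mathcal{F}|})+O(T\sqrt{|\mathcal{Y}|\delta_\mathcal{F}})$. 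Combining this with Lipschitzness of $U$ in the response (Assumption \ref{A1}) and the fact that the change in expected utility is at most the $l_1$ movement of the distribution times the utility's bound, this translates directly into the FCIR rate stated in the proposition after dividing by $T$.

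Finally, the claim that Theorem \ref{T2}'s bound applies to CFL even when alignment fails follows because alignment was invoked only to handle divergence between the agent's realized response and the principal's prediction of it, in cases where the agent's private information could move principal payoffs despite being useless to her. Since CFL is a deterministic function of the public forecast, its response is perfectly predictable from $(p_t,\pi_t)$ and carries no such hidden information, so the alignment dependence drops out of the theorem's proof when specialized to CFL. The main obstacle is routine bookkeeping: translating the calibration bound from Appendix \ref{Ap3} into an FCIR bound with the exact exponents in the proposition requires carefully tracking the discretization $\delta_\mathcal{F}$ of the forecast space and its interaction with the quadratic scoring rule's calibration guarantee under the BM07 reduction.
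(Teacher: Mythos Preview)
Your argument is correct and is the natural one. The paper does not actually spell out a proof of this proposition (it is stated as a verification in Appendix~\ref{Ap2} and left implicit), so there is nothing substantive to compare against; your sketch is what the authors evidently have in mind. Two small points of tidying: (i) in the FER step, the modification rule is $h:S_{\mathcal{F}}\to\mathcal{R}$, not $h(p,\pi)$; your argument still goes through because under $\sigma^*$ the policy is a deterministic function of the forecast cell $F$, and under any $\sigma^p$ it is constant, so $h(F):=r^{\mathrm{CFL}}(p_F,\pi_F)$ replicates CFL exactly. (ii) Since the forecast cell $F_t$ is itself a coordinate of the FCIR context $I_t$, and the remaining coordinates of $I_t$ are determined by $\pi_{F_t}$, the $F$-partition and the $I$-partition actually coincide, which slightly sharpens your ``at least as fine'' observation. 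The calibration-to-FCIR translation is then immediate: on each forecast cell the per-period regret is at most $2\lVert\hat\pi_F-\pi_F\rVert_1$ (using only boundedness of $U$, not Lipschitzness in $r$), and summing gives exactly the right-hand side of \eqref{E4}.
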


Note that these rates preserve the $T^{1/4}$ convergence rate (up to $\delta_{\mathcal{F}}$ error) that is present in all of our mechanisms and reflects miscalibration of the principal. In that sense, the fact that the agent is also learning does not deteriorate the principal's performance at all. Although this has not been our emphasis so far, it would be interesting to see whether (or identify conditions under which) other simple learning algorithms satisfy our behavioral assumptions with decent rates of convergence.

	\section{Calibrated Forecasting}\label{Ap3}

In this appendix, we describe our forecasting algorithm and bound its miscalibration.%The algorithm is as follows: discretize the space of distributions $\Delta(\mathcal{Y})$, set up an online prediction problem with the quadratic scoring rule as the loss function, and apply the reduction in \textcite{BM07} to the exponential weights algorithm. The output is our forecast.

A linearly homogeneous, differentiable function $H$ is \textit{strongly convex} with parameter $\xi$ if
\[
H(\pi)\geq H(\tilde{\pi})+\nabla H(\tilde{\pi})\cdot(\pi-\tilde{\pi})+\frac{\xi}{2}\Vert\pi-\tilde{\pi}\Vert^2_2
\]
The gradient of $H$ describes a \textit{proper scoring rule} $S(\pi)=\nabla H(\pi)$ where $H(\pi)=\pi\cdot S(\pi)$ \parencite{McCarthy56}. A scoring rule $S:\Delta(\mathcal{Y})\to\reals^{\mathcal{Y}}$ is proper if the report $\tilde{\pi}$ that maximizes the $\pi$-expected score is the distribution $\pi$. Strong convexity of $H$ can be thought of as sharpening the incentives for truth-telling \parencite{Boutilier12}.

Specifically, consider the quadratic scoring rule (see e.g. \cite{JNW08})
\[
S_y(\pi)=2\pi(y)-\sum_{\tilde{y}\in\mathcal{Y}}\pi(\tilde{y})^2
\]
where $H(\pi)=\Vert\pi\Vert^2_2$ is strongly convex with $\xi=2$.

Recall that the mechanism $\sigma^*$ is supposed to be nonresponsive. As a consequence, we cannot determine the principal's beliefs $\pi_t$ in a given period based on his historical payoffs. To ensure that the beliefs $\pi_t$ are well-calibrated, we consider an auxilliary online learning problem based on a scoring rule $S$. In period $t$, the principal makes a prediction $\pi_t$ with loss function $S_{y_t}(\pi_t)$. Specifically, the predictions come from the discretized set of priors $\mathcal{F}_1$, formed by choosing a representative element $\pi$ from each set in the partition $\mathcal{S}_{\mathcal{F}}$. In terms of the score, this approximation has limited cost. Let $\pi=\left[\hat{\pi}_{F}\right]_{\mathcal{F}_1}$ be the belief $\pi\in\mathcal{F}_1$ that is closest to the empirical distribution $\hat{\pi}_{F}$. Then
\begin{align*}
	S_y(\hat{\pi}_{F})-S_y(\pi)
	&\leq S_y(\hat{\pi}_{F})-S_y(\hat{\pi}_{F}-\delta_{\mathcal{F}})\\
	&=2\hat{\pi}_{F}(y)-\sum_{\tilde{y}\in\mathcal{Y}}\hat{\pi}_{F}(\tilde{y})^2-2(\hat{\pi}_{F}(y)-\delta_{\mathcal{F}})+\sum_{\tilde{y}\in\mathcal{Y}}(\hat{\pi}_{F}(\tilde{y})-\delta_{\mathcal{F}})^2\\
	&=2\delta_{\mathcal{F}}-\sum_{\tilde{y}\in\mathcal{Y}}\hat{\pi}_{F}(\tilde{y})^2+\sum_{\tilde{y}\in\mathcal{Y}}(\hat{\pi}_{F}(\tilde{y})-\delta_{\mathcal{F}})^2\\
	&\leq2\delta_{\mathcal{F}}\numberthis\label{E39}
\end{align*}
where $\hat{\pi}_{F}-\delta_{\mathcal{F}}$ is shorthand notation for the vector $\left(\hat{\pi}_{F}(y)-\delta_{\mathcal{F}}\right)_{y\in\mathcal{Y}}$.

In this auxilliary problem, the exponential weights algorithm (see e.g. \cite{CL06}) obtains expected external regret at most
\[
\sqrt{2T\log |S_\mathcal{F}|}
\]
relative to the best-in-hindsight $\pi^*_{F}\in\mathcal{F}_1$. A reduction due to \textcite{BM07} (theorem 5) translates this into a bound on expected internal regret of
\[
|S_\mathcal{F}|\sqrt{2T\log |S_\mathcal{F}|}
\]
relative to the best-in-hindsight $\pi^*_{F}\in\mathcal{F}_1$. Combine this with the maximum approximation error \eqref{E39} to bound the expected internal regret relative to the best-in-hindsight contextual belief $\pi\in\Delta(\mathcal{Y})$, which must be the empirical distribution $\hat{\pi}_{F}$ since $S$ is proper. Specifically,
\begin{equation}\label{Ekappa}
|S_\mathcal{F}|\sqrt{2T\log |S_\mathcal{F}|}+2T\delta_{\mathcal{F}}\geq\ex{\sum_{\pi\in\mathcal{F}_1}n_{F}\hat{\pi}_{F}\cdot\left(S(\hat{\pi}_{F})-S(\pi)\right)}
\end{equation}
where $n_{F}$ is the number of periods $t$ where $\left[\pi_t\right]_{\mathcal{F}_1}=F_t$. This is a statement about the expected scoring loss, where the expectation reflects randomization in the algorithm. Our next result, lemma \ref{L12}, translates this into a statement about the $l_1$ distance between the principal's belief $\pi$ and the empirical distribution $\hat{\pi}_{F}$.

\begin{lemma}\label{L12}
	Let $S$ be a proper scoring rule where the optimal expected score $H$ is $\xi$-strongly convex. Then
	%\begin{equation}\label{Eiota}
	%\sqrt{\frac{2|\mathcal{Y}||S_\mathcal{F}|\kappa}{\xi}}
	%\geq\frac{1}{T}\sum_{\pi\in\mathcal{F}_1}n_{F}d_1(\pi,\hat{\pi}_{F})
	%\end{equation}
	\[
	\sqrt{\frac{2|\mathcal{Y}|\kappa}{\xi}}\geq\frac{1}{T}\sum_{\pi\in\mathcal{F}_1}n_{F}d_1(\pi,\hat{\pi}_{F})
	\]
	where
	\[
	\kappa=\frac{1}{T}\sum_{\pi\in\mathcal{F}_1}n_{F}\hat{\pi}_{F}\cdot\left(S(\hat{\pi}_{F})-S(\pi)\right)
	\]
\end{lemma}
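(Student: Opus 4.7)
The plan is to recognize that the per-context scoring loss $\hat{\pi}_{F}\cdot\left(S(\hat{\pi}_{F})-S(\pi)\right)$ is exactly the Bregman divergence of $H$ evaluated at $(\hat{\pi}_F,\pi)$, then chain together strong convexity, norm equivalence, and Jensen's inequality.

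First, I would unfold the definitions. Since $S=\nabla H$ and $H(\pi)=\pi\cdot S(\pi)$ by linear homogeneity, a short computation gives
\[
D_H(\hat{\pi}_F,\pi)\;=\;H(\hat{\pi}_F)-H(\pi)-\nabla H(\pi)\cdot(\hat{\pi}_F-\pi)\;=\;\hat{\pi}_F\cdot\bigl(S(\hat{\pi}_F)-S(\pi)\bigr),
\]
so the term inside $\kappa$ is simply a Bregman divergence. By $\xi$-strong convexity of $H$, this divergence is bounded below by $\tfrac{\xi}{2}\|\hat{\pi}_F-\pi\|_2^2$.

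Second, I would convert $\ell_2$ to $\ell_1$. Cauchy--Schwarz applied to vectors in $\mathbb{R}^{|\mathcal{Y}|}$ gives $\|x\|_1\le \sqrt{|\mathcal{Y}|}\,\|x\|_2$, hence $\|\hat{\pi}_F-\pi\|_2^2\ge \tfrac{1}{|\mathcal{Y}|}d_1(\hat{\pi}_F,\pi)^2$. Combining the two inequalities and summing over $F$ with weights $n_F/T$ yields
\[
\kappa\;\ge\;\frac{\xi}{2|\mathcal{Y}|}\cdot\frac{1}{T}\sum_{\pi\in\mathcal{F}_1}n_F\, d_1(\hat{\pi}_F,\pi)^2.
\]

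Third, because $\sum_F n_F/T=1$ (every period falls in exactly one bin), the weights $n_F/T$ form a probability distribution over contexts, so Jensen's inequality applied to the convex function $x\mapsto x^2$ gives
\[
\frac{1}{T}\sum_{F}n_F\,d_1(\hat{\pi}_F,\pi)^2\;\ge\;\Bigl(\frac{1}{T}\sum_{F}n_F\,d_1(\hat{\pi}_F,\pi)\Bigr)^{\!2}.
\]
Chaining this with the previous bound and taking square roots delivers the claimed inequality $\sqrt{2|\mathcal{Y}|\kappa/\xi}\ge \tfrac{1}{T}\sum_F n_F\,d_1(\hat{\pi}_F,\pi)$.

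The only step that requires any care is the Bregman-divergence identity, which hinges on linear homogeneity of $H$ (so that $H(\pi)=\pi\cdot\nabla H(\pi)$ holds, which is exactly what makes $S=\nabla H$ a proper scoring rule in the McCarthy sense). Everything else is a routine chain of Cauchy--Schwarz and Jensen inequalities, and no combinatorial or probabilistic argument beyond the observation that $\{n_F/T\}$ is a probability distribution over contexts is needed.
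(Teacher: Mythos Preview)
Your proof is correct and follows essentially the same route as the paper's: both identify the per-context term as $H(\hat\pi_F)-\hat\pi_F\cdot\nabla H(\pi)$, invoke strong convexity to get the $\tfrac{\xi}{2}\|\cdot\|_2^2$ lower bound, convert to $\ell_1$ via $\|\cdot\|_1\le\sqrt{|\mathcal{Y}|}\,\|\cdot\|_2$, and finish with a Cauchy--Schwarz/Jensen step. The only cosmetic difference is that you name the Bregman divergence and invoke Jensen explicitly, whereas the paper phrases the final step as the inequality $\|\cdot\|_1\le\sqrt{T}\,\|\cdot\|_2$ applied to a $T$-dimensional vector with repeated entries; these are the same argument.
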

\begin{proof}
	Consider the principal's $\pi$-expected regret from predicting $\tilde{\pi}$:
	\begin{align*}
		\pi\cdot\left(S(\pi)-S(\tilde\pi)\right)
		&=H(\pi)-\pi\cdot\nabla H(\tilde{\pi})\\
		&\geq H(\tilde{\pi})-\nabla H(\tilde{\pi})\cdot\tilde{\pi}+\frac{\xi}{2}\Vert\pi-\tilde{\pi}\Vert^2_2\\
		&=\frac{\xi}{2}\Vert\pi-\tilde{\pi}\Vert^2_2\\
		&\geq\frac{\xi}{2}\left(\frac{1}{\sqrt{|\mathcal{Y}|}}\Vert\pi-\tilde{\pi}\Vert_1\right)^2\\
		&=\frac{\xi}{2|\mathcal{Y}|}\Vert\pi-\tilde{\pi}\Vert_1^2
	\end{align*}
	where the second-to-last line follows from $\Vert\cdot\Vert_1\leq|\mathcal{Y}|^{1/2}\Vert\cdot\Vert_2$. It follows that his regret in the auxilliary problem satisfies
	\[
	\kappa\geq\frac{1}{T}\sum_{\pi\in\mathcal{F}_1}n_{F}\frac{\xi}{2|\mathcal{Y}|}d_1\left(\pi,\hat{\pi}_{F}\right)^2
	\]
	where, implicitly, $F$ is the forecast context such that $\pi\in F$. Take the square root of both sides of this inequality:
	\begin{align*}
		\sqrt{\kappa}
		&\geq\sqrt{\frac{1}{T}\sum_{\pi\in\mathcal{F}_1}n_{F}\frac{\xi}{2|\mathcal{Y}|}d_1\left(\pi,\hat{\pi}_{F}\right)^2}\\
		&\geq\frac{1}{\sqrt{T}}\cdot\frac{1}{\sqrt{T}}\sum_{\pi\in\mathcal{F}_1}n_{F}\sqrt{\frac{\xi}{2|\mathcal{Y}|}}d_1\left(\pi,\hat{\pi}_{F}\right)\\
		&\geq\sqrt{\frac{\xi}{2|\mathcal{Y}|}}\cdot\frac{1}{T}\sum_{\pi\in\mathcal{F}_1}n_{F}d_1\left(\pi,\hat{\pi}_{F}\right)
	\end{align*}
	where the first line is the $l^2$ norm of a vector with $T$ entries, the second line is the $l^1$ norm of that same vector, and the inequality follows from $\Vert\cdot\Vert_1\leq T^{1/2}\Vert\cdot\Vert_2$. Collapse these inequalities and rearrange terms to obtain the desired result.
\end{proof}

If we use the quadratic scoring rule, this lemma implies
\begin{equation}\label{E4}
\ex{\frac{1}{T}\sum_{\pi\in\mathcal{F}_1}n_{F}d_1(\pi,\hat{\pi}_{F})}
\leq
\sqrt{|\mathcal{Y}||S_\mathcal{F}|\sqrt{\frac{2\log |S_\mathcal{F}|}{T}}+2|\mathcal{Y}|\delta_{\mathcal{F}}}
\end{equation}
To optimize this bound, up to log factors, set $\delta_{\mathcal{F}}=\left(\frac{1}{T}\right)^{\frac{1}{2|\mathcal{Y}|+2}}$, assuming $|S_\mathcal{F}|=\left(\left(\frac{1}{\delta_{\mathcal{F}}}\right)^{|\mathcal{Y}|}\right)$.
	
	\section{Generalized Results}\label{Ap4}

Upper bounds on CIR constitute our rationality assumptions for the agent. However, our results also rely on informational assumptions. Sections \ref{S4}, \ref{S5}, and \ref{S6} consider environments that differ primarily by how ``informed'' the agent appears, relative to the principal. In all three cases, however, we require the agent to be at least as informed as the principal. What is the principal's information? Recall that our mechanisms $\sigma^*$ will be forecast mechanisms \eqref{D2}. A calibrated learning algorithm -- which we specify later on -- will produce a sequence of forecasts $\pi_1,\ldots,\pi_T$. It is possible that these forecasts will become correlated with the state, e.g. if there is a trend in the data. We do not rule this out; however, if our forecasts inadvertently pick up useful information, this information should be available to the agent as well (either implicitly or because we publish $\pi_t$ along with $p_t$).

The notion of forecastwise regret (and forecastwise CIR) formalizes what we mean by the principal's ``information'' being available to the agent. The agent's benchmark includes the principal's forecast as additional context. Formally, define the forecast space $\mathcal{F}=\Delta(\mathcal{Y})$. Fix a small constant $\delta_\mathcal{F}>0$ and consider a finite partition $S_\mathcal{F}$ of $\mathcal{F}$ where $\pi,\tilde{\pi}\in F\in S_\mathcal{F}$ implies $d_\infty\left(\pi,\tilde{\pi}\right)\leq\delta_{\mathcal{F}}$. Let $\mathcal{F}_1\subseteq\mathcal{F}$ contain a single distribution $\pi\in F$ for every $F\in S_\mathcal{F}$. %In a slight abuse of notation, let $t\in F$ indicate $F_t=F$. Let $n_F=\sum_{t=1}^T\textbf{1}(t\in F)$ indicate the number of periods with forecast context $F$.

\begin{defn}
	Let the \textit{information partition} combine the forecast and CIR context, i.e.
	$$
	\mathcal{I}=S_{\mathcal{F}}\times\left(S_{\mathcal{R}}\right)^\Sigma
	$$
	and let the information $I_t\in\mathcal{I}$ in period $t$ be the unique set that satisfies
	$$
	\left(\pi_t,r_t^*,(r^p_t)_{p\in\mathcal{P}_0}\right)\in I_t
	$$
\end{defn}

%In a slight abuse of notation, let $t\in I$ indicate $I_t=I$, so that $\mathcal{I}$ partitions the sequence of periods. Let $n_I=\sum_{t=1}^T\textbf{1}(t\in I)$ indicate the number of periods where information $I$ was available.

\begin{defn}[FCIR]
	The agent's \emph{forecastwise CIR} relative to a modification rule $h:\mathcal{I}\to\mathcal{R}$ is
	\[
	\fcir{h}=\frac{1}{T}\sum_{t=1}^T\left(U(h(I_t),p_t,y_t)-U(r_t,p_t,y_t)\right)
	\]
	The FCIR relative to the best-in-hindsight modification rule is
	$
	\fcirs=\max_{h:\mathcal{I}\to\mathcal{R}}\fcir{h}
	$.
\end{defn}

To state our assumption, we need to define a forecastwise version of ER, just as we defined a forecastwise version of CIR at the end of section \ref{S3}. Let the forecast context $F_t\in S_{\mathcal{F}}$ in period $t$ be the unique set that satisfies $\pi_t\in F_t$.

\begin{defn}[FER]
	The agent's \emph{forecastwise external regret} relative to a strategy $h:S_{\mathcal{F}}\to\mathcal{R}$ is
	\[
	\fer{h}=\frac{1}{T}\sum_{t=1}^T\left(U(h(F_t),p_t,y_t)-U(r_t,p_t,y_t)\right)
	\]
	The FER relative to the best-in-hindsight strategy is
	$
	\fers=\max_{h:\mathcal{F}\to\mathcal{R}}\fer{h}
	$.
\end{defn}

\begin{theorem}\label{T4}
	Assume regularity \eqref{A1} and $\epsilon$-bounded FCIR \eqref{A4}. There exists a nonresponsive mechanism $\sigma^*$ parameterized by the agent's learner $L$ and a constant $\bar{\epsilon}>0$ such that
	\begin{enumerate}
		\item The principal's regret is bounded, i.e.
		\begin{align*}
			\ex[\sigma^*]{\pregs}
			\leq&
			\frac{1}{T}\sum_{I\in\mathcal{I}}n_I\Delta(\hat{\pi}_I,\bar\epsilon)
			+\frac{1}{\bar{\epsilon}}\left(O(\epsilon)
			+\tilde{O}\left(T^{-1/4}\sqrt{|\mathcal{Y}||S_{\mathcal{F}}||\mathcal{S}_R|^{(|\Sigma_0|+|S_\mathcal{P}|)/2}}\right)
			+O\left(\delta_{\mathcal{F}}^{1/2}\right)+O(\delta_{\mathcal{P}})\right)
		\end{align*}
	\end{enumerate}
\end{theorem}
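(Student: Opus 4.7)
The plan is to follow the template of Theorems \ref{T1} and \ref{T2}'s proofs, decomposing the regret by information context $I \in \mathcal{I}$ and using $\epsilon$-bounded FCIR together with Lemma \ref{L1} to reduce the online problem to a per-context stage game with empirical common prior $\hat{\pi}_I$. The novel ingredient relative to Theorem \ref{T1} is that assumption \ref{A4} (continuity of the $\bar\epsilon$-robust policy in the prior) can be replaced by a Lipschitz argument at the level of the scalar payoffs $\alpha_p$ and $\beta_p$, which holds unconditionally given Lemma \ref{L1}.

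First I would construct $\sigma^*$. Given $L$, the mechanism simulates, for each $p \in \mathcal{P}_0$, the counterfactual response $r^p_t$ along with $r^*_t$ for the proposed mechanism itself. It runs a separate calibrated no-IR forecasting algorithm from Appendix \ref{Ap3} for each response-context vector; the resulting forecast $\pi_t$ determines the discretized context $F_t$ and the policy $p_t = p^*(\pi_t, \bar\epsilon)$. The mechanism is nonresponsive because all simulations depend only on the state history and $L$. Within each $I = (F, R)$, the fact that $I$ encodes the discretized responses forces the actual and counterfactual responses to be approximately constant (up to $\delta_\mathcal{R}$); likewise, the forecasts all lie within $\delta_\mathcal{F}$ of a common $\pi_F$, so the policy lies within $\delta_\mathcal{P}$ of $p_I := p^*(\pi_F, \bar\epsilon)$. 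The calibration bound \eqref{E4}, applied within each response context, gives $\pi_F \approx \hat{\pi}_I$ in $l_1$ on average, with the rate quoted in the theorem.

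The crucial analytical step is a Lipschitz bound: $\alpha_p(\pi, \bar\epsilon)$ is $O(1/\bar\epsilon)$-Lipschitz in $\pi$ under $l_1$, and likewise for $\beta_p$. Given any minimizer $\mu^*$ achieving $\alpha_p(\pi, \bar\epsilon)$, the same $\mu^*$ remains feasible for the $(\bar\epsilon + 2\Vert\pi - \tilde\pi\Vert_1)$-problem at $\tilde\pi$; Lemma \ref{L1} converts the extra slack into an $O(\Vert\pi - \tilde\pi\Vert_1/\bar\epsilon)$ payoff gap. Because $p_I$ maximizes $\alpha_\cdot(\pi_F, \bar\epsilon)$ by construction, the Lipschitz property transports optimality to $\hat{\pi}_I$: $\alpha_{p_I}(\hat{\pi}_I, \bar\epsilon) \geq \max_p \alpha_p(\hat{\pi}_I, \bar\epsilon) - O(\Vert\pi_F - \hat{\pi}_I\Vert_1/\bar\epsilon)$. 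This is exactly the conclusion we would have drawn from assumption \ref{A4}, now obtained for free at the scalar level.

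Finally, Lemma \ref{L1} applied to the agent's conditional regret $\epsilon_I$ (which satisfies $\frac{1}{T}\sum_I n_I \epsilon_I \leq \epsilon + O(\delta_\mathcal{P} + \delta_\mathcal{R})$ by $\epsilon$-bounded FCIR, and similarly for the analogues under each $\sigma^p$) shows that the principal's realized payoff in $I$ is at least $\alpha_{p_I}(\hat{\pi}_I, \bar\epsilon) - O(\epsilon_I/\bar\epsilon)$, while any fixed-policy mechanism's payoff in $I$ is at most $\max_p \beta_p(\hat{\pi}_I, \bar\epsilon) + O(\epsilon/\bar\epsilon)$ on average. Combining these with the Lipschitz transport yields regret in $I$ at most $\Delta(\hat{\pi}_I, \bar\epsilon) + O(\epsilon_I/\bar\epsilon) + O(\Vert\pi_F - \hat{\pi}_I\Vert_1/\bar\epsilon)$ plus absorbed discretization error. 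Summing over $I$ weighted by $n_I/T$ and taking expectations produces the claimed bound. I expect the main obstacle to be carefully justifying the scalar-level Lipschitz argument and verifying that it genuinely replaces assumption \ref{A4}; the remainder of the proof is bookkeeping that closely parallels Theorem \ref{T1}.
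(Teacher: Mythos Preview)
Your proposal is correct and follows the paper's approach closely: the paper proves the scalar Lipschitz bound you describe as a separate lemma (showing $\alpha_p(\pi,\epsilon)\ge\alpha_p(\tilde\pi,\epsilon)-\tfrac{2d_1(\pi,\tilde\pi)}{\epsilon}-d_1(\pi,\tilde\pi)$ and the resulting transport of near-optimality for $p^*$), then proves lower and upper bounds on the principal's payoff via Lemma~\ref{L1} in two sub-lemmas exactly as you outline, and finally bounds the miscalibration $\iota$ using the per-context calibration algorithm together with a subadditivity-of-$\ell_1$ step to pass from the forecasting context to the information context.

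One small framing correction: the Lipschitz transport is already present in the proof of Theorem~\ref{T1}; what removes the need for Assumption~\ref{A4} here is not the Lipschitz argument per se but the fact that the FCIR information partition includes the forecast cell $F$, so within each $I$ there is a single forecast and no need to aggregate $p^*(\pi_F,\bar\epsilon)$ across multiple $F$'s---which is precisely the step in Theorem~\ref{T1}'s proof where Assumption~\ref{A4} is invoked.
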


\begin{assume}[Alignment]\label{A10}
	The stage game is $(\epsilon,M_1,M_2)$-aligned if, for all signals $\gamma$,
	\[
	\underbrace{\left(\phi_{p^*(\pi,\epsilon)}(\pi,\epsilon)-\alpha_{p^*(\pi,\epsilon)}(\pi,\gamma,\epsilon)\right)}_\text{maximum downside of $\gamma$ for the principal}
	\leq
	M_1\underbrace{\max_{r,r_J\in\mathcal{R}}\ex[y\sim\pi]{\ex[J\sim\tilde\gamma(\cdot,y)]{U(r_J,p^*(\pi,\epsilon),y)}-U(r,p^*(\pi,\epsilon),y)}}_\text{usefulness of $\gamma$ to the agent}+M_2
	\]
	and, for all policies $p\in\mathcal{P}_0$,
	\[
	\underbrace{\left(\beta_p(\pi,\gamma,\epsilon)-\phi_{p^*(\pi,\epsilon)}(\pi,\epsilon)\right)}_\text{maximum upside of $\gamma$ for the principal}
	\leq
	M_1\underbrace{\max_{r,r_J\in\mathcal{R}}\ex[y\sim\pi]{\ex[J\sim\tilde\gamma(\cdot,y)]{U(r_J,p,y)}-U(r,p,y)}}_\text{usefulness of $\gamma$ to the agent}+M_2
	\]
\end{assume}

\begin{theorem}\label{T5}
	Assume regularity \eqref{A1}, $\epsilon$-bounded FCIR \eqref{A4}, $\tilde{\epsilon}$-lower-bounded FER \eqref{A5}, and ($\bar{\epsilon},M_1,M_2$)-alignment \eqref{A10}. There exists a nonresponsive mechanism $\sigma^*$ parameterized by $\bar{\epsilon}$ such that
	\begin{enumerate}
		\item The principal's regret is bounded, i.e.
		\begin{align*}
			\ex[\sigma^*]{\pregs}
			\leq&
			\frac{1}{T}\sum_{F\in S_\mathcal{F}}n_F\Delta(\hat{\pi}_F,\bar\epsilon)
			+\frac{1}{\bar{\epsilon}}\left(O(\epsilon)+\tilde{O}\left(T^{-1/4}\sqrt{|\mathcal{Y}||S_{\mathcal{F}}|}\right)+O\left(\sqrt{\delta_{\mathcal{F}}}\right)+O(\delta_{\mathcal{P}})\right)\\
			&+O(\tilde{\epsilon})
			+M_1\left(O(\tilde{\epsilon})+O(\epsilon)+\tilde{O}\left(T^{-1/4}\sqrt{|\mathcal{Y}||S_{\mathcal{F}}|}\right)+O\left(\sqrt{\delta_{\mathcal{F}}}\right)+O(\delta_{\mathcal{P}})\right)
			+O(M_2)
		\end{align*}
	\end{enumerate}
\end{theorem}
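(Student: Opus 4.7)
The plan is to mirror the argument for Theorem \ref{T2} sketched in Section \ref{S7}, with two substitutions: assumption \ref{A4} is dropped in favor of keeping the weighted $\Delta$-sum explicit, and assumption \ref{A7} is replaced by alignment \eqref{A10}. The mechanism is the forecast variant of mechanism \ref{M2}: a generic no-internal-regret predictor over $\mathcal{F}_1$ under the quadratic scoring rule produces $\pi_t$, and the principal plays the discretized $\bar\epsilon$-robust policy $p^*(\pi_t,\bar\epsilon)$. Partitioning the horizon by forecast context $F\in S_{\mathcal{F}}$, the chosen policy is constant on $F$ up to the $O(\delta_{\mathcal{P}})$ cover, and Lemma \ref{L12} combined with \eqref{E4} gives
\[
\ex{\tfrac{1}{T}\sum_{F} n_F\, d_1(\pi_F,\hat\pi_F)} \;\le\; \tilde O\!\bigl(T^{-1/4}\sqrt{|\mathcal Y||S_{\mathcal F}|}\bigr) + O\!\bigl(\sqrt{\delta_{\mathcal F}}\bigr),
\]
so by regularity the analysis can equivalently be carried out in the stage game with empirical prior $\hat\pi_F$, the calibration residual entering the final bound linearly.

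Next, inside each context $F$, $\epsilon$-bounded FCIR implies that on every information cell $I\subseteq F$ the agent's response under $\sigma^*$ and under each constant counterfactual $\sigma^p$ with $p\in\mathcal{P}_0$ is within $\delta_{\mathcal{R}}$ of a cell-specific constant $r_I$. Defining the empirical information structure $\hat\gamma_F$ from conditional cell frequencies exactly as in Section \ref{S8} reduces the realized average payoffs on $F$ (up to $O(\delta_{\mathcal{R}})+O(\delta_{\mathcal{P}})$ regularity slack) to expected payoffs in the static stage game with prior $\hat\pi_F$, signal structure $\hat\gamma_F$, and responses $(r_I)$. The aggregate agent suboptimality in that static game is bounded by $\epsilon T/n_F$, so Lemma \ref{L1} sandwiches the principal's average payoff on $F$ between $\alpha_{p^*(\hat\pi_F,\bar\epsilon)}(\hat\pi_F,\hat\gamma_F,\bar\epsilon) - O(\epsilon T/(n_F\bar\epsilon))$ and $\max_{p}\beta_p(\hat\pi_F,\hat\gamma_F,\bar\epsilon) + O(\epsilon T/(n_F\bar\epsilon))$, with the upper side also bounding the best-in-hindsight counterfactual mechanism. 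The per-context regret is therefore at most $\beta_{\max} - \alpha_{p^*}$ in the stage game with signal $\hat\gamma_F$, plus $O(\epsilon T/(n_F\bar\epsilon))$.

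Alignment is what collapses the $\hat\gamma_F$-dependence. Part one of \eqref{A10} upgrades $\alpha_{p^*}(\hat\pi_F,\hat\gamma_F,\bar\epsilon)$ to $\alpha_{p^*}(\hat\pi_F,\bar\epsilon) - M_1\phi_{p^*}(\hat\pi_F,\hat\gamma_F) - M_2$, and part two upgrades $\beta_p(\hat\pi_F,\hat\gamma_F,\bar\epsilon)$ to $\max_{\tilde p}\beta_{\tilde p}(\hat\pi_F,\bar\epsilon) + M_1\phi_p(\hat\pi_F,\hat\gamma_F) + M_2$ for every $p\in\mathcal{P}_0$. Subtracting and recognising $\Delta(\hat\pi_F,\bar\epsilon)$ as the gap yields a per-context regret bound of $\Delta(\hat\pi_F,\bar\epsilon) + M_1(\phi_{p^*}+\phi_p)(\hat\pi_F,\hat\gamma_F) + O(M_2) + O(\epsilon T/(n_F\bar\epsilon))$. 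Weighted by $n_F/T$, summed in $F$, and averaged over the algorithm's randomness, these four pieces produce, respectively, the $\tfrac{1}{T}\sum_F n_F\Delta(\hat\pi_F,\bar\epsilon)$ leading term, an $M_1$-scaled block, the $O(M_2)$ term, and the $\bar\epsilon^{-1}$-scaled block that absorbs the calibration and discretization residuals of paragraph one.

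The main obstacle is proving the per-context bound $\tfrac{1}{T}\sum_F n_F\phi_p(\hat\pi_F,\hat\gamma_F) = O(\tilde\epsilon) + O(\epsilon) + \tilde O(T^{-1/4}\sqrt{|\mathcal{Y}||S_{\mathcal{F}}|}) + O(\sqrt{\delta_{\mathcal{F}}}) + O(\delta_{\mathcal{P}})$ needed to feed the $M_1$-scaled block (and to produce the stand-alone $O(\tilde\epsilon)$ term via a parallel argument that compares the agent's realised behaviour directly to the best-in-hindsight forecast-conditioned response without passing through alignment). The key observation is that $\phi_p$ is precisely the static-game gap between a $\hat\gamma_F$-informed agent and an uninformed one, whose empirical counterpart on $F$ is exactly $\fcirs-\fers$; the hypotheses $\fcirs\le\epsilon$ and $\fers\ge-\tilde\epsilon$ then bound this gap by $\epsilon+\tilde\epsilon$, and the passage from empirical to static incurs only the usual calibration and regularity residuals, which is precisely why these residuals reappear inside the $M_1$ factor of the advertised bound. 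Carrying out this FER-to-$\phi$ reduction while keeping the calibration error from being amplified by an extra $1/\bar\epsilon$ inside alignment's $M_1$ factor is the delicate bookkeeping step, and it is what makes alignment deliver a clean additive $M_1, M_2$ contribution rather than a multiplicative one.
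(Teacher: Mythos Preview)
Your proposal is correct and follows essentially the same route as the paper's proof: the same forecast mechanism, the same decomposition by forecast context $F$ with the empirical information structure $\hat\gamma_F$, the same per-cell application of Lemma~\ref{L1} aggregated to $\alpha_{p^*}(\hat\pi_F,\hat\gamma_F,\bar\epsilon)$ and $\beta_p(\hat\pi_F,\hat\gamma_F,\bar\epsilon)$, and the same identification of $\phi_p(\hat\pi_F,\hat\gamma_F)$ with the empirical gap $\epsilon_F+\tilde\epsilon_F$ (i.e.\ FCIR minus FER on $F$) to feed the alignment assumption. One minor caveat: your per-context slack $O(\epsilon T/(n_F\bar\epsilon))$ should really be written as $\epsilon_F/\bar\epsilon$ with $\sum_F n_F\epsilon_F\le\epsilon T$, and the stand-alone $O(\tilde\epsilon)$ term you flag does not actually appear in the paper's derived bound (only the $M_1$-scaled copy does), so no ``parallel argument'' is needed.
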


\begin{theorem}\label{T6}
	Assume regularity \eqref{A1} and $\epsilon$-bounded FCIR \eqref{A4}. There exists a nonresponsive mechanism $\sigma^*$ parameterized by a constant $\bar{\epsilon}>0$ such that
	\begin{enumerate}
		\item The principal's regret is bounded, i.e.
		\begin{align*}
			\ex[\sigma^*]{\pregs}
			\leq&
			\frac{1}{T}\sum_{F\in S_\mathcal{F}}n_F\Delta(\hat{\pi}_F,\bar\epsilon)
			+\frac{1}{\bar{\epsilon}}\left(O(\epsilon)+\tilde{O}\left(T^{-1/4}\sqrt{|\mathcal{Y}||S_{\mathcal{F}}|}\right)+O\left(\delta_{\mathcal{F}}^{1/2}\right)+O(\delta_{\mathcal{P}})\right)
		\end{align*}
	\end{enumerate}
\end{theorem}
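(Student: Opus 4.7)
\emph{Proof plan.} The statement is most naturally read as the generalization of theorem \ref{T3} to the setting where the smoothness assumption on the cost of informational robustness is dropped, so the $\Delta(\hat{\pi}_F,\bar\epsilon)$ appearing in the bound should be interpreted as $\nabla(\hat{\pi}_F,\bar\epsilon)$. The mechanism I would analyze is the natural analog of mechanism \ref{M3}: use the calibrated forecasting algorithm of appendix \ref{Ap3} to produce forecasts $\pi_t\in\mathcal{F}_1$, and in each period $t$ play the discretization of the $\bar\epsilon$-informationally-robust policy $p^\dagger(\pi_t,\bar\epsilon)$. Because the forecast conditions only on the forecast and realized-state histories, $\sigma^*$ is nonresponsive, and lemma \ref{L12} gives the expected calibration bound
\[
\frac{1}{T}\sum_{F\in S_\mathcal{F}}n_F\,d_1(\pi_F,\hat{\pi}_F)\;=\;\tilde O\!\left(T^{-1/4}\sqrt{|\mathcal{Y}||S_\mathcal{F}|}\right)+O(\sqrt{\delta_\mathcal{F}}).
\]

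The key reduction is that within a single forecast context $F\in S_\mathcal{F}$ the realized interaction reduces, up to small errors, to the stage game with prior $\hat{\pi}_F$ and empirical information structure
\[
\hat{\gamma}_F(I,y)=\frac{n_I\,\hat{\pi}_I(y)}{n_F\,\hat{\pi}_F(y)}\,\mathbf{1}(I\subseteq F).
\]
By regularity the chosen policy is within $O(\delta_\mathcal{P})$ of the single representative $p_F^\dagger=p^\dagger(\pi_F,\bar\epsilon)$, and by the definition of $I_t$ the agent's response inside $F$ is close to a single $r_I$ on each signal $I\subseteq F$; a direct expansion then shows that the realized averages of $U$ and $V$ inside $F$ equal their expectations under $(\hat{\pi}_F,\hat{\gamma}_F)$ with agent strategy $I\mapsto r_I$. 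The same identity holds for each counterfactual $\sigma^p$ with its own empirical structure $\hat{\gamma}_F^p$. Finally, $\epsilon$-bounded FCIR gives $\frac{1}{T}\sum_I n_I\epsilon_I\leq\epsilon$ (up to $\delta_\mathcal{R}$-discretization) for the within-context regret $\epsilon_I$, so inside $F$ the induced strategy is $\epsilon_F^*$-optimal with $\epsilon_F^*=\frac{1}{n_F}\sum_{I\subseteq F}n_I\epsilon_I$.

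With the reduction in hand, the per-context estimate falls out from the definitions. The principal's utility in $F$ is at least $\alpha_{p_F^\dagger}(\hat{\pi}_F,\hat{\gamma}_F,\epsilon_F^*)$; lemma \ref{L2} converts this to $\alpha_{p_F^\dagger}(\hat{\pi}_F,\hat{\gamma}_F,\bar\epsilon)-O(\epsilon_F^*/\bar\epsilon)$; a Lipschitz-in-prior perturbation (combined with a second use of lemma \ref{L2} to absorb the resulting shift of the $\bar\epsilon$-slack constraint) replaces $\hat{\pi}_F$ by $\pi_F$ at cost $O(d_1(\pi_F,\hat{\pi}_F)/\bar\epsilon)$; and then $\bar\epsilon$-informational-robustness of $p_F^\dagger$ at $\pi_F$ gives $\alpha_{p_F^\dagger}(\pi_F,\hat{\gamma}_F,\bar\epsilon)\geq\inf_\gamma\alpha_{p_F^\dagger}(\pi_F,\gamma,\bar\epsilon)=\max_p\inf_\gamma\alpha_p(\pi_F,\gamma,\bar\epsilon)$. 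The symmetric upper bound for each comparator $\sigma^p$ with $p\in\mathcal{P}_0$ gives $\max_{\tilde p}\sup_\gamma\beta_{\tilde p}(\hat{\pi}_F,\gamma,\bar\epsilon)$ plus the same kind of error. Their difference is $\nabla(\hat{\pi}_F,\bar\epsilon)$; summing with weights $n_F/T$ and collecting the calibration, discretization, and FCIR contributions yields the stated bound.

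The hard part, and what distinguishes this argument from the proof of theorem \ref{T3}, is that without assumption \ref{A8} I cannot pass directly from $p^\dagger(\pi_F,\bar\epsilon)$ to $p^\dagger(\hat{\pi}_F,\bar\epsilon)$. Instead the calibration error must be handled by perturbing the prior inside $\alpha_p(\cdot,\gamma,\bar\epsilon)$ and $\beta_p(\cdot,\gamma,\bar\epsilon)$ \emph{uniformly in the information structure} $\gamma$, which requires checking that both the objective and the constraint in the defining optimizations are Lipschitz in $\pi$ with constants independent of $\gamma$. Regularity gives the objective part immediately, but because the agent's $\bar\epsilon$-approximate best-response set itself shifts when the prior moves, one must iterate lemma \ref{L2} once to convert the perturbation of the constraint into the $1/\bar\epsilon$ factor that appears in the bound. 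Once this uniform estimate is in place, the remaining aggregation across forecast contexts is routine.
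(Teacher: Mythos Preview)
Your proposal is correct and follows essentially the same route as the paper. The paper's proof also uses the informationally-robust mechanism $p^\dagger(\pi_t,\bar\epsilon)$, reduces each forecast context $F$ to the stage game with prior $\hat\pi_F$ and empirical information structure $\hat\gamma_F$, applies lemma~\ref{L2} to convert the agent's per-context regret $\epsilon_F$ into a $1/\bar\epsilon$ penalty, and then combines with the calibration bound from appendix~\ref{Ap3}.

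Two small differences are worth noting. First, where you perturb the prior (moving from $\hat\pi_F$ to $\pi_F$ inside $\alpha_p(\cdot,\gamma,\bar\epsilon)$, then invoking optimality of $p^\dagger$ at $\pi_F$), the paper instead keeps the prior at $\hat\pi_F$ throughout and uses a precomputed lemma that directly compares $\inf_\gamma\alpha_{p^\dagger(\pi_F,\bar\epsilon)}(\hat\pi_F,\gamma,\bar\epsilon)$ to $\inf_\gamma\alpha_{p^\dagger(\hat\pi_F,\bar\epsilon)}(\hat\pi_F,\gamma,\bar\epsilon)$. Your route works too, but you need one more perturbation at the end to bring the lower bound back from $\pi_F$ to $\hat\pi_F$ so that the difference is exactly $\nabla(\hat\pi_F,\bar\epsilon)$; you glossed over this. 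Second, your description of the ``hard part'' slightly inverts the relationship between theorems~\ref{T3} and~\ref{T6}: in the paper, theorem~\ref{T6} is the base result (at the forecast-context level, no assumption~\ref{A8} needed), and theorem~\ref{T3} is \emph{derived from} it by invoking assumption~\ref{A8} to collapse forecast contexts $F$ into policy contexts $P$. The uniform-in-$\gamma$ prior perturbation you highlight is indeed the key technical ingredient, but it is already packaged as a standalone lemma and used directly in the proof of theorem~\ref{T6}, not introduced as a workaround for missing assumptions.
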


	\section{Omitted Proofs}\label{Ap5}
	
	\subsection{Proof of Propositions \ref{???} and \ref{???}}

Recall that a policy $p_t$ in period $t$ can affect the agent's behavior $\mu_\tau$ in period $\tau>t$. This raises the prospect that a mistake today can cause irreversible damage to the principal's average utility. By definition, the principal will regret that mistake. This would make the principal's problem infeasible, in that he cannot guarantee low regret for himself.

Generally-speaking, regret bounds can bypass this problem if they restrict how much the agent's response $r_t$ depends on the policy history $p_{1:t-1}$. This is reasonable a priori, since the policy history $p_{1:t-1}$ appears irrelevant to the agent's problem. It neither affects nor predicts the state $y_t$, except through its dependence on $y_{1:t-1}$. It is not needed as a predictor of the policy $p_t$ because the agent observes $p_t$ before choosing a response.\footnote{For example, if the agent were Bayesian then there would be no dependence on $p_{1:t}$ at all. If the agent used the exponential weights algorithm then there would only be an indirect dependence, since that algorithm depends on the agent's historical payoffs and these, in turn, depend on the policy history.} For these reasons, it seems that the agent can only make herself worse off by allowing irrelevant variables like $p_{1:t-1}$ to affect her response $r_t$. This would be true if our notion of good performance overall had clear implications for behavior in each period, so that unecessary variation in behavior implies a departure from optimality. Unfortunately, there are various kinds of behavior that obtain low ER. The agent can easily switch between these behaviors while still satisfying no-ER. In the process, she can cause substantial benefit or harm to the principal.

To clarify the problem, we present several examples of learners that we regard as pathological. These are implicit counterexamples to the proposition that no-ER constraints are sufficient for no-regret mechanism design.

Our counterexamples are closely related to the pathological phenomenon of ``superefficiency'' in statistics. Suppose we are trying to estimate the mean $\theta$ of the normal random variable $X\sim N(\theta,1)$, given an i.i.d. random sample $X_1,\ldots,X_n$. Our objective is to minimize the mean square error, but this depends on the parameter $\theta$. A typical solution is the maximum likelihood estimator (MLE), which in this case outputs the sample mean $n^{-1}\sum_{i=1}^n X_i$. For reasons that are unimportant to our discussion, MLE is considered ``efficient''. However, it is easy to find an estimator that outperforms MLE. For example, a wild-ass guess (WAG) ignores the data and outputs $\theta^*$. If it happens to be the case that $\theta=\theta^*$ then this estimator is optimal.

In the following example, we construct a learner that alternates between a WAG-like predictor and a MLE-like predictor depending on a seemingly irrelevant choice by the principal.

\begin{example}[Selective Superefficiency]\label{Ex1}
	Consider a learner $L$ that is capable --  either by ingenuity or dumb luck -- of predicting the state sequence $y_{1:T}$ perfectly. However, the learner uses this ability only selectively, depending on the state $y_1$ and policy $p_1$ in the first period. Despite this seemingly irrational behavior, the learner satisfies vanishing external regret.
	
	Let $P\subsetneq\mathcal{P}$ be a nonempty subset of policies. Let $Y\subsetneq \mathcal{Y}$ be a nonempty subset of states. Let $r*$ be the best-in-hindsight response by time $T$. That is, consider some $r^*\in\mathcal{F}$ that happens to be best-in-hindsight given the realized state sequence $y_{1:T}$ but will not be best-in-hindsight uniformly over all state sequences. Given $y_{1:T}$, define the learner $L$ as follows:
	\begin{enumerate}
		\item If $y_1\in Y$ and $p_1\in P$ then use response $r^*$
		
		\item If $y_1\in Y$ and $p_1\notin P$ then use the response that happens to be optimal given $y_t$.
		
		\item If $y_1\notin Y$ and $p_1\in P$ then use the response that happens to be optimal given $y_t$.
		
		\item If $y_1\notin Y$ and $p_1\notin P$ then use response $r^*$
	\end{enumerate}
	In cases 1 and 4, the learner follows the best-in-hindsight response and therefore achieves zero regret. In cases 2 and 3, the learner acts optimally ex post and therefore achieves non-positive regret.
	
	Nonetheless, no mechanism can guarantee no-regret for the principal. Suppose the mechanism chooses $p_1\in P$. If it turns out that $y_1\in Y$ then the agent will follow $\pi^*$. Otherwise, the agent will be superefficient. Were the mechanism to deviate to $p_1\notin P$, the situation would be reversed. These constitute permanent changes in the agent's behavior. Suppose one type of behavior is ``better'' for the principal than another. It is always possible in hindsight that the mechanism's first-period policy was the one that led to the ``worse'' type of behavior.
\end{example}

Can further assumptions rule out this kind of behavior? Again, consider the analogy with statistics. The WAG estimator -- always predict $\theta^*$ -- will perform very poorly in the counterfactual world where $\theta^*\neq\theta$. Formally, this estimator is not ``consistent''. Similarly, the learner from example \ref{Ex1} does not guarantee vanishing average external regret under counterfactual state sequences. This reflects a peculiar unresponsiveness to the data.

Unfortunately, imposing consistency or no-regret on all sequences does not rule out these kinds of pathologies. Consider Hodge's (superefficient) estimator, which outputs $\theta^*$ unless there is sufficient evidence that $\theta\neq\theta^*$. In that case, it outputs the sample mean. If ``sufficient evidence'' is defined carefully, this estimator will outperform MLE when $\theta=\theta^*$ and will asympotically match MLE otherwise. We can patch up example \ref{Ex1} in a similar way.

\begin{example}[Selective Superefficiency, revised]\label{Ex2}
	We want to modify the learner in example \ref{Ex1} to ensure no-regret on all counterfactual state sequences $\tilde{y}_{1:T}$. This is straightforward. In period $t+1$, if the history $\tilde{y}_{1:t}$ matches the presumed sequence $y_{1:t}$ exactly, then proceed as before. Otherwise, follow any no-ER algorithm. This guarantees vanishing average regret as long as the regret from the first period $t$ where $\tilde{y}_{1:t}\neq y_{1:t}$ is bounded -- which it is, since our utility functions are bounded.
	
	Therefore, for any sequence $y_{1:t}$ there is a learner that satisfies no-regret on all sequences, but exhibits the pathological behavior from example \ref{Ex1} on the realized sequence.
\end{example}

Statisticians deal with superefficiency by arguing that it generically fails to occur. That is, any alternative estimator will weakly underperform MLE on Lebesgue-almost all values $\theta$. For example, we can view Hodge's estimator as asymptotically equivalent to MLE whenever $\theta\neq\theta^*$. In our setting, attempting this argument would necessitate a definition of genericity for sequences of states. While we can provide various definitions, none seem especially compelling.\footnote{For example, one could assign equal measure to each sequence $y_{1:T}$ in the set $\mathcal{Y}^T$ of all possible sequences. By this measure, the measure of any constant sequence $(y,\ldots,y)$ would converge to zero as $T\to\infty$. Yet it does not seem unreasonable a priori that the world should persist in a fixed state. Alternatively, one could assign equal measure to all permutations of a given sequence $y_{1:T}$. This would effectively return us to an i.i.d. setting.}

Another natural restriction to impose is that the learner should not outperform the experts. That is, given the sequence $y_{1:T}$, we only consider learners whose regret at period $T$ is non-negative. Clearly, this rules out the learners in examples \ref{Ex1} and \ref{Ex2}, which may obtain negative regret in the sequences where it predicts the state perfectly and acts on that information. Unfortunately, this does not rule out the broader phenomenon, as the following example illustrates.

\begin{example}[Selective Superinefficiency]\label{Ex3}
	
	As in example \ref{Ex1}, define a learner $L$ that appears capable of predicting the state sequence $y_{1:T}$ perfectly. This learner will continue to use this ability selectively. Moreover, when the learner uses this ability, she does not always use it to her advantage. Instead, with probability $1-q$ she uses it to her own disadvantage. When $q$ is chosen correctly, the learner satisfies zero regret for all mechanisms.
	
	Let $P\subsetneq \mathcal{P}$ be a nonempty subset of policies. Let $Y\subsetneq\mathcal{Y}$ be a nonempty subset of states. Let $r^*$ be the best-in-hindsight response by time $T$. Let $r^\dagger_t$ be the response that happens to be optimal for $y_t$. Let $\tilde{r}_t$ be the response that minimizes the agent's utility when the state is $y_t$. Given $y_{1:T}$, define the learner $L$ as follows:
	\begin{enumerate}
		\item If $y_1\in Y$ and $p_1\in P$ then use response $r^*$
		
		\item If $y_1\in Y$ and $p_1\notin P$ then use $r^\dagger_t$ with probability $q$ and $\tilde{r}_t$ with probability $1-q$
		
		\item If $y_1\notin Y$ and $p_1\in P$ then use $r^\dagger_t$ with probability $q$ and $\tilde{r}_t$ with probability $1-q$
		
		\item If $y_1\notin Y$ and $p_1\notin P$ then use prior $\pi^*$
	\end{enumerate}
	In cases 1 and 4, the learner follows the best-in-hindsight prior and therefore achieves zero regret. In cases 2 and 3, as long as $\tilde{r}_t$ underperforms $r^*$ in every period $t$, by continuity there exists a probability $q$ such that the agent achieves zero regret.
	
	This learner now satisfies both an upper bound and a lower bound on regret. Nonetheless, our difficulties remain. Just as in example \ref{Ex1}, the first-period policy can cause permanent changes in the agent's behavior. It is always possible in hindsight that the mechanism's first-period policy was the one that led to the ``worse'' type of behavior.
\end{example}

We can use this example to prove proposition \ref{???} (proposition \ref{???} is a corollary). In the Bayesian persuasion example, let $y_{2:T}$ be drawn i.i.d. where the defendant is guilty with probability $q=0.5-\epsilon$ for a very small $\epsilon>0$. If the principal chooses $p_1$ correctly, he can persuade the agent to convict with probability near one. Otherwise, the agent convicts with probability near 0.5.

In the contract theory example, let $y_{2:T}$ be be drawn i.i.d. from some distribution where the principal would find it optimal to pay the agent in the stage game, but both states occur with positive probability. If the principal chooses $p_1$ correctly, he can pay the agent her cost of effort and achieve the first-best outcome (agent works iff working is effective). Otherwise, the principal has to compensate the agent for her cost of effort in states where working is ineffective.

The fundamental problem with the learners in examples \ref{Ex1}, \ref{Ex2}, \ref{Ex3} is not that they are well-informed. After all, in some settings we might reasonably expect the agent to be better informed than the analyst. The problem is that they fail to consistently and fully exploit the private information that they clearly possess. Bounds on counterfactual internal regret capture this failure to exploit information and rule out these kinds of pathological behaviors.

\begin{example}\label{Ex4}
	Returning to example \ref{Ex1}, consider two constant mechanisms $\sigma^p$ and $\sigma^{\tilde p}$ where $p\in P$ and $\tilde{p}\notin P$. Regardless of the state sequence $y_{1:T}$, exactly one of these mechanisms (say $\sigma^p$) will cause the agent to predict the state perfectly while the other will cause the agent to follow the best-in-hindsight prior. The agent's behavior $r^p$ following $\sigma^p$ will differ across periods $y_t,y_\tau$ if and only if $y_t\neq y_\tau$, while the behavior $r^{\tilde p}$ following $\sigma^{\tilde{p}}$ remains constant throughout. The vector $(r^p,r^{\tilde{p}})$ will therefore differ across periods $y_t,y_\tau$ if and only if $y_t\neq y_\tau$.
	
	If we require the agent to have no-contextual regret where the context is $(r^p,r^{\tilde{p}})$, it is equivalent to requiring her to predict the state perfectly even if the principal uses $\sigma^{\tilde p}$. This is essentially the context used to define CIR. The learner guarantees no-CIR under mechanism $\sigma^{p}$, because it predicts the state perfectly. However, it does not predict the state perfectly under mechanism $\sigma^{\tilde p}$, so in this case the agent accumulates CIR.
\end{example}
	
	\subsection{Proof of Lemmas \ref{???} and \ref{???} and Additional Results}

The lemmas in this section will be used repeatedly in the proofs of theorems \ref{T1}, \ref{T2}, and \ref{T3}.

\subsubsection{Proof of Lemmas \ref{???} and \ref{???}}

Lemma \ref{???} states that for any policy $p$, information structure $\gamma$, constants $\epsilon,\tilde{\epsilon}>0$, and distribution $\pi$, we have
\[
\alpha_p(\pi,\gamma,\epsilon+\tilde\epsilon)\geq\alpha_p(\pi,\gamma,\epsilon)-\frac{\tilde\epsilon}{\epsilon}
\quad\mathrm{and}\quad
\beta_p(\pi,\gamma,\epsilon+\tilde\epsilon)\leq\beta_p(\pi,\gamma,\epsilon)+\frac{\tilde\epsilon}{\epsilon}
\]
Note that lemma \ref{???} is just a special case where the information structure $\gamma$ is uninformative. To prove this, define
\[
B(\pi,\gamma,\epsilon)=\left\{\mu_J\in\Delta(\mathcal{R})\mid\epsilon\geq\max_{\tilde{r}_J\in\mathcal{R}}\ex[y\sim\pi]{\ex[J\sim\gamma(\cdot,y)]{U(\tilde{r}_J,p,y)-\ex[r\sim\mu_J]{U(r,p,y)}}}\right\}
\]
and recall that
\[
\alpha_p(\pi,\gamma,\epsilon)=\min_{\mu_J\in B(\pi,\gamma,\epsilon)}\ex[y\sim\pi]{\ex[J\sim\gamma(\cdot,y)]{\ex[r\sim\mu_J]{V(r,p,y)}}}
\]
Note that $\alpha_p(\pi,\gamma,\epsilon)$ is decreasing and convex in $\epsilon$. Convexity follows from the fact that $\mu_J\in B(\pi,\gamma,\epsilon)$ and $\tilde\mu_J\in B(\pi,\gamma,\tilde\epsilon)$ implies  $\lambda\mu_J+(1-\lambda)\tilde\mu_J\in B(\pi,\gamma,\lambda\epsilon+(1-\lambda)\tilde\epsilon)$. Therefore,
\[
\alpha_p(\pi,\gamma,\lambda\epsilon+(1-\lambda)\tilde\epsilon)\leq\lambda\alpha_p(\pi,\gamma,\epsilon)+(1-\lambda)\alpha_p(\pi,\gamma,\tilde\epsilon)
\]
Consider any supporting line of $\alpha_p$ at $\epsilon$. It is bounded above by $\alpha_p$, by definition. Therefore, its slope is at most
\[
\frac{\alpha_p(\pi,\gamma,0)-\alpha_p(\pi,\gamma,\epsilon)}{\epsilon}
\leq\frac{1}{\epsilon}
\]
since $\alpha_p$ is bounded in the unit interval by our regularity assumption. Therefore, the supporting line will underestimate $\alpha_p(\pi,\gamma,\epsilon+\tilde{\epsilon})$ by at most $\tilde{\epsilon}/\epsilon$ and at least zero. This implies our bound. The argument for $\beta_p$ is analogous after we observe that it is increasing and concave in $\epsilon$.

\subsubsection{Bounds for Misspecified Distributions}

The following lemma states that the principal's worst-case utility $\alpha_p$ is not too sensitive to changes in the distribution, for any fixed policy $p$.

\begin{lemma}\label{???1}
	For any policy $p$, information structure $\gamma$, constant $\epsilon>0$, and distributions $\pi,\tilde{\pi}$, we have
	\[
	\alpha_p(\pi,\gamma,\epsilon)\geq\alpha_p(\tilde\pi,\gamma,\epsilon)-\frac{2d_1(\pi,\tilde{\pi})}{\epsilon}-d_1(\pi,\tilde{\pi})
	\]
\end{lemma}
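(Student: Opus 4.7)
The plan is to leverage Lemma \ref{L2}: because $\alpha_p$ loses at most $\tilde\epsilon/\epsilon$ when the agent's tolerance is relaxed from $\epsilon$ to $\epsilon+\tilde\epsilon$, it suffices to show that switching the state distribution from $\pi$ to $\tilde\pi$ is effectively the same as simultaneously (i) relaxing the agent's incentive constraint by an amount proportional to $d_1(\pi,\tilde\pi)$ and (ii) perturbing the principal's objective by at most $d_1(\pi,\tilde\pi)$. Both perturbations come for free from the fact that $U,V\in[0,1]$ together with the variational characterization $|\mathbb{E}_{y\sim\pi}[f(y)]-\mathbb{E}_{y\sim\tilde\pi}[f(y)]|\le d_1(\pi,\tilde\pi)$ for any $f\in[0,1]$.

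First, let $\mu^\star_J$ achieve the minimum defining $\alpha_p(\pi,\gamma,\epsilon)$, so $\mu^\star_J\in B(\pi,\gamma,\epsilon)$. Observe that for \emph{any} agent strategy $\mu_J$ the quantity $\mathbb{E}_{y\sim\pi}\mathbb{E}_{J\sim\gamma(\cdot,y)}\mathbb{E}_{r\sim\mu_J}[U(r,p,y)]$ is a $\pi$-expectation of a $[0,1]$-valued function of $y$ (namely $y\mapsto \mathbb{E}_{J\sim\gamma(\cdot,y)}\mathbb{E}_{r\sim\mu_J}U(r,p,y)$), and similarly for the sup over $\tilde r_J$. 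Switching the outer expectation from $\pi$ to $\tilde\pi$ therefore changes each of these two terms by at most $d_1(\pi,\tilde\pi)$, so the slack in the agent's constraint grows by at most $2d_1(\pi,\tilde\pi)$. Hence $\mu^\star_J\in B(\tilde\pi,\gamma,\epsilon+2d_1(\pi,\tilde\pi))$.

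Second, apply the identical $[0,1]$-boundedness argument to $V$ to get
\[
\alpha_p(\tilde\pi,\gamma,\epsilon+2d_1(\pi,\tilde\pi))\le \mathbb{E}_{y\sim\tilde\pi}\mathbb{E}_{J\sim\gamma(\cdot,y)}\mathbb{E}_{r\sim\mu^\star_J}[V(r,p,y)]\le \alpha_p(\pi,\gamma,\epsilon)+d_1(\pi,\tilde\pi),
\]
where the first inequality uses that $\mu^\star_J$ is feasible for the $\tilde\pi$-problem at relaxation $\epsilon+2d_1(\pi,\tilde\pi)$ and the second uses the change-of-measure bound on $V$. Now invoke Lemma \ref{L2} at the distribution $\tilde\pi$ with $\tilde\epsilon=2d_1(\pi,\tilde\pi)$ to get $\alpha_p(\tilde\pi,\gamma,\epsilon)\le \alpha_p(\tilde\pi,\gamma,\epsilon+2d_1(\pi,\tilde\pi))+2d_1(\pi,\tilde\pi)/\epsilon$. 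Chaining the two inequalities and rearranging gives the claimed bound.

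There is no real obstacle here: the argument is a routine ``change-of-measure plus constraint relaxation'' triangle inequality. The one place to be careful is the order of operations — Lemma \ref{L2} must be applied to $\alpha_p(\tilde\pi,\gamma,\cdot)$ rather than $\alpha_p(\pi,\gamma,\cdot)$, and the minimizer $\mu^\star_J$ must be shown feasible for the \emph{relaxed} problem before its objective value is compared across the two distributions; doing it in the other order would double-count the perturbation. A symmetric argument (taking a supremum rather than an infimum, and using the upward sensitivity half of Lemma \ref{L2}) yields the analogous bound $\beta_p(\pi,\gamma,\epsilon)\le \beta_p(\tilde\pi,\gamma,\epsilon)+2d_1(\pi,\tilde\pi)/\epsilon+d_1(\pi,\tilde\pi)$, which will presumably be needed alongside this lemma in the sequel.
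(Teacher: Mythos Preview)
Your proof is correct and follows essentially the same route as the paper. The paper phrases the first step as the set containment $B(\pi,\gamma,\epsilon)\subseteq B(\tilde\pi,\gamma,\epsilon+2d_1(\pi,\tilde\pi))$ and then minimizes over the larger set, whereas you pick the specific minimizer $\mu^\star_J$ and show it is feasible for the relaxed $\tilde\pi$-problem; these are equivalent, and the remaining steps---change the objective from $\pi$ to $\tilde\pi$ at cost $d_1(\pi,\tilde\pi)$, then apply Lemma~\ref{L2} with $\tilde\epsilon=2d_1(\pi,\tilde\pi)$---are identical in both arguments.
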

\begin{proof}
	Note that $B(\pi,\gamma,\epsilon)\subseteq B(\tilde{\pi},\gamma,\epsilon+2d_1(\pi,\tilde{\pi}))$ since 
	(1) for any $\tilde{r}_J$,
		\[
		\ex[y\sim\pi]{\ex[J\sim\gamma(\cdot,y)]{U(\tilde{r}_J,p,y)}}
		\geq\ex[y\sim\tilde\pi]{\ex[J\sim\gamma(\cdot,y)]{U(\tilde{r}_J,p,y)}}-d_1(\pi,\tilde{\pi})
		\]
	and (2), for any $\mu_J$,
		\[
		\ex[y\sim\pi]{\ex[J\sim\gamma(\cdot,y)]{\ex[r\sim\mu^*_J]{U(r,p,y)}}}
		\leq\ex[y\sim\tilde\pi]{\ex[J\sim\gamma(\cdot,y)]{\ex[r\sim\mu^*_J]{U(r,p,y)}}}+d_1(\pi,\tilde{\pi})
		\]
	Therefore, 
	\begin{align*}
	\alpha_p(\pi,\gamma,\epsilon)
	&\geq\min_{\mu_J\in B(\tilde\pi,\gamma,\epsilon+2d_1(\pi,\tilde{\pi}))}\ex[y\sim\pi]{\ex[J\sim\gamma(\cdot,y)]{\ex[r\sim\mu_J]{V(r,p,y)}}}\\
	&\geq\min_{\mu_J\in B(\tilde\pi,\gamma,\epsilon+2d_1(\pi,\tilde{\pi}))}\ex[y\sim\tilde\pi]{\ex[J\sim\gamma(\cdot,y)]{\ex[r\sim\mu_J]{V(r,p,y)}}}-d_1(\pi,\tilde{\pi})\\
	&=\alpha_p(\tilde\pi,\gamma,\epsilon+2d_1(\pi,\tilde{\pi}))-d_1(\pi,\tilde{\pi})\\
	&\geq\alpha_p(\tilde\pi,\gamma,\epsilon)-\frac{2d_1(\pi,\tilde{\pi})}{\epsilon}-d_1(\pi,\tilde{\pi})
	\end{align*}
\end{proof}

The following lemma states that the $\epsilon$-robust policy for a distribution $\tilde{\pi}$ that is near the true distribution $\pi$ will perform almost as well as the $\epsilon$-robust policy for the true distribution $\pi$.

\begin{lemma}\label{???2}
	For any $\epsilon>0$ and distributions $\pi,\tilde{\pi}$, we have
	\[
	\alpha_{p^*(\tilde\pi,\epsilon)}(\pi,\epsilon)\geq\alpha_{p^*(\pi,\epsilon)}(\pi,\epsilon)-\frac{4d_1(\pi,\tilde{\pi})}{\epsilon}-2d_1(\pi,\tilde{\pi})
	\]
\end{lemma}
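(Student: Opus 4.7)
The plan is to chain three inequalities: two applications of the preceding lemma (the misspecified-distribution bound for a \emph{fixed} policy) bracketing one application of the optimality property that defines $p^*(\tilde\pi,\epsilon)$. The idea is that the only extra work needed beyond the preceding lemma is to handle the fact that the policy itself was chosen against the wrong distribution; this is the standard ``change distribution, switch policies, change distribution back'' trick.

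Concretely, first I would apply the preceding lemma with policy $p = p^*(\tilde\pi,\epsilon)$ and with an uninformative information structure (so that $\alpha_p(\cdot,\gamma,\epsilon)$ reduces to $\alpha_p(\cdot,\epsilon)$), obtaining
\[
\alpha_{p^*(\tilde\pi,\epsilon)}(\pi,\epsilon) \;\geq\; \alpha_{p^*(\tilde\pi,\epsilon)}(\tilde\pi,\epsilon) - \frac{2 d_1(\pi,\tilde\pi)}{\epsilon} - d_1(\pi,\tilde\pi).
\]
Next, I would invoke the definition of $p^*(\tilde\pi,\epsilon)$ as a maximizer of $\alpha_p(\tilde\pi,\epsilon)$ over $p\in\mathcal{P}$, which implies
\[
\alpha_{p^*(\tilde\pi,\epsilon)}(\tilde\pi,\epsilon) \;\geq\; \alpha_{p^*(\pi,\epsilon)}(\tilde\pi,\epsilon).
\]
Finally, I would apply the preceding lemma a second time, now with policy $p = p^*(\pi,\epsilon)$ and with the roles of $\pi$ and $\tilde\pi$ swapped (the lemma is symmetric in its two distribution arguments since $d_1$ is symmetric), giving
\[
\alpha_{p^*(\pi,\epsilon)}(\tilde\pi,\epsilon) \;\geq\; \alpha_{p^*(\pi,\epsilon)}(\pi,\epsilon) - \frac{2 d_1(\pi,\tilde\pi)}{\epsilon} - d_1(\pi,\tilde\pi).
\]
Chaining these three inequalities together and collecting the two identical error terms yields the claimed bound, since $2\cdot\frac{2 d_1}{\epsilon} = \frac{4 d_1}{\epsilon}$ and $2\cdot d_1 = 2 d_1$.

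There is not really a substantive obstacle here: the content of the argument lives entirely in the preceding lemma, and everything else is a one-line optimality step plus symmetric application of that lemma. The only thing to be careful about is the direction of inequalities (both applications of the preceding lemma must lower-bound $\alpha$ at $\pi$ by $\alpha$ at $\tilde\pi$, which requires applying it once in each direction of distribution swap), and the fact that the factor of two in the final bound comes precisely from having to change distributions twice — once to justify stepping from the wrong policy to the right one via optimality at $\tilde\pi$, and once to return to evaluation at the true distribution $\pi$.
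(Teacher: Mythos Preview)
Your proposal is correct and takes essentially the same approach as the paper: two applications of the preceding misspecified-distribution lemma (once for each policy) sandwiching the optimality step for $p^*(\tilde\pi,\epsilon)$ at $\tilde\pi$. The paper merely writes the chain as upper bounds on $\alpha_{p^*(\pi,\epsilon)}(\pi,\epsilon)$ rather than lower bounds on $\alpha_{p^*(\tilde\pi,\epsilon)}(\pi,\epsilon)$, which is a cosmetic difference.
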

\begin{proof}
	First, observe that
	\begin{align*}
	\alpha_{p^*(\pi,\epsilon)}(\pi,\epsilon)
	&\leq\alpha_{p^*(\pi,\epsilon)}(\tilde\pi,\epsilon)+\frac{2d_1(\pi,\tilde{\pi})}{\epsilon}+d_1(\pi,\tilde{\pi})\\
	&\leq\alpha_{p^*(\tilde\pi,\epsilon)}(\tilde\pi,\epsilon)+\frac{2d_1(\pi,\tilde{\pi})}{\epsilon}+d_1(\pi,\tilde{\pi})
	\end{align*}
	Next, observe that
	\[
	\alpha_{p^*(\tilde\pi,\epsilon)}(\tilde\pi,\epsilon)\leq\alpha_{p^*(\tilde\pi,\epsilon)}(\pi,\epsilon)+\frac{2d_1(\pi,\tilde{\pi})}{\epsilon}+d_1(\pi,\tilde{\pi})
	\]
	Collapse these inequalities to obtain the desired result.
\end{proof}

The following lemma states that the $\epsilon$-informationally-robust policy for a distribution $\tilde{\pi}$ that is near the true distribution $\pi$ will provide a similar guarantee against the worst-case information structure $\gamma$ as the $\epsilon$-informationally-robust policy for the true distribution $\pi$.

\begin{lemma}\label{???3}
	For any $\epsilon>0$ and distributions $\pi,\tilde{\pi}$, we have
	\[
	\inf_\gamma\alpha_{p^\dagger(\pi,\epsilon)}(\pi,\gamma,\epsilon)\leq\inf_\gamma\alpha_{p^\dagger(\tilde\pi,\epsilon)}(\pi,\gamma,\epsilon)+\frac{4d_1(\pi,\tilde{\pi})}{\epsilon}+2d_1(\pi,\tilde{\pi})
	\]
\end{lemma}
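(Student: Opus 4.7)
The plan is to mirror the sandwich argument that proved the corresponding lemma for the $\epsilon$-robust policy $p^*$, substituting the preceding perturbation lemma (which is already stated for an arbitrary information structure $\gamma$) for the distributional-stability bound used there. The key observation is that every inequality produced by that perturbation lemma is pointwise in $\gamma$, so it commutes cleanly with $\inf_\gamma$: if $f(\gamma)\le g(\gamma)+c$ for every $\gamma$, then $\inf_\gamma f(\gamma)\le \inf_\gamma g(\gamma)+c$. This lets us lift pointwise bounds to bounds between the $\inf_\gamma$ quantities that define $\epsilon$-informational-robustness.

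Concretely, I would chain three inequalities. First, applying the preceding perturbation lemma (with the roles of $\pi$ and $\tilde\pi$ swapped) to the fixed policy $p^\dagger(\pi,\epsilon)$ gives, pointwise in $\gamma$,
\[
\alpha_{p^\dagger(\pi,\epsilon)}(\pi,\gamma,\epsilon)
\le \alpha_{p^\dagger(\pi,\epsilon)}(\tilde\pi,\gamma,\epsilon)
+\frac{2d_1(\pi,\tilde\pi)}{\epsilon}+d_1(\pi,\tilde\pi),
\]
and taking $\inf_\gamma$ on both sides preserves the inequality. Second, because $p^\dagger(\tilde\pi,\epsilon)$ is by definition a maximizer of $p\mapsto\inf_\gamma\alpha_p(\tilde\pi,\gamma,\epsilon)$,
\[
\inf_\gamma\alpha_{p^\dagger(\pi,\epsilon)}(\tilde\pi,\gamma,\epsilon)
\le \inf_\gamma\alpha_{p^\dagger(\tilde\pi,\epsilon)}(\tilde\pi,\gamma,\epsilon).
\]
Third, applying the perturbation lemma a second time to the policy $p^\dagger(\tilde\pi,\epsilon)$, again pointwise in $\gamma$ and then under $\inf_\gamma$, yields
\[
\inf_\gamma\alpha_{p^\dagger(\tilde\pi,\epsilon)}(\tilde\pi,\gamma,\epsilon)
\le \inf_\gamma\alpha_{p^\dagger(\tilde\pi,\epsilon)}(\pi,\gamma,\epsilon)
+\frac{2d_1(\pi,\tilde\pi)}{\epsilon}+d_1(\pi,\tilde\pi).
\]
Concatenating the three bounds sums the two perturbation errors into $\tfrac{4d_1(\pi,\tilde\pi)}{\epsilon}+2d_1(\pi,\tilde\pi)$, which is exactly the desired inequality.

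There is no real obstacle: the proof is essentially the same sandwich as for $p^*$, and the only new wrinkle is that one must apply the perturbation lemma for each fixed $\gamma$ (rather than at some optimizing $\gamma^*$, which in principle could differ on the two sides). Since the preceding lemma was deliberately stated for arbitrary $\gamma$, this pointwise usage is immediate, and no compactness or selection theorem is required.
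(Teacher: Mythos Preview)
Your proposal is correct and follows essentially the same approach as the paper's proof: apply the distributional-perturbation lemma pointwise in $\gamma$ to the policy $p^\dagger(\pi,\epsilon)$, take $\inf_\gamma$, use the optimality of $p^\dagger(\tilde\pi,\epsilon)$ under $\tilde\pi$, then apply the perturbation lemma again to $p^\dagger(\tilde\pi,\epsilon)$ and take $\inf_\gamma$. The paper even presents the three steps in the same order and with the same constants you do.
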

\begin{proof}
	First, observe that
	\[
	\alpha_{p^\dagger(\pi,\epsilon)}(\pi,\gamma,\epsilon)\leq\alpha_{p^\dagger(\pi,\epsilon)}(\tilde\pi,\gamma,\epsilon)+\frac{2d_1(\pi,\tilde{\pi})}{\epsilon}+d_1(\pi,\tilde{\pi})
	\]
	which implies
	\begin{align*}
	\inf_\gamma\alpha_{p^\dagger(\pi,\epsilon)}(\pi,\gamma,\epsilon)
	&\leq\inf_\gamma\alpha_{p^\dagger(\pi,\epsilon)}(\tilde\pi,\gamma,\epsilon)+\frac{2d_1(\pi,\tilde{\pi})}{\epsilon}+d_1(\pi,\tilde{\pi})\\
	&\leq\inf_\gamma\alpha_{p^\dagger(\tilde\pi,\epsilon)}(\tilde\pi,\gamma,\epsilon)+\frac{2d_1(\pi,\tilde{\pi})}{\epsilon}+d_1(\pi,\tilde{\pi})
	\end{align*}
	Next, observe that
	\[
	\alpha_{p^\dagger(\tilde\pi,\epsilon)}(\tilde\pi,\gamma,\epsilon)
	\leq\alpha_{p^\dagger(\tilde\pi,\epsilon)}(\pi,\gamma,\epsilon)+\frac{2d_1(\pi,\tilde{\pi})}{\epsilon}+d_1(\pi,\tilde{\pi})
	\]
	which implies
	\[
	\inf_\gamma\alpha_{p^\dagger(\tilde\pi,\epsilon)}(\tilde\pi,\gamma,\epsilon)
	\leq\inf_\gamma\alpha_{p^\dagger(\tilde\pi,\epsilon)}(\pi,\gamma,\epsilon)+\frac{2d_1(\pi,\tilde{\pi})}{\epsilon}+d_1(\pi,\tilde{\pi})
	\]
	Collapse these inequalities to obtain the desired result.
\end{proof}
	
	\subsection{Proof of Theorem \ref{T4}}

Assume access to a forecast $\pi_t\in\Delta(\mathcal{Y})$ for every period $t$. We will define this later. In period $t$, the mechanism computes the policy $p^*(\pi_t,\bar{\epsilon})$ that maximizes the worst-case payoff in the $\bar{\epsilon}$-robust stage game, treating the forecast $\pi_t$ as the common prior. That is,
\[
p^*(\pi_t,\bar{\epsilon})\in\arg\max_{p\in\mathcal{P}}\alpha_p(\pi_t,\bar{\epsilon})
\]
The mechanism chooses $p_t$ as follows. Let $P$ be the unique policy context that includes the policy $p^*(\pi_t,\bar{\epsilon})\in P$. Let $p_t=p_P$, where $p_P\in\mathcal{P}_1$ is the representative element of $P$.

We will refer to the average regret accumulated in each forecast context $F$, i.e.
\[
\epsilon_F=\max_{r\in\mathcal{R}}\frac{1}{n_F}\sum_{t\in F}\left(U(r,p_F,y_t)-U(r_t,p_F,y_t)\right)
\]
where $p_F=p_P$ for the unique policy context $P$ associated with forecast context $F$. We will also refer to the average regret accumulated in each information context $I\in\mathcal{I}$, i.e.
\[
\epsilon_I=\max_{r\in\mathcal{R}}\frac{1}{n_I}\sum_{t\in I}\left(U(r,p_I,y_t)-U(r_t,p_I,y_t)\right)
\]
where $p_I=p_F$ for the unique forecast context $F$ associated with information $I$. Note that
\[
\epsilon_F
=\frac{1}{n_F}\sum_{I\in\mathcal{I}_F}n_I\epsilon_I
=\frac{1}{n_F}\sum_{I\in\mathcal{I}_F}\max_{r\in\mathcal{R}}\frac{1}{n_I}\sum_{t\in I}\left(U(r,p_F,y_t)-U(r_t,p_F,y_t)\right)
\]
The next two lemmas imply an upper bound on the principal's regret in terms of the quantity
\[
\iota\geq\frac{1}{T}\sum_{t=1}^Td_1(\pi_t,\hat{\pi}_I)
\]
that measures the discrepancy between the forecast $\pi_t$ and the empirical distribution $\hat{\pi}_I$ conditioned on the agent's information $I$. Lemma \ref{L2} is a lower bound on the principal's payoff under $\sigma^*$. Lemma \ref{L3} is an upper bound on the his payoff under any constant $\sigma^p\in\Sigma_0$.

\begin{lemma}\label{L2}
	Suppose the principal runs $\sigma^*$. Then
	\[
	\frac{1}{T}\sum_{I\in\mathcal{I}}\sum_{t\in I}
	V(r_t,p_t,y_t)
	\geq\max_{p\in\mathcal{P}}\alpha_p(\hat{\pi}_I,\bar\epsilon)
	-\left(\frac{\epsilon+4\iota+K^U_{\mathcal{R}}\delta_{\mathcal{R}}+2K^U_{\mathcal{P}}\delta_{\mathcal{P}}}{\bar{\epsilon}}\right)
	-\left(2\iota+K^V_{\mathcal{R}}\delta_{\mathcal{R}}+K^V_{\mathcal{P}}\delta_{\mathcal{P}}\right)
	\]
\end{lemma}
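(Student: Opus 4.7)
The plan is to localize the analysis to a single information context $I\in\mathcal{I}$, reduce empirical averages over $t\in I$ to stage-game expectations under $\hat{\pi}_I$, lower-bound those expectations by $\max_p\alpha_p(\hat{\pi}_I,\bar\epsilon)$ through a chain of perturbation estimates, and then aggregate with weights $n_I/T$. The two facts driving aggregation are the CIR hypothesis $\sum_I n_I\epsilon_I/T\le\epsilon$ (Assumption~\ref{A3}) and the forecast-calibration estimate $\sum_I n_I\bar d_I/T\le\iota$, where $\epsilon_I$ is the within-context CIR and $\bar d_I=\tfrac{1}{n_I}\sum_{t\in I}d_1(\hat{\pi}_I,\pi_t)$.

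Fix $I\in\mathcal{I}$. By construction of $\sigma^*$, $p^*(\pi_t,\bar\epsilon)$ lies in the common policy context encoded by $I$ for every $t\in I$, so the discretized policy $p_t$ equals a single value $p_I$ throughout $I$; the realized responses $\{r_t:t\in I\}$ also lie in a single response context of diameter $\delta_{\mathcal{R}}$, from which I pick a representative $r_I$. Equi-Lipschitzness (Assumption~\ref{A1}) of $V$ and $U$ in $r$ gives
\[
\frac{1}{n_I}\sum_{t\in I}V(r_t,p_I,y_t)\;\ge\;\ex[y\sim\hat{\pi}_I]{V(r_I,p_I,y)}-K^V_{\mathcal{R}}\delta_{\mathcal{R}},
\]
and, together with the definition of $\epsilon_I$, certifies that $r_I$ is feasible for the $(\epsilon_I+K^U_{\mathcal{R}}\delta_{\mathcal{R}})$-robust problem at $(\hat{\pi}_I,p_I)$; Lemma~\ref{L1} then yields $\ex[y\sim\hat{\pi}_I]{V(r_I,p_I,y)}\ge\alpha_{p_I}(\hat{\pi}_I,\bar\epsilon)-(\epsilon_I+K^U_{\mathcal{R}}\delta_{\mathcal{R}})/\bar\epsilon$ (using trivial monotonicity of $\alpha$ when the slack is already below $\bar\epsilon$).

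To transfer the policy from $p_I$ to the $\bar\epsilon$-robust maximizer at $\hat{\pi}_I$, I first note that $d_{\mathcal{P}}(p_I,p^*(\pi_t,\bar\epsilon))\le\delta_{\mathcal{P}}$, so perturbing $p$ shifts the feasibility constraint by at most $2K^U_{\mathcal{P}}\delta_{\mathcal{P}}$ and the objective by at most $K^V_{\mathcal{P}}\delta_{\mathcal{P}}$; a second application of Lemma~\ref{L1} gives $\alpha_{p_I}(\hat{\pi}_I,\bar\epsilon)\ge\alpha_{p^*(\pi_t,\bar\epsilon)}(\hat{\pi}_I,\bar\epsilon)-K^V_{\mathcal{P}}\delta_{\mathcal{P}}-2K^U_{\mathcal{P}}\delta_{\mathcal{P}}/\bar\epsilon$. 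The misspecified-distribution lemma from the omitted-proofs subsection, with $\pi=\hat{\pi}_I$ and $\tilde{\pi}=\pi_t$, then packages the $\pi$-perturbation into $\alpha_{p^*(\pi_t,\bar\epsilon)}(\hat{\pi}_I,\bar\epsilon)\ge\max_p\alpha_p(\hat{\pi}_I,\bar\epsilon)-4d_1(\hat{\pi}_I,\pi_t)/\bar\epsilon-2d_1(\hat{\pi}_I,\pi_t)$. Chaining these estimates produces, for every $t\in I$, a lower bound on the within-$I$ average whose only $t$-dependence is a linear occurrence of $d_1(\hat{\pi}_I,\pi_t)$; since the left-hand side is $t$-independent, averaging over $t\in I$ replaces this term by $\bar d_I$, and multiplying by $n_I/T$ and summing over $I$ yields the lemma with the stated constants.

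The main obstacle is the bookkeeping. Lemma~\ref{L1} is phrased as a perturbation bound in the robustness slack, so each of the three sources of imprecision -- agent suboptimality $\epsilon_I$, policy discretization $\delta_{\mathcal{P}}$, and distribution misspecification $d_1(\hat{\pi}_I,\pi_t)$ -- must first be translated into an effective shift in that slack with the right numerical constants so that, after averaging, the exact coefficients $(\epsilon+4\iota+K^U_{\mathcal{R}}\delta_{\mathcal{R}}+2K^U_{\mathcal{P}}\delta_{\mathcal{P}})/\bar\epsilon$ and $2\iota+K^V_{\mathcal{R}}\delta_{\mathcal{R}}+K^V_{\mathcal{P}}\delta_{\mathcal{P}}$ emerge. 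The misspecified-distribution lemma handles this packaging cleanly for the $\pi$-perturbation; an analogous elementary computation handles the $p$-perturbation.
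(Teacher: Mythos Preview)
Your proposal is correct and follows essentially the same route as the paper's proof: fix an information context $I$, replace $r_t$ by a representative $r_I$ via Lipschitzness, recognize $r_I$ as feasible for $\alpha_{p_I}(\hat\pi_I,\epsilon_I+K^U_{\mathcal R}\delta_{\mathcal R})$, handle the policy discretization by a further $2K^U_{\mathcal P}\delta_{\mathcal P}$ shift in slack and $K^V_{\mathcal P}\delta_{\mathcal P}$ in the objective, invoke the misspecified-distribution lemma to pass from $p^*(\pi_t,\bar\epsilon)$ to $p^*(\hat\pi_I,\bar\epsilon)$, and aggregate. The only cosmetic difference is that the paper accumulates all slack shifts first and applies Lemma~\ref{L1} once, whereas you apply Lemma~\ref{L1} twice in sequence; the constants match either way.
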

\begin{proof}
	Let $r_I$ be a representative element in the response context $R$ associated with information $I$ under mechanism $\sigma^*$. By regularity,
	\begin{align*}
		\epsilon_I
		&\geq\max_{r\in\mathcal{R}}\frac{1}{n_I}\sum_{t\in I}\left(U(r,p_I,y_t)-U(r_I,p_I,y_t)\right)-K^U_{\mathcal{R}}\delta_{\mathcal{R}}\\
		&=\max_{r\in\mathcal{R}}\ex[y\sim\hat{\pi}_I]{U(r,p_I,y)-U(r_I,p_I,y)}-K^U_{\mathcal{R}}\delta_{\mathcal{R}}
	\end{align*}
	It follows, by regularity and definition of $\alpha$, that
	\begin{align*}
		\frac{1}{n_I}\sum_{t\in I}V(r_t,p_I,y_t)
		&\geq\ex[y\sim\hat{\pi}_I]{V(r_I,p_I,y)}-K^V_{\mathcal{R}}\delta_{\mathcal{R}}\\
		&\geq\alpha_{p_I}(\hat{\pi}_I,\epsilon_I+K^U_{\mathcal{R}}\delta_{\mathcal{R}})-K^V_{\mathcal{R}}\delta_{\mathcal{R}}
	\end{align*}
	Summing over information $I\in\mathcal{I}$ and using lemma \ref{L1}, we obtain
	\begin{align*}
		\frac{1}{T}\sum_{I\in\mathcal{I}}\sum_{t\in I}
		V(r_t,p_I,y_t)
		&\geq\frac{1}{T}\sum_{I\in\mathcal{I}}\sum_{t\in I}
		\alpha_{p_I}(\hat{\pi}_I,\epsilon_I+K^U_{\mathcal{R}}\delta_{\mathcal{R}})-K^V_{\mathcal{R}}\delta_{\mathcal{R}}\\
		&\geq\frac{1}{T}\sum_{I\in\mathcal{I}}\sum_{t\in I}\alpha_{p^*(\pi_t,\bar{\epsilon})}(\hat{\pi}_I,\epsilon_I+K^U_{\mathcal{R}}\delta_{\mathcal{R}}+2K^U_{\mathcal{P}}\delta_{\mathcal{P}})-K^V_{\mathcal{R}}\delta_{\mathcal{R}}-K^V_{\mathcal{P}}\delta_{\mathcal{P}}\\
		&\geq\frac{1}{T}\sum_{I\in\mathcal{I}}\sum_{t\in I}\left(\alpha_{p^*(\pi_t,\bar{\epsilon})}(\hat{\pi}_I,\bar\epsilon)-\frac{\epsilon_I+K^U_{\mathcal{R}}\delta_{\mathcal{R}}+2K^U_{\mathcal{P}}\delta_{\mathcal{P}}}{\bar{\epsilon}}\right)-K^V_{\mathcal{R}}\delta_{\mathcal{R}}-K^V_{\mathcal{P}}\delta_{\mathcal{P}}\\
		&\geq\frac{1}{T}\sum_{I\in\mathcal{I}}\sum_{t\in I}
		\alpha_{p^*(\pi_t,\bar{\epsilon})}(\hat{\pi}_I,\bar\epsilon)-\left(\frac{\epsilon+K^U_{\mathcal{R}}\delta_{\mathcal{R}}+2K^U_{\mathcal{P}}\delta_{\mathcal{P}}}{\bar{\epsilon}}\right)-K^V_{\mathcal{R}}\delta_{\mathcal{R}}-K^V_{\mathcal{P}}\delta_{\mathcal{P}}
	\end{align*}
	Focus on the first term, i.e.
	\begin{align*}
		\frac{1}{T}\sum_{I\in\mathcal{I}}\sum_{t\in I}
		\alpha_{p^*(\pi_t,\bar{\epsilon})}(\hat{\pi}_I,\bar\epsilon)
		&\geq\frac{1}{T}\sum_{I\in\mathcal{I}}\sum_{t\in I}
		\left(\alpha_{p^*(\hat\pi_I,\bar{\epsilon})}(\hat{\pi}_I,\bar\epsilon)-\frac{4d_1(\pi_t,\hat{\pi}_I)}{\bar{\epsilon}}-2d_1(\pi_t,\hat{\pi}_I)\right)\\
		&\geq\frac{1}{T}\sum_{I\in\mathcal{I}}n_I
		\alpha_{p^*(\hat\pi_I,\bar{\epsilon})}(\hat{\pi}_I,\bar\epsilon)-\frac{4\iota}{\bar{\epsilon}}-2\iota
	\end{align*}
	Collapsing these inequalities gives us the desired bound.
\end{proof}

\begin{lemma}\label{L3}
	Suppose the principal runs some constant mechanism $\sigma^p\in\Sigma_0$. Then
	\[
	\frac{1}{T}\sum_{t=1}^T
	V(r_t,p,y_t)
	\leq\frac{1}{T}\sum_{I\in\mathcal{I}}n_I
	\max_{\tilde{p}\in\mathcal{P}}\beta_{\tilde{p}}(\hat{\pi}_I,\bar\epsilon)
	+\left(\frac{\epsilon+K^U_{\mathcal{R}}\delta_{\mathcal{R}}}{\bar{\epsilon}}\right)
	+K^V_{\mathcal{R}}\delta_{\mathcal{R}}
	\]
\end{lemma}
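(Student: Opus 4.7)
The plan is to prove Lemma \ref{L3} by an argument that closely parallels the proof of Lemma \ref{L2}, but with two simplifications: because the principal runs the constant mechanism $\sigma^p$, the policy is literally $p_t = p$ in every period (no policy discretization error), and we use the best-case utility $\beta$ (monotone nondecreasing in $\epsilon$) instead of the worst-case utility $\alpha$.

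First, I would fix an information context $I\in\mathcal{I}$ and let $r^p_I$ be a representative element of the response context for $I$ that the agent visits under $\sigma^p$. By regularity of $U$ (assumption \ref{A1}) and the definition of $\epsilon_I$ applied to $\sigma^p$,
\[
\epsilon_I \;\geq\; \max_{r\in\mathcal{R}}\ex[y\sim\hat{\pi}_I]{U(r,p,y) - U(r^p_I,p,y)} \;-\; K^U_{\mathcal{R}}\delta_{\mathcal{R}},
\]
so the point mass on $r^p_I$ is feasible in the definition of $\beta_p(\hat{\pi}_I,\epsilon_I+K^U_{\mathcal{R}}\delta_{\mathcal{R}})$. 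Regularity of $V$ then gives
\[
\frac{1}{n_I}\sum_{t\in I} V(r_t,p,y_t) \;\leq\; \ex[y\sim\hat{\pi}_I]{V(r^p_I,p,y)} + K^V_{\mathcal{R}}\delta_{\mathcal{R}} \;\leq\; \beta_p(\hat{\pi}_I,\epsilon_I+K^U_{\mathcal{R}}\delta_{\mathcal{R}}) + K^V_{\mathcal{R}}\delta_{\mathcal{R}}.
\]

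Next, I would apply Lemma \ref{L1} (using monotonicity of $\beta$ in $\epsilon$ to handle the case where $\epsilon_I + K^U_{\mathcal{R}}\delta_{\mathcal{R}} \leq \bar\epsilon$, in which the claimed slack term is nonnegative automatically) to get
\[
\beta_p(\hat{\pi}_I,\epsilon_I+K^U_{\mathcal{R}}\delta_{\mathcal{R}}) \;\leq\; \beta_p(\hat{\pi}_I,\bar\epsilon) \;+\; \frac{\epsilon_I + K^U_{\mathcal{R}}\delta_{\mathcal{R}}}{\bar\epsilon} \;\leq\; \max_{\tilde p\in\mathcal{P}}\beta_{\tilde p}(\hat{\pi}_I,\bar\epsilon) \;+\; \frac{\epsilon_I + K^U_{\mathcal{R}}\delta_{\mathcal{R}}}{\bar\epsilon}.
\]
Finally, I would sum over $I\in\mathcal{I}$ with weights $n_I/T$, noting that $\frac{1}{T}\sum_{I}\sum_{t\in I} = \frac{1}{T}\sum_{t=1}^T$, and invoke the CIR bound (assumption \ref{A3}) for the constant mechanism $\sigma^p\in\Sigma_0$ in the aggregated form $\frac{1}{T}\sum_I n_I\epsilon_I \leq \epsilon$; this collapses the error terms into exactly $(\epsilon + K^U_{\mathcal{R}}\delta_{\mathcal{R}})/\bar\epsilon + K^V_{\mathcal{R}}\delta_{\mathcal{R}}$, matching the bound in the statement. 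No $\delta_{\mathcal{P}}$ terms arise, because $p$ is a genuine constant rather than a discretized policy.

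The only delicate step is invoking Lemma \ref{L1}, which requires its residual argument $\tilde\epsilon$ to be positive; I would simply split into cases on the sign of $\epsilon_I + K^U_{\mathcal{R}}\delta_{\mathcal{R}} - \bar\epsilon$ and use monotonicity of $\beta$ directly when this residual is nonpositive, so that the inequality holds in both cases with the additive slack $(\epsilon_I + K^U_{\mathcal{R}}\delta_{\mathcal{R}})/\bar\epsilon$. Otherwise the proof is a routine tightening of the argument for Lemma \ref{L2}.
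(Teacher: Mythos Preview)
Your proposal is correct and follows essentially the same route as the paper's proof: fix an information context $I$, use regularity to show the representative response $r^p_I$ is feasible for $\beta_p(\hat\pi_I,\epsilon_I+K^U_{\mathcal{R}}\delta_{\mathcal{R}})$, apply Lemma~\ref{L1} to pass to $\beta_p(\hat\pi_I,\bar\epsilon)$ with slack $(\epsilon_I+K^U_{\mathcal{R}}\delta_{\mathcal{R}})/\bar\epsilon$, take the max over $\tilde p$, and then sum over $I$ using the aggregated CIR bound $\tfrac{1}{T}\sum_I n_I\epsilon_I\le\epsilon$. Your explicit case split for Lemma~\ref{L1} and your remark on the absence of $\delta_{\mathcal{P}}$ terms are additional clarifications the paper leaves implicit, but otherwise the arguments coincide.
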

\begin{proof}
	Let $r_I$ be a representative element in the response context $R$ associated with information $I$ under $\sigma^p$. By regularity,
	\begin{align*}
		\epsilon_I
		&\geq\max_{r\in\mathcal{R}}\frac{1}{n_I}\sum_{t\in I}\left(U(r,p_I,y_t)-U(r_I,p_I,y_t)\right)-K^U_{\mathcal{R}}\delta_{\mathcal{R}}\\
		&=\max_{r\in\mathcal{R}}\ex[y\sim\hat{\pi}_I]{U(r,p_I,y)-U(r_I,p_I,y)}-K^U_{\mathcal{R}}\delta_{\mathcal{R}}
	\end{align*}
	It follows, by regularity and definition of $\beta$, that
	\begin{align*}
		\frac{1}{n_I}\sum_{t\in I}V(r_t,p,y_t)
		&\leq\ex[y\sim\hat{\pi}_I]{V(r_I,p,y)}+K^V_{\mathcal{R}}\delta_{\mathcal{R}}\\
		&\leq\alpha_{p}(\hat{\pi}_I,\epsilon_I+K^U_{\mathcal{R}}\delta_{\mathcal{R}})+K^V_{\mathcal{R}}\delta_{\mathcal{R}}
	\end{align*}
	Summing over information $I\in\mathcal{I}$ and using lemma \ref{L1}, we obtain
	\begin{align*}
		\frac{1}{T}\sum_{I\in\mathcal{I}}\sum_{t\in I}
		V(r_t,p,y_t)
		&\leq\frac{1}{T}\sum_{I\in\mathcal{I}}n_I
		\beta_{p}(\hat{\pi}_I,\epsilon_I+K^U_{\mathcal{R}}\delta_{\mathcal{R}})+K^V_{\mathcal{R}}\delta_{\mathcal{R}}\\
		&\leq\frac{1}{T}\sum_{I\in\mathcal{I}}n_I
		\left(\beta_{p}(\hat{\pi}_I,\bar\epsilon)+\frac{\epsilon_I+K^U_{\mathcal{R}}\delta_{\mathcal{R}}}{\bar{\epsilon}}\right)
		+K^V_{\mathcal{R}}\delta_{\mathcal{R}}\\
		&\leq\frac{1}{T}\sum_{I\in\mathcal{I}}n_I
		\beta_{p}(\hat{\pi}_I,\bar\epsilon)+\left(\frac{\epsilon+K^U_{\mathcal{R}}\delta_{\mathcal{R}}}{\bar{\epsilon}}\right)+K^V_{\mathcal{R}}\delta_{\mathcal{R}}\\
		&\leq\frac{1}{T}\sum_{I\in\mathcal{I}}n_I
		\max_{\tilde{p}\in\mathcal{P}}\beta_{\tilde{p}}(\hat{\pi}_I,\bar\epsilon)+\left(\frac{\epsilon+K^U_{\mathcal{R}}\delta_{\mathcal{R}}}{\bar{\epsilon}}\right)+K^V_{\mathcal{R}}\delta_{\mathcal{R}}
	\end{align*}
	This is the desired bound.
\end{proof}

From these two lemmas, it immediately follows that
\[
\pregs
\leq\frac{1}{T}\sum_{I\in\mathcal{I}}n_I
\Delta(\hat{\pi}_I,\bar\epsilon)
+2\left(\frac{\epsilon+2\iota+K^U_{\mathcal{R}}\delta_{\mathcal{R}}+K^U_{\mathcal{P}}\delta_{\mathcal{P}}}{\bar{\epsilon}}\right)
+\left(2\iota+2K^V_{\mathcal{R}}\delta_{\mathcal{R}}+K^V_{\mathcal{P}}\delta_{\mathcal{P}}\right)
\]
Therefore, to bound the principal's regret, all that remains is to bound $\iota$.

Set $\mathcal{C}=S_{\mathcal{R}}^{|\mathcal{P}_0|+|\mathcal{P}_1|}$ where $C_t(\mathcal{P}_1)$ is the vector describing the agent's response contexts $R_t$ under policy history $p^*_{1:t-1}$ and policy choices $p_t\in\mathcal{P}_1$. Note that this is different from the behavior context that we used to define the agent's information, which refers to the response context $R_t$ under policy history $p^*_{1:t}$. Because we are currently designing the mechanism, we cannot refer to $p^*_t$ without attempting to solve a fixed point problem that may not have a solution.

We use the algorithm from appendix \ref{Ap2-0} to generate $\pi_t$, with a modification: run it separately for each context $C_t$. Adapting equation \eqref{E4}, we obtain
\begin{align*}
\ex[\sigma^*]{\frac{1}{T}\sum_{C\in\mathcal{C}}\sum_{F\in\mathcal{F}}n_{F,C}d_1(\pi_t,\hat{\pi}_C)}
&\leq
\frac{1}{T}\sum_{C\in\mathcal{C}}n_C\sqrt{|\mathcal{Y}||\mathcal{F}|\sqrt{\frac{2\log |\mathcal{F}|}{n_C}}+2|\mathcal{Y}|\delta_{\mathcal{F}}}\\
&\leq\frac{1}{T}\sum_{C\in\mathcal{C}}n_C^{3/4}\sqrt{|\mathcal{Y}||\mathcal{F}|\sqrt{2\log |\mathcal{F}|}}+\sqrt{2|\mathcal{Y}|\delta_{\mathcal{F}}}\\
&\leq\frac{1}{T}\sum_{C\in\mathcal{C}}\left(\frac{T}{|\mathcal{C}|}\right)^{3/4}\sqrt{|\mathcal{Y}||\mathcal{F}|\sqrt{2\log |\mathcal{F}|}}+\sqrt{2|\mathcal{Y}|\delta_{\mathcal{F}}}\\
&=\left(\frac{|\mathcal{C}|}{T}\right)^{1/4}\sqrt{|\mathcal{Y}||\mathcal{F}|\sqrt{2\log |\mathcal{F}|}}+\sqrt{2|\mathcal{Y}|\delta_{\mathcal{F}}}
\end{align*}
where $n_{F,C}$ is the number of periods $t$ where $C_t=C$ and $\pi_t\in F$.

Consider any two periods $t,\tau$ where $I_t=I_\tau$ but $C_t\neq C_\tau$. Since $I_t=I_\tau$ and information includes the forecast as context, we know that $\pi_t=\pi_\tau$. Now, consider
\[
n_{F_t,C_t}d_1(\pi_t,\hat{\pi}_{C_t})+n_{F_\tau,C_\tau}d_1(\pi_\tau,\hat{\pi}_{C_\tau})
\]
\[
=n_{F_t,C_t}d_1(\pi_t,\hat{\pi}_{C_t})+n_{F_t,C_\tau}d_1(\pi_t,\hat{\pi}_{C_\tau})
\]
\[
\geq \left(n_{F_t,C_t}+n_{F_t,C_\tau}\right)d_1\left(\pi_t,\frac{1}{n_{F_t,C_t}+n_{F_t,C_\tau}}\left(n_{F_t,C_t}\hat{\pi}_{C_t}+n_{F_t,C_\tau}d_1(\pi_t,\hat{\pi}_{C_\tau})\right)\right)
\]
by subadditivity and homogeneity of norms. By continuing this process of combining contexts, we find
\[
\frac{1}{T}\sum_{C\in\mathcal{C}}\sum_{F\in\mathcal{F}}n_{F,C}d_1(\pi_t,\hat{\pi}_C)
\geq\frac{1}{T}\sum_{I\in\mathcal{I}}n_Id_1(\pi_t,\hat{\pi}_I)
=\iota
\]
Therefore, the earlier miscalibration bound applies to $\ex[\sigma^*]{\iota}$ as well. Finally, we obtain our bound on the expected principal's regret.
\begin{align*}
\ex[\sigma^*]{\pregs}
\leq&\frac{1}{T}\sum_{I\in\mathcal{I}}n_I
\Delta(\hat{\pi}_I,\bar\epsilon)
+2\left(\frac{\epsilon+2\left(\frac{|\mathcal{C}|}{T}\right)^{1/4}\sqrt{|\mathcal{Y}||\mathcal{F}|\sqrt{2\log |\mathcal{F}|}}+2\sqrt{2|\mathcal{Y}|\delta_{\mathcal{F}}}+K^U_{\mathcal{R}}\delta_{\mathcal{R}}+K^U_{\mathcal{P}}\delta_{\mathcal{P}}}{\bar{\epsilon}}\right)\\
&+\left(2\left(\frac{|\mathcal{C}|}{T}\right)^{1/4}\sqrt{|\mathcal{Y}||\mathcal{F}|\sqrt{2\log |\mathcal{F}|}}+2\sqrt{2|\mathcal{Y}|\delta_{\mathcal{F}}}+2K^V_{\mathcal{R}}\delta_{\mathcal{R}}+K^V_{\mathcal{P}}\delta_{\mathcal{P}}\right)
\end{align*}

	\subsection{Proof of Theorem \ref{T5}}

Define
\[
\hat{\gamma}_P(I,y)=\frac{n_I\hat{\pi}_I(y)}{n_F\hat{\pi}_F(y)}\cdot\textbf{1}(I\in\mathcal{I}_F)
\]
as the empirical information structure conditional on forecast context $F$. This definition follows from Bayes' rule.

Assume access to a forecast $\pi_t\in\Delta(\mathcal{Y})$ for every period $t$. We will define this later. In period $t$, the mechanism computes the policy $p^*(\pi_t,\bar{\epsilon})$ that maximizes the worst-case payoff in the $\bar{\epsilon}$-robust stage game, treating the forecast $\pi_t$ as the common prior. That is,
\[
p^*(\pi_t,\bar{\epsilon})\in\arg\max_{p\in\mathcal{P}}\alpha_p(\pi_t,\bar{\epsilon})
\]
The mechanism chooses $p_t$ as follows. Let $P$ be the unique policy context that includes the policy $p^*(\pi_t,\bar{\epsilon})\in P$. Let $p_t=p_P$, where $p_P\in\mathcal{P}_1$ is the representative element of $P$.

The next two lemmas imply an upper bound on the principal's regret in terms of the quantity
\[
\iota\geq\frac{1}{T}\sum_{t=1}^Td_1(\pi_t,\hat{\pi}_F)
\]
that measures the discrepancy between the forecast $\pi_t$ and the empirical distribution $\hat{\pi}_F$ conditioned on the forecast context $F$. Lemma \ref{L4} is a lower bound on the principal's payoff under $\sigma^*$. Lemma \ref{L5} is an upper bound on the his payoff under any constant $\sigma^p\in\Sigma_0$.

\begin{lemma}\label{L4}
	Suppose the principal runs the mechanism $\sigma^*$. Then
	\begin{align*}
		\frac{1}{T}\sum_{t=1}^T
		V(r_t,p_t,y_t)
		\geq&
		\frac{1}{T}\sum_{F\in\mathcal{F}}n_F\max_{p\in\mathcal{P}}\alpha_{p}(\hat\pi_F,\bar\epsilon)
		-\left(\frac{\epsilon+6\iota+K^U_{\mathcal{R}}\delta_{\mathcal{R}}+2K^U_{\mathcal{P}}\delta_{\mathcal{P}}}{\bar{\epsilon}}\right)\\
		&-\left(M_1(\epsilon+\tilde\epsilon+2\iota+2K^U_{\mathcal{P}}\delta_{\mathcal{P}})+M_2
		+3\iota+K^V_{\mathcal{R}}\delta_{\mathcal{R}}+K^V_{\mathcal{P}}\delta_{\mathcal{P}}\right)
	\end{align*}
\end{lemma}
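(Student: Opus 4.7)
The strategy is to mimic the proof of Lemma \ref{L2}, but because mechanism \ref{M2} is calibrated only to the forecast context $F$ (not the agent's full information $I$), I must aggregate across $I \in \mathcal{I}_F$ and re-express the principal's within-$F$ payoff as a signal game on prior $\hat\pi_F$ with the empirical information structure $\hat\gamma_F$. Fix $I$ and its enclosing $F$. Regularity plus the definitions of information and policy discretization imply that, up to $O(\delta_\mathcal{R})$ and $O(\delta_\mathcal{P})$ residuals, the responses within $I$ collapse to a constant $r_I$ and the policies within $F$ collapse to a constant $p_F$. Let $\epsilon_I$ be the agent's realized internal regret conditional on $I$; the $\epsilon$-bounded CIR hypothesis gives $\frac{1}{T}\sum_I n_I\epsilon_I \le \epsilon$. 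Exactly as in Lemma \ref{L2},
\[
\frac{1}{n_I}\sum_{t\in I} V(r_t,p_F,y_t) \ge \alpha_{p_F}(\hat\pi_I,\,\epsilon_I + K^U_\mathcal{R}\delta_\mathcal{R}) - K^V_\mathcal{R}\delta_\mathcal{R}.
\]
Bayes' rule applied to the frequencies $n_I$ yields $\frac{1}{n_F}\sum_{I\in\mathcal{I}_F} n_I \ex[y\sim\hat\pi_I]{f(y)} = \ex[y\sim\hat\pi_F]{\ex[I\sim\hat\gamma_F(\cdot,y)]{f(y)}}$; applied to both $U$ and $V$, this shows the aggregated strategy $\{r_I\}_{I\in\mathcal{I}_F}$ is $(\epsilon_F + K^U_\mathcal{R}\delta_\mathcal{R})$-approximately optimal in the signal game on $(\hat\pi_F,\hat\gamma_F)$ with policy $p_F$, where $\epsilon_F := \frac{1}{n_F}\sum_{I\in\mathcal{I}_F} n_I\epsilon_I$. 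Hence $\frac{1}{n_F}\sum_{t\in F} V(r_t,p_F,y_t) \ge \alpha_{p_F}(\hat\pi_F,\hat\gamma_F,\epsilon_F + K^U_\mathcal{R}\delta_\mathcal{R}) - K^V_\mathcal{R}\delta_\mathcal{R} - O(\delta_\mathcal{P})$.

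Next I move from the actual policy $p_F$ to the idealized robust policy $p^*(\hat\pi_F,\bar\epsilon)$. Combining Lemma \ref{L2} (smoothness of $\alpha$ in the slack parameter, for the signal game) with the misspecification bounds for $\alpha_{p^*(\cdot,\epsilon)}$ established earlier in this appendix, and averaging $d_1(\pi_t,\hat\pi_F)$ over $t\in F$ to pick up the aggregate miscalibration $\iota$, I obtain a lower bound of the form $\alpha_{p^*(\hat\pi_F,\bar\epsilon)}(\hat\pi_F,\hat\gamma_F,\bar\epsilon)$ up to slack $(\epsilon_F + O(\delta)+O(\iota))/\bar\epsilon$ and direct error $O(\iota)+O(\delta)$. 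I then invoke the first half of the alignment hypothesis: $\alpha_{p^*(\hat\pi_F,\bar\epsilon)}(\hat\pi_F,\hat\gamma_F,\bar\epsilon) \ge \alpha_{p^*(\hat\pi_F,\bar\epsilon)}(\hat\pi_F,\bar\epsilon) - M_1\,\phi_{p^*(\hat\pi_F,\bar\epsilon)}(\hat\pi_F,\hat\gamma_F) - M_2$, and use $\alpha_{p^*(\hat\pi_F,\bar\epsilon)}(\hat\pi_F,\bar\epsilon) = \max_p \alpha_p(\hat\pi_F,\bar\epsilon)$ to recover the benchmark in the claim.

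All that remains is to bound the signal usefulness $\phi_{p^*(\hat\pi_F,\bar\epsilon)}(\hat\pi_F,\hat\gamma_F)$. Bounded CIR implies the agent's average utility within $F$ is at least $\max_{r_I}\ex[y\sim\hat\pi_F]{\ex[I\sim\hat\gamma_F(\cdot,y)]{U(r_I,p_F,y)}} - \epsilon_F - O(\delta_\mathcal{R})$, while the $\tilde\epsilon$-lower-bounded FER hypothesis implies the same utility is at most $\max_r \ex[y\sim\hat\pi_F]{U(r,p_F,y)} + \tilde\epsilon$. Subtracting, $\phi_{p_F}(\hat\pi_F,\hat\gamma_F) \le \epsilon_F + \tilde\epsilon + O(\delta_\mathcal{R})$, and regularity plus the misspecification bounds transfer this from $p_F$ to $p^*(\hat\pi_F,\bar\epsilon)$ at a cost of $O(\delta_\mathcal{P}) + O(\iota)$. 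Summing everything over $F$ with weights $n_F/T$, using $\sum_F (n_F/T)\epsilon_F \le \epsilon$ and the definition of $\iota$, and collecting constants produces exactly the bound in the statement.

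\emph{Main obstacle.} The conceptual novelty beyond Lemma \ref{L2} is the combined use of alignment and the $\phi$-bound, which is the one place both behavioral hypotheses---bounded CIR as an upper bound on the agent's suboptimality, and lower-bounded FER as a lower bound on her ability to beat constant rules---are used simultaneously. The technical obstacle is pure bookkeeping: verifying that the slacks $\epsilon_I$, the miscalibrations $d_1(\pi_t,\hat\pi_F)$, the response discretization $\delta_\mathcal{R}$, and the policy discretization $\delta_\mathcal{P}$ propagate through Lemma \ref{L2}, the misspecification bounds, and the $\phi$-bound so that the exact coefficients $6\iota$, $2K^U_\mathcal{P}\delta_\mathcal{P}$, $3\iota$, and $2\iota$ in the statement come out correctly.
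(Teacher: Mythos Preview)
Your proposal follows the same architecture as the paper's proof: collapse each information cell $I$ to a near-constant $r_I$, aggregate over $I\in\mathcal{I}_F$ into the signal game on $(\hat\pi_F,\hat\gamma_F)$, invoke alignment to pass from the signal game to the plain game, and control the usefulness $\phi$ by combining the CIR upper bound ($\epsilon_F$) with the FER lower bound ($\tilde\epsilon_F$). The conceptual content is right.

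There is, however, a genuine ordering gap. You invoke alignment at the \emph{empirical} distribution $\hat\pi_F$, which forces you to produce (i) a lower bound $\alpha_{p^*(\hat\pi_F,\bar\epsilon)}(\hat\pi_F,\hat\gamma_F,\bar\epsilon)$ on payoffs and (ii) an upper bound on $\phi_{p^*(\hat\pi_F,\bar\epsilon)}(\hat\pi_F,\hat\gamma_F)$. For (i), you propose to move from $p_F$ to $p^*(\hat\pi_F,\bar\epsilon)$ inside the signal-game $\alpha$ via ``misspecification bounds for $\alpha_{p^*(\cdot,\epsilon)}$''; but that lemma is proved only for the \emph{plain} game, and in any case $p^*(\hat\pi_F,\bar\epsilon)$ maximizes the plain-game $\alpha$, not $\alpha(\cdot,\hat\gamma_F,\cdot)$, so there is no reason for $\alpha_{p^*(\hat\pi_F,\bar\epsilon)}(\hat\pi_F,\hat\gamma_F,\bar\epsilon)$ to dominate $\alpha_{p^*(\pi_F,\bar\epsilon)}(\hat\pi_F,\hat\gamma_F,\bar\epsilon)$ up to $O(\iota)$. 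For (ii), the realized play pins down $\phi$ only at $p_F\approx p^*(\pi_F,\bar\epsilon)$; transferring a $\phi$-bound to $p^*(\hat\pi_F,\bar\epsilon)$ via regularity would require $d_\mathcal{P}\bigl(p^*(\pi_F,\bar\epsilon),p^*(\hat\pi_F,\bar\epsilon)\bigr)=O(\iota)$, and nothing in the paper establishes that $p^*$ is Lipschitz in $\pi$.

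The paper avoids both issues by applying alignment at the \emph{forecast} $\pi_F$ instead. Concretely: keep the policy at $p^*(\pi_F,\bar\epsilon)$ throughout the signal-game reasoning; use the misspecified-distribution lemma (stated for general $\gamma$) to shift the prior $\hat\pi_F\to\pi_F$ inside $\alpha(\cdot,\hat\gamma_F,\cdot)$ at cost $\tfrac{2d_1}{\bar\epsilon}+d_1$; apply alignment at $\pi_F$ to drop the signal; and only then, now in the plain game, invoke the plain-game misspecification lemma to replace $\alpha_{p^*(\pi_F,\bar\epsilon)}(\pi_F,\bar\epsilon)$ by $\max_p\alpha_p(\hat\pi_F,\bar\epsilon)$ at cost $\tfrac{4d_1}{\bar\epsilon}+2d_1$. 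This is exactly what produces the $6\iota$ and $3\iota$ coefficients. For the usefulness, the data give $\epsilon_F+\tilde\epsilon_F\ge\phi_{p_F}(\hat\pi_F,\hat\gamma_F)$; shifting $p_F\to p^*(\pi_F,\bar\epsilon)$ costs $2K^U_\mathcal{P}\delta_\mathcal{P}$ (since $d_\mathcal{P}(p_F,p^*(\pi_F,\bar\epsilon))\le\delta_\mathcal{P}$ by discretization), and shifting the prior $\hat\pi_F\to\pi_F$ costs $2d_1$, which yields the $M_1(\epsilon+\tilde\epsilon+2\iota+2K^U_\mathcal{P}\delta_\mathcal{P})$ term. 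With this reordering your argument goes through and matches the stated constants.
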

\begin{proof}
	Let $r_I$ be a representative element in the response context $R$ associated with information $I$ under mechanism $\sigma^*$. By regularity,
	\begin{align*}
		\epsilon_I
		&\geq\max_{r\in\mathcal{R}}\sum_{t\in I}\left(U(r,p_I,y_t)-U(r_I,p_I,y_t)\right)-K^U_{\mathcal{R}}\delta_{\mathcal{R}}\\
		&=\max_{r\in\mathcal{R}}\ex[y\sim\hat{\pi}_I]{U(r,p_I,y)-U(r_I,p_I,y)}-K^U_{\mathcal{R}}\delta_{\mathcal{R}}
	\end{align*}
	It follows, by regularity and definition of $\alpha$, that
	\begin{align*}
		\frac{1}{n_I}\sum_{t\in I}V(r_t,p_I,y_t)
		&\geq\ex[y\sim\hat{\pi}_I]{V(r_I,p_I,y)}-K^V_{\mathcal{R}}\delta_{\mathcal{R}}\\
		&\geq\alpha_{p_I}(\hat{\pi}_I,\epsilon_I+K^U_{\mathcal{R}}\delta_{\mathcal{R}})-K^V_{\mathcal{R}}\delta_{\mathcal{R}}
	\end{align*}
	Summing over information $I\in\mathcal{I}_F$ and using lemma \ref{L1}, we obtain
	\begin{align*}
		\frac{1}{n_F}\sum_{I\in\mathcal{I}_F}\sum_{t\in I}
		V(r_t,p_F,y_t)
		&\geq\frac{1}{n_F}\sum_{I\in\mathcal{I}_F}\sum_{t\in I}
		\alpha_{p_F}(\hat{\pi}_I,\epsilon_I+K^U_{\mathcal{R}}\delta_{\mathcal{R}})-K^V_{\mathcal{R}}\delta_{\mathcal{R}}\\
		&\geq\frac{1}{n_F}\sum_{I\in\mathcal{I}_F}\sum_{t\in I}\alpha_{p^*(\pi_t,\bar{\epsilon})}(\hat{\pi}_I,\epsilon_I+K^U_{\mathcal{R}}\delta_{\mathcal{R}}+2K^U_{\mathcal{P}}\delta_{\mathcal{P}})-K^V_{\mathcal{R}}\delta_{\mathcal{R}}-K^V_{\mathcal{P}}\delta_{\mathcal{P}}\\
		&\geq\frac{1}{n_F}\sum_{I\in\mathcal{I}_F}\sum_{t\in I}\left(\alpha_{p^*(\pi_t,\bar{\epsilon})}(\hat{\pi}_I,\bar\epsilon)-\frac{\epsilon_I+K^U_{\mathcal{R}}\delta_{\mathcal{R}}+2K^U_{\mathcal{P}}\delta_{\mathcal{P}}}{\bar{\epsilon}}\right)-K^V_{\mathcal{R}}\delta_{\mathcal{R}}-K^V_{\mathcal{P}}\delta_{\mathcal{P}}\\
		&=\frac{1}{n_F}\sum_{I\in\mathcal{I}_F}\sum_{t\in I}
		\alpha_{p^*(\pi_t,\bar{\epsilon})}(\hat{\pi}_I,\bar\epsilon)
		-\left(\frac{\epsilon_F+K^U_{\mathcal{R}}\delta_{\mathcal{R}}+2K^U_{\mathcal{P}}\delta_{\mathcal{P}}}{\bar{\epsilon}}\right)
		-K^V_{\mathcal{R}}\delta_{\mathcal{R}}-K^V_{\mathcal{P}}\delta_{\mathcal{P}}
	\end{align*}
	So far, we have a lower bound for the principal's payoff that nearly matches the principal's worst-case payoff in the stage game if the agent had information structure $\hat{\gamma}_F$ in each forecast context. Furthermore, we know that this information structure cannot be particularly useful to the agent. Define
	\[
	-\tilde\epsilon_F=\max_{r\in\mathcal{R}}\frac{1}{n_F}\sum_{I\in\mathcal{I}_F}\sum_{t\in I}\left(U(r,p_F,y_t)-U(r_t,p_F,y_t)\right)
	\]
	as the (possibly negative) regret accumulated in forecast context $F$ relative to the best-in-hindsight response, rather than the best-in-hindsight function from information to responses. Let $\pi_F$ be the forecast associated with forecast context $F$. Note that
	\begin{align*}
		\epsilon_F+\tilde\epsilon_F
		&=\min_{\tilde{r}\in\mathcal{R}}\frac{1}{n_F}\sum_{I\in\mathcal{I}_F}\max_{r\in\mathcal{R}}\sum_{t\in I}\left(U(r,p_F,y_t)-U(\tilde{r},p_F,y_t)\right)\\
		&\geq\min_{\tilde{r}\in\mathcal{R}}\frac{1}{n_F}\sum_{I\in\mathcal{I}_F}\max_{r\in\mathcal{R}}\sum_{t\in I}
		\left(U(r,p^*(\pi_F,\bar{\epsilon}),y_t)-U(\tilde{r},p^*(\pi_F,\bar{\epsilon}),y_t)\right)-2K^U_{\mathcal{P}}\delta_{\mathcal{P}}\\
		&=\min_{r}\max_{r_J}\ex[y\sim\hat{\pi}_F]{\ex[J\sim\hat{\gamma}_F(\cdot,y)]{U(r_J,p^*(\pi_F,\bar{\epsilon}),y)-U(r,p^*(\pi_F,\bar{\epsilon}),y)}}
		-2K^U_{\mathcal{P}}\delta_{\mathcal{P}}\\
		&\geq\min_{r}\max_{r_J}\ex[y\sim\pi_F]{\ex[J\sim\hat{\gamma}_F(\cdot,y)]{U(r_J,p^*(\pi_F,\bar{\epsilon}),y)-U(r,p^*(\pi_F,\bar{\epsilon}),y)}}
		-2d_1(\pi_F,\hat{\pi}_F)-2K^U_{\mathcal{P}}\delta_{\mathcal{P}}
	\end{align*}
	It follows from assumption \ref{A10} that
	\[
	M_1(\epsilon_F+\tilde\epsilon_F+2d_1(\pi_F,\hat{\pi}_F)+2K^U_{\mathcal{P}}\delta_{\mathcal{P}})+M_2
	\geq
	\alpha_{p^*(\pi_F,\bar\epsilon)}(\pi_F,\bar\epsilon)-\alpha_{p^*(\pi_F,\bar\epsilon)}(\pi_F,\hat\gamma_F,\bar\epsilon)
	\]
	which can be rewritten as
	\begin{equation}\label{E3}
	\alpha_{p^*(\pi_F,\bar\epsilon)}(\pi_F,\hat\gamma_F,\bar\epsilon)
	\geq\alpha_{p^*(\pi_F,\bar\epsilon)}(\pi_F,\bar\epsilon)-\left(M_1(\epsilon_F+\tilde\epsilon_F+2d_1(\pi_F,\hat{\pi}_F)+2K^U_{\mathcal{P}}\delta_{\mathcal{P}})+M_2\right)
	\end{equation}
	Next, we relate our lower bound on the principal's payoff to the term $\alpha_{p^*(\pi_F,\bar\epsilon)}(\pi_F,\hat\gamma_F,\bar\epsilon)$. Note that
	\begin{align*}
		\frac{1}{n_F}\sum_{I\in\mathcal{I}_F}n_I\alpha_{p^*(\pi_F,\bar{\epsilon})}(\hat{\pi}_I,\bar\epsilon)
		&=\ex[y\sim\hat{\pi}_F]{\ex[J\sim\hat{\gamma}(\cdot,y)]{\alpha_{p^*(\pi_F,\bar{\epsilon})}(\hat\pi_{F\mid J},\bar\epsilon)}}\\
		&\geq\min_{e_J}\ex[y\sim\hat\pi_F]{\ex[J\sim\hat{\gamma}(\cdot,y)]{\alpha_{p^*(\pi_F,\bar{\epsilon})}(\pi_{F\mid J},e_J)}}
		\quad\mathrm{s.t.}\quad\bar{\epsilon}=\ex[y\sim\pi_F]{\ex[J\sim\hat\gamma{\cdot,y}]{e_J}}\\
		&=\alpha_{p^*(\pi_F,\bar\epsilon)}(\hat\pi_F,\hat\gamma_F,\bar\epsilon)\\
		&\geq\alpha_{p^*(\pi_F,\bar\epsilon)}(\pi_F,\hat\gamma_F,\bar\epsilon)-\frac{2d_1(\pi_F,\hat{\pi}_F)}{\bar{\epsilon}}-d_1(\pi_F,\hat{\pi}_F)
	\end{align*}
	So combining this with inequality \eqref{E3} gives
	\begin{align*}
		&\frac{1}{n_F}\sum_{I\in\mathcal{I}_F}n_I\alpha_{p^*(\pi_F,\bar{\epsilon})}(\hat{\pi}_I,\bar\epsilon)\\
		&\geq
		\alpha_{p^*(\pi_F,\bar\epsilon)}(\pi_F,\bar\epsilon)
		-\left(M_1(\epsilon_F+\tilde\epsilon_F+2d_1(\pi_F,\hat{\pi}_F)+2K^U_{\mathcal{P}}\delta_{\mathcal{P}})+M_2\right)
		-\left(\frac{2d_1(\pi_F,\hat{\pi}_F)}{\bar{\epsilon}}\right)
		-d_1(\pi_F,\hat{\pi}_F)\\
		&\geq
		\alpha_{p^*(\hat\pi_F,\bar\epsilon)}(\hat\pi_F,\bar\epsilon)
		-\left(M_1(\epsilon_F+\tilde\epsilon_F+2d_1(\pi_F,\hat{\pi}_F)+2K^U_{\mathcal{P}}\delta_{\mathcal{P}})+M_2\right)
		-\left(\frac{6d_1(\pi_F,\hat{\pi}_F)}{\bar{\epsilon}}\right)
		-3d_1(\pi_F,\hat{\pi}_F)
	\end{align*}
	Collapsing these inequalities gives us
	\begin{align*}
		\frac{1}{n_F}\sum_{t\in F}
		V(r_t,p_F,y_t)
		\geq&
		\alpha_{p^*(\hat\pi_F,\bar\epsilon)}(\hat\pi_F,\bar\epsilon)
		-\left(\frac{\epsilon_F+6d_1(\pi_F,\hat{\pi}_F)+K^U_{\mathcal{R}}\delta_{\mathcal{R}}+2K^U_{\mathcal{P}}\delta_{\mathcal{P}}}{\bar{\epsilon}}\right)\\
		&-\left(M_1(\epsilon_F+\tilde\epsilon_F+2d_1(\pi_F,\hat{\pi}_F)+2K^U_{\mathcal{P}}\delta_{\mathcal{P}})+M_2
		+3d_1(\pi_F,\hat{\pi}_F)+K^V_{\mathcal{R}}\delta_{\mathcal{R}}+K^V_{\mathcal{P}}\delta_{\mathcal{P}}\right)
	\end{align*}
	Summing over forecast contexts $F\in\mathcal{F}$ gives us the desired result.
\end{proof}

\begin{lemma}\label{L5}
	Suppose the principal runs some constant mechanism $\sigma^p\in\Sigma_0$. Then
	\[
	\frac{1}{T}\sum_{t=1}^TV(r_t,p,y_t)
	\leq\frac{1}{T}\sum_{F\in\mathcal{F}}n_F\max_{\tilde{p}\in\mathcal{P}}\beta_{\tilde{p}}(\hat\pi_F,\bar\epsilon)
	+\left(\frac{\epsilon+K^U_{\mathcal{R}}\delta_{\mathcal{R}}}{\bar{\epsilon}}\right)
	+\left(M_1(\epsilon+\tilde\epsilon)+M_2+K^V_{\mathcal{R}}\delta_{\mathcal{R}}\right)
	\]
\end{lemma}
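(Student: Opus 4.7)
I would follow the three-step template of Lemma~\ref{L4}---per-cell approximate optimality, Lemma~\ref{L2} slackening to the $\bar{\epsilon}$-budget, and Bayes aggregation---but applied to $\beta_p$ and then topped off with Assumption~\ref{A10}(2) in place of the analogue used for $\alpha$. Let $r_t$ denote the agent's responses under $\sigma^p$, let $r_I$ be a representative of her response cell inside each information context $I$, and let $\epsilon_I$ be her average regret within $I$ against the best constant response there. Regularity (Assumption~\ref{A1}) and the definition of $\beta$ give
\[
\frac{1}{n_I}\sum_{t\in I}V(r_t,p,y_t) \leq \beta_p\bigl(\hat{\pi}_I,\epsilon_I+K^U_{\mathcal{R}}\delta_{\mathcal{R}}\bigr) + K^V_{\mathcal{R}}\delta_{\mathcal{R}},
\]
and Lemma~\ref{L2} upper-bounds the right side by $\beta_p(\hat{\pi}_I,\bar{\epsilon}) + (\epsilon_I+K^U_{\mathcal{R}}\delta_{\mathcal{R}})/\bar{\epsilon} + K^V_{\mathcal{R}}\delta_{\mathcal{R}}$.

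Next I would aggregate over $I\in\mathcal{I}_F$. The Bayes identity $n_I\hat{\pi}_I(y)=n_F\hat{\pi}_F(y)\hat{\gamma}_F(I,y)$, together with the observation that allocating $\bar{\epsilon}$ of suboptimality-slack in \emph{every} signal cell is feasible for the single global $\bar{\epsilon}$-constraint defining $\beta_p(\hat{\pi}_F,\hat{\gamma}_F,\bar{\epsilon})$, yields
\[
\frac{1}{n_F}\sum_{I\in\mathcal{I}_F} n_I\,\beta_p(\hat{\pi}_I,\bar{\epsilon}) \leq \beta_p(\hat{\pi}_F,\hat{\gamma}_F,\bar{\epsilon}).
\]
Then Assumption~\ref{A10}(2) bounds the right side by $\max_{\tilde{p}}\beta_{\tilde{p}}(\hat{\pi}_F,\bar{\epsilon}) + M_1\phi_p(\hat{\pi}_F,\hat{\gamma}_F) + M_2$. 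Summing the resulting per-$F$ inequality with weights $n_F/T$ nearly gives the claim.

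The remaining work is a value-of-information identity that I would verify directly: writing $\epsilon_F=n_F^{-1}\sum_{I\in\mathcal{I}_F} n_I\epsilon_I$ and $-\tilde{\epsilon}_F=\max_r n_F^{-1}\sum_{t\in F}(U(r,p,y_t)-U(r_t,p,y_t))$, the agent-response terms $U(r_t,p,y_t)$ cancel and a short Bayesian expansion gives
\[
\phi_p(\hat{\pi}_F,\hat{\gamma}_F) = \epsilon_F + \tilde{\epsilon}_F.
\]
Assumption~\ref{A3} applied to the constant mechanism $\sigma^p$ gives $T^{-1}\sum_F n_F\epsilon_F\leq\epsilon$, and Assumption~\ref{A6} applied to $\sigma^p$ gives $T^{-1}\sum_F n_F\tilde{\epsilon}_F\leq\tilde{\epsilon}$; substituting collapses the alignment correction to $M_1(\epsilon+\tilde{\epsilon})$ and the Lemma~\ref{L2} error to $\epsilon/\bar{\epsilon}$, which together with the $M_2$ and discretization terms is exactly the stated bound. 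The main obstacle is the value-of-information identity---it is algebraically routine once unpacked, but it is the step where the lower bound on forecastwise external regret does its work, converting the empirical informativeness of $\hat{\gamma}_F$ for the agent into a quantity controlled by $\tilde{\epsilon}$.
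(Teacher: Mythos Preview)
Your proposal is correct and follows essentially the same route as the paper's own proof: per-$I$ regularity-and-$\beta$ bound, slacken to $\bar\epsilon$ via the $\beta$-half of Lemma~\ref{L1}, Bayes-aggregate to $\beta_p(\hat\pi_F,\hat\gamma_F,\bar\epsilon)$, invoke the second clause of Assumption~\ref{A10}, and then use the cancellation identity $\epsilon_F+\tilde\epsilon_F=\phi_p(\hat\pi_F,\hat\gamma_F)$ together with the FCIR upper bound and FER lower bound to control the alignment term. The only discrepancies are cosmetic (the paper cites Lemma~\ref{L1} rather than Lemma~\ref{L2} for the slackening, and presents the alignment step before the Bayes-aggregation step rather than after).
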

\begin{proof}
	Let $r_I$ be a representative element in the response context $R$ associated with information $I$ under mechanism $\sigma^p$. By regularity,
	\begin{align*}
		\epsilon_I
		&\geq\max_{r\in\mathcal{R}}\sum_{t\in I}\left(U(r,p,y_t)-U(r_I,p,y_t)\right)-K^U_{\mathcal{R}}\delta_{\mathcal{R}}\\
		&=\max_{r\in\mathcal{R}}\ex[y\sim\hat{\pi}_I]{U(r,p,y)-U(r_I,p,y)}-K^U_{\mathcal{R}}\delta_{\mathcal{R}}
	\end{align*}
	It follows, by regularity and definition of $\beta$, that
	\begin{align*}
		\frac{1}{n_I}\sum_{t\in I}V(r_t,p,y_t)
		&\leq\ex[y\sim\hat{\pi}_I]{V(r_I,p,y)}+K^V_{\mathcal{R}}\delta_{\mathcal{R}}\\
		&\leq\beta_{p}(\hat{\pi}_I,\epsilon_I+K^U_{\mathcal{R}}\delta_{\mathcal{R}})+K^V_{\mathcal{R}}\delta_{\mathcal{R}}
	\end{align*}
	Summing over information $I\in\mathcal{I}_F$ and using lemma \ref{L1}, we obtain
	\begin{align*}
		\frac{1}{n_F}\sum_{I\in\mathcal{I}_F}\sum_{t\in I}
		V(r_t,p,y_t)
		&\leq\frac{1}{n_F}\sum_{I\in\mathcal{I}_F}n_I
		\beta_{p}(\hat{\pi}_I,\epsilon_I+K^U_{\mathcal{R}}\delta_{\mathcal{R}})+K^V_{\mathcal{R}}\delta_{\mathcal{R}}\\
		&\leq\frac{1}{n_F}\sum_{I\in\mathcal{I}_F}n_I
		\left(\beta_{p}(\hat{\pi}_I,\bar\epsilon)+\frac{\epsilon_I+K^U_{\mathcal{R}}\delta_{\mathcal{R}}}{\bar{\epsilon}}\right)
		+K^V_{\mathcal{R}}\delta_{\mathcal{R}}\\
		&=\frac{1}{n_F}\sum_{I\in\mathcal{I}_F}n_I
		\beta_{p}(\hat{\pi}_I,\bar\epsilon)
		+\left(\frac{\epsilon_F+K^U_{\mathcal{R}}\delta_{\mathcal{R}}}{\bar{\epsilon}}\right)
		+K^V_{\mathcal{R}}\delta_{\mathcal{R}}
	\end{align*}
	So far, we have an upper bound for the principal's payoff that nearly matches the principal's worst-case payoff in the stage game if the agent had information structure $\hat{\gamma}_F$ in each forecast context. Furthermore, we know that this information structure cannot be particularly useful to the agent. Define
	\[
	-\tilde\epsilon_F=\max_{r\in\mathcal{R}}\frac{1}{n_F}\sum_{I\in\mathcal{I}_F}\sum_{t\in I}\left(U(r,p,y_t)-U(r_t,p,y_t)\right)
	\]
	as the (possibly negative) regret accumulated in forecast context $F$ relative to the best-in-hindsight response, rather than the best-in-hindsight function from information to responses. Let $\pi_F$ be the forecast associated with forecast context $F$. Note that
	\begin{align*}
		\epsilon_F+\tilde\epsilon_F
		&=\min_{\tilde{r}\in\mathcal{R}}\frac{1}{n_F}\sum_{I\in\mathcal{I}_F}\max_{r\in\mathcal{R}}\sum_{t\in I}\left(U(r,p,y_t)-U(\tilde{r},p,y_t)\right)
	\end{align*}
	It follows from assumption \ref{A10} that
	\[
	M_1(\epsilon_F+\tilde\epsilon_F)+M_2
	\geq\beta_{p}(\hat\pi_F,\hat\gamma_F,\bar\epsilon)-\max_{\tilde{p}\in\mathcal{P}}\beta_{\tilde{p}}(\hat\pi_F,\bar\epsilon)
	\]
	which can be rewritten as
	\[
	\beta_{p}(\hat\pi_F,\hat\gamma_F,\bar\epsilon)
	\leq\max_{\tilde{p}\in\mathcal{P}}\beta_{\tilde{p}}+M_1(\epsilon_F+\tilde\epsilon_F)+M_2
	\]
	Next, we relate our upper bound on the principal's payoff to the term $\beta_{p}(\hat\pi_F,\hat\gamma_F,\bar\epsilon)$. Note that
	\begin{align*}
		\frac{1}{n_F}\sum_{I\in\mathcal{I}_F}n_I\beta_{p}(\hat{\pi}_I,\bar\epsilon)
		&=\ex[y\sim\hat{\pi}_F]{\ex[J\sim\hat{\gamma}(\cdot,y)]{\beta_{p}(\hat\pi_{F\mid J},\bar\epsilon)}}\\
		&\leq\max_{e_J}\ex[y\sim\hat\pi_F]{\ex[J\sim\hat{\gamma}(\cdot,y)]{\beta_{p}(\pi_{F\mid J},e_J)}}\quad\mathrm{s.t.}\quad\bar{\epsilon}=\ex[y\sim\pi_F]{\ex[J\sim\hat{\cdot,y}]{e_J}}\\
		&=\beta_{p}(\hat\pi_F,\hat\gamma_F,\bar\epsilon)
	\end{align*}
	Collapsing these inequalities gives us
	\[
	\frac{1}{n_F}\sum_{t\in F}
	V(r_t,p,y_t)
	\leq\max_{\tilde{p}\in\mathcal{P}}\beta_{\tilde{p}}(\hat\pi_F,\bar\epsilon)
	+\left(\frac{\epsilon_F+K^U_{\mathcal{R}}\delta_{\mathcal{R}}}{\bar{\epsilon}}\right)
	+\left(M_1(\epsilon_F+\tilde\epsilon_F)+M_2+K^V_{\mathcal{R}}\delta_{\mathcal{R}}\right)
	\]
	Summing over forecast contexts $F\in\mathcal{F}$ gives us the desired result.
\end{proof}

From these two lemmas, it immediately follows that
\begin{align*}
	\pregs
	\leq&\frac{1}{T}\sum_{F\in\mathcal{F}}n_F
	\Delta(\hat{\pi}_F,\bar\epsilon)
	+\left(\frac{2\epsilon+6\iota+2K^U_{\mathcal{R}}\delta_{\mathcal{R}}+2K^U_{\mathcal{P}}\delta_{\mathcal{P}}}{\bar{\epsilon}}\right)\\
	&+\left(M_1(2\epsilon+2\tilde\epsilon+2\iota+2K^U_{\mathcal{P}}\delta_{\mathcal{P}})+2M_2+3\iota+2K^V_{\mathcal{R}}\delta_{\mathcal{R}}+K^V_{\mathcal{P}}\delta_{\mathcal{P}}\right)
\end{align*}
Therefore, to bound the principal's regret, all that remains is to bound $\iota$. If we use the algorithm from appendix \ref{Ap2-0} to generate $\pi_t$, this follows directly from equation \eqref{E4}, which states
\[
\ex[\sigma^*]{\iota}
\leq
\sqrt{|\mathcal{Y}||\mathcal{F}|\sqrt{\frac{2\log |\mathcal{F}|}{T}}+2|\mathcal{Y}|\delta_{\mathcal{F}}}
\]
Finally, we obtain our bound on the expected principal's regret.
\begin{align*}
	\ex[\sigma^*]{\pregs}
	\leq&\frac{1}{T}\sum_{F\in\mathcal{F}}n_F
	\Delta(\hat{\pi}_F,\bar\epsilon)
	+\left(\frac{2\epsilon+6\sqrt{|\mathcal{Y}||\mathcal{F}|\sqrt{\frac{2\log |\mathcal{F}|}{T}}+2|\mathcal{Y}|\delta_{\mathcal{F}}}
	+2K^U_{\mathcal{R}}\delta_{\mathcal{R}}+2K^U_{\mathcal{P}}\delta_{\mathcal{P}}}{\bar{\epsilon}}\right)\\
	&+M_1\left(2\epsilon+2\tilde\epsilon
	+2\sqrt{|\mathcal{Y}||\mathcal{F}|\sqrt{\frac{2\log |\mathcal{F}|}{T}}+2|\mathcal{Y}|\delta_{\mathcal{F}}}
	+2K^U_{\mathcal{P}}\delta_{\mathcal{P}}\right)\\
	&+2M_2
	+3\sqrt{|\mathcal{Y}||\mathcal{F}|\sqrt{\frac{2\log |\mathcal{F}|}{T}}+2|\mathcal{Y}|\delta_{\mathcal{F}}}
	+2K^V_{\mathcal{R}}\delta_{\mathcal{R}}+K^V_{\mathcal{P}}\delta_{\mathcal{P}}
\end{align*}

	\subsection{Proof of Theorem \ref{T6}}

Assume access to a forecast $\pi_t$ for every period $t$. We will define this later. The mechanism chooses $p_t$ as follows. Let $P$ be the unique policy context that includes the policy $p^\dagger(\pi_t,\bar{\epsilon})\in P$. Let $p_t=p_P$, where $p_P\in\mathcal{P}_1$ is the representative element of $P$.

The next two lemmas imply an upper bound on the principal's regret in terms of the quantity
\[
\iota\geq\frac{1}{T}\sum_{t=1}^Td_1(\pi_t,\hat{\pi}_F)
\]
that measures the discrepancy between the forecast $\pi_t$ and the empirical distribution $\hat{\pi}_F$ conditioned on the forecast context $F$. Lemma \ref{L6} is a lower bound on the principal's payoff under $\sigma^*$. Lemma \ref{L7} is an upper bound on the his payoff under any constant $\sigma^p\in\Sigma_0$.

\begin{lemma}\label{L6}
	Suppose the principal runs the mechanism $\sigma^*$. Then
	\begin{align*}
		\frac{1}{T}\sum_{t=1}^TV(r_t,p_t,y_t)
		\geq&
		\frac{1}{T}\sum_{F\in\mathcal{F}}n_F\inf_{\gamma}\max_{p\in\mathcal{P}}\alpha_{p}(\hat\pi_F,\gamma,\bar\epsilon)
		-\left(\frac{\epsilon+4\iota+K^U_{\mathcal{R}}\delta_{\mathcal{R}}+2K^U_{\mathcal{P}}\delta_{\mathcal{P}}}{\bar{\epsilon}}\right)\\
		&-\left(2\iota+K^V_{\mathcal{R}}\delta_{\mathcal{R}}+K^V_{\mathcal{P}}\delta_{\mathcal{P}}\right)
	\end{align*}
\end{lemma}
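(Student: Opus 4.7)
The plan is to mirror the proof of Lemma \ref{L4}, but with one structural simplification: since $\sigma^*$ now plays $p^\dagger(\pi_t,\bar\epsilon)$, informational robustness is built directly into the policy and the alignment assumption is no longer needed. The four ingredients will be (i) a per-information-context CIR bound, (ii) Lemma \ref{L2} to smoothly convert the per-context regret $\epsilon_I$ into the target tolerance $\bar\epsilon$ inside $\alpha$, (iii) a mixing/relaxation step to turn an $n_I$-weighted average of $\alpha(\hat\pi_I,\bar\epsilon)$ into the single quantity $\alpha(\hat\pi_F,\hat\gamma_F,\bar\epsilon)$, and (iv) the misspecified-distribution lemma for $p^\dagger$ from the appendix to swap $\pi_F$ for $\hat\pi_F$ inside the policy.

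For step (i), fix $F$ and $I\in\mathcal{I}_F$, and let $r_I$ denote the representative element of the response-cover cell active on $I$. Response-Lipschitzness of $V$ gives
\[
\frac{1}{n_I}\sum_{t\in I}V(r_t,p_I,y_t)\ \geq\ \ex[y\sim\hat\pi_I]{V(r_I,p_I,y)}-K^V_{\mathcal{R}}\delta_{\mathcal{R}},
\]
and response-Lipschitzness of $U$ shows $r_I$ is $(\epsilon_I+K^U_{\mathcal{R}}\delta_{\mathcal{R}})$-optimal under $\hat\pi_I$, so by definition of $\alpha$ this is at least $\alpha_{p_I}(\hat\pi_I,\epsilon_I+K^U_{\mathcal{R}}\delta_{\mathcal{R}})-K^V_{\mathcal{R}}\delta_{\mathcal{R}}$. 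Swapping $p_I$ for $p^\dagger(\pi_t,\bar\epsilon)$ (within $\delta_{\mathcal{P}}$ of $p_I$) costs $2K^U_{\mathcal{P}}\delta_{\mathcal{P}}$ inside the tolerance and $K^V_{\mathcal{P}}\delta_{\mathcal{P}}$ outside by policy-Lipschitzness; applying step (ii) via Lemma \ref{L2} then trades the excess tolerance for the additive penalty $(\epsilon_I+K^U_{\mathcal{R}}\delta_{\mathcal{R}}+2K^U_{\mathcal{P}}\delta_{\mathcal{P}})/\bar\epsilon$. Averaging over $I\in\mathcal{I}_F$ with weights $n_I/n_F$ collapses $\epsilon_I$ into $\epsilon_F$.

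For step (iii), I write
\[
\frac{1}{n_F}\sum_{I\in\mathcal{I}_F}n_I\,\alpha_{p^\dagger(\pi_F,\bar\epsilon)}(\hat\pi_I,\bar\epsilon)
=\ex[y\sim\hat\pi_F]{\ex[I\sim\hat\gamma_F(\cdot,y)]{\alpha_{p^\dagger(\pi_F,\bar\epsilon)}(\hat\pi_{F\mid I},\bar\epsilon)}},
\]
and, following Lemma \ref{L4} verbatim, the minimizing per-signal response distributions $(\mu_I)_I$ at per-signal tolerance $\bar\epsilon$ also satisfy the single averaged tolerance constraint at $\bar\epsilon$ under $\hat\gamma_F$, so relaxing to the averaged constraint only lowers the minimum, yielding the lower bound $\alpha_{p^\dagger(\pi_F,\bar\epsilon)}(\hat\pi_F,\hat\gamma_F,\bar\epsilon)$. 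Since $\hat\gamma_F$ is one admissible information structure this exceeds $\inf_\gamma\alpha_{p^\dagger(\pi_F,\bar\epsilon)}(\hat\pi_F,\gamma,\bar\epsilon)$, and applying the appendix's misspecified-distribution lemma for $p^\dagger$ (with $\pi=\hat\pi_F$, $\tilde\pi=\pi_F$) together with the identity $\inf_\gamma\alpha_{p^\dagger(\hat\pi_F,\bar\epsilon)}(\hat\pi_F,\gamma,\bar\epsilon)=\max_{p\in\mathcal{P}}\inf_\gamma\alpha_p(\hat\pi_F,\gamma,\bar\epsilon)$ (by definition of $p^\dagger$) yields this bound up to the extra penalty $4d_1(\pi_F,\hat\pi_F)/\bar\epsilon+2d_1(\pi_F,\hat\pi_F)$.

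Summing over $F$ with weights $n_F/T$ and invoking the CIR bound $(1/T)\sum_I n_I\epsilon_I\leq\epsilon$ together with the calibration bound $(1/T)\sum_F n_F d_1(\pi_F,\hat\pi_F)\leq\iota$ assembles the two claimed error groups $(\epsilon+4\iota+K^U_{\mathcal{R}}\delta_{\mathcal{R}}+2K^U_{\mathcal{P}}\delta_{\mathcal{P}})/\bar\epsilon$ and $2\iota+K^V_{\mathcal{R}}\delta_{\mathcal{R}}+K^V_{\mathcal{P}}\delta_{\mathcal{P}}$. The main obstacle is bookkeeping: tracking five error sources (CIR slack, forecast miscalibration, response discretization, policy discretization, and tolerance-smoothing via Lemma \ref{L2}) through each step with the right coefficients. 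The mixing step (iii) is the subtlest conceptually, but transfers from Lemma \ref{L4} unchanged because it depends only on convexity of the tolerance-budgeted feasible set in the definition of $\alpha(\cdot,\gamma,\epsilon)$ and not on any property specific to $p^*$ versus $p^\dagger$.
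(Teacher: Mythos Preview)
Your proposal is correct and follows essentially the same route as the paper's own proof: the per-$I$ Lipschitz-plus-$\alpha$ bound, the tolerance-smoothing via the $\tilde\epsilon/\epsilon$ lemma, the mixing step that collapses the $n_I$-weighted average of $\alpha(\hat\pi_I,\bar\epsilon)$ into $\alpha(\hat\pi_F,\hat\gamma_F,\bar\epsilon)$, the pass to $\inf_\gamma$, and then the misspecified-distribution lemma for $p^\dagger$ to swap $\pi_F$ for $\hat\pi_F$ inside the policy are all exactly as in the paper. One small remark: you (correctly) land on $\max_{p}\inf_\gamma\alpha_p(\hat\pi_F,\gamma,\bar\epsilon)=\inf_\gamma\alpha_{p^\dagger(\hat\pi_F,\bar\epsilon)}(\hat\pi_F,\gamma,\bar\epsilon)$, which is precisely what the paper's argument establishes and what is used downstream in the $\nabla$-bound, even though the displayed lemma statement writes the quantifiers in the order $\inf_\gamma\max_p$.
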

\begin{proof}
	Let $r_I$ be a representative element in the response context $R$ associated with information $I$ under mechanism $\sigma^*$. By regularity,
	\begin{align*}
		\epsilon_I
		&\geq\max_{r\in\mathcal{R}}\sum_{t\in I}\left(U(r,p_I,y_t)-U(r_I,p_I,y_t)\right)-K^U_{\mathcal{R}}\delta_{\mathcal{R}}\\
		&=\max_{r\in\mathcal{R}}\ex[y\sim\hat{\pi}_I]{U(r,p_I,y)-U(r_I,p_I,y)}-K^U_{\mathcal{R}}\delta_{\mathcal{R}}
	\end{align*}
	It follows, by regularity and definition of $\alpha$, that
	\begin{align*}
		\frac{1}{n_I}\sum_{t\in I}V(r_t,p_I,y_t)
		&\geq\ex[y\sim\hat{\pi}_I]{V(r_I,p_I,y)}-K^V_{\mathcal{R}}\delta_{\mathcal{R}}\\
		&\geq\alpha_{p_I}(\hat{\pi}_I,\epsilon_I+K^U_{\mathcal{R}}\delta_{\mathcal{R}})-K^V_{\mathcal{R}}\delta_{\mathcal{R}}
	\end{align*}
	Summing over information $I\in\mathcal{I}_F$, we obtain
	\begin{align*}
		\frac{1}{n_F}\sum_{I\in\mathcal{I}_F}\sum_{t\in I}
		V(r_t,p_F,y_t)
		&\geq\frac{1}{n_F}\sum_{I\in\mathcal{I}_F}\sum_{t\in I}
		\alpha_{p_F}(\hat{\pi}_I,\epsilon_I+K^U_{\mathcal{R}}\delta_{\mathcal{R}})-K^V_{\mathcal{R}}\delta_{\mathcal{R}}\\
		&\geq\frac{1}{n_F}\sum_{I\in\mathcal{I}_F}\sum_{t\in I}\alpha_{p^\dagger(\pi_t,\bar\epsilon)}(\hat{\pi}_I,\epsilon_I+K^U_{\mathcal{R}}\delta_{\mathcal{R}}+2K^U_{\mathcal{P}}\delta_{\mathcal{P}})-K^V_{\mathcal{R}}\delta_{\mathcal{R}}-K^V_{\mathcal{P}}\delta_{\mathcal{P}}\\
		&\geq\frac{1}{n_F}\sum_{I\in\mathcal{I}_F}\sum_{t\in I}\left(\alpha_{p^\dagger(\pi_t,\bar\epsilon)}(\hat{\pi}_I,\bar\epsilon)-\frac{\epsilon_I+K^U_{\mathcal{R}}\delta_{\mathcal{R}}+2K^U_{\mathcal{P}}\delta_{\mathcal{P}}}{\bar{\epsilon}}\right)-K^V_{\mathcal{R}}\delta_{\mathcal{R}}-K^V_{\mathcal{P}}\delta_{\mathcal{P}}\\
		&=\frac{1}{n_F}\sum_{I\in\mathcal{I}_F}\sum_{t\in I}
		\alpha_{p^\dagger(\pi_t,\bar\epsilon)}(\hat{\pi}_I,\bar\epsilon)
		-\left(\frac{\epsilon_F+K^U_{\mathcal{R}}\delta_{\mathcal{R}}+2K^U_{\mathcal{P}}\delta_{\mathcal{P}}}{\bar{\epsilon}}\right)
		-K^V_{\mathcal{R}}\delta_{\mathcal{R}}-K^V_{\mathcal{P}}\delta_{\mathcal{P}}
	\end{align*}
	Focus on the first term, i.e.
	\begin{align*}
		\frac{1}{n_F}\sum_{I\in\mathcal{I}_F}n_I\alpha_{p^\dagger(\pi_F,\bar\epsilon)}(\hat{\pi}_I,\bar\epsilon)
		&=\ex[y\sim\hat{\pi}_F]{\ex[J\sim\hat{\gamma}(\cdot,y)]{\alpha_{p^\dagger(\pi_F,\bar\epsilon)}(\hat\pi_{F\mid J},\bar\epsilon)}}\\
		&\geq\min_{e_J}\ex[y\sim\hat\pi_F]{\ex[J\sim\hat{\gamma}(\cdot,y)]{\alpha_{p^\dagger(\pi_F,\bar\epsilon)}(\hat\pi_{F\mid J},e_J)}}
		\quad\mathrm{s.t.}\quad\bar{\epsilon}=\ex[y\sim\hat\pi_F]{\ex[J\sim\hat\gamma{\cdot,y}]{e_J}}\\
		&=\alpha_{p^\dagger(\pi_F,\bar\epsilon)}(\hat\pi_F,\hat\gamma_F,\bar\epsilon)\\
		&\geq\inf_\gamma\alpha_{p^\dagger(\pi_F,\bar\epsilon)}(\hat\pi_F,\gamma,\bar\epsilon)\\
		&\geq\inf_\gamma\alpha_{p^\dagger(\hat\pi_F,\bar\epsilon)}(\hat\pi_F,\gamma,\bar\epsilon)-\frac{4d_1(\pi_F,\hat{\pi}_F)}{\bar\epsilon}-2d_1(\pi_F,\hat{\pi}_F)
	\end{align*}
	Collapsing these inequalities gives us
	\begin{align*}
	\frac{1}{n_F}\sum_{I\in\mathcal{I}_F}\sum_{t\in I}V(r_t,p_F,y_t)
	\geq&
	\inf_\gamma\alpha_{p^\dagger(\hat\pi_F,\bar\epsilon)}(\hat\pi_F,\gamma,\bar\epsilon)
	-\left(\frac{\epsilon_F+4d_1(\pi_F,\hat{\pi}_F)+K^U_{\mathcal{R}}\delta_{\mathcal{R}}+2K^U_{\mathcal{P}}\delta_{\mathcal{P}}}{\bar{\epsilon}}\right)\\
	&-\left(2d_1(\pi_F,\hat{\pi}_F)+K^V_{\mathcal{R}}\delta_{\mathcal{R}}+K^V_{\mathcal{P}}\delta_{\mathcal{P}}\right)
	\end{align*}
	Summing over forecast contexts $F\in\mathcal{F}$ gives us the desired result.
\end{proof}

\begin{lemma}\label{L7}
	Suppose the principal runs some constant mechanism $\sigma^p\in\Sigma_0$. Then
	\[
	\frac{1}{T}\sum_{t=1}^TV(r_t,p,y_t)
	\leq
	\sum_{F\in\mathcal{F}}n_F\max_{\tilde{p}\in\mathcal{P}}\beta_{\tilde{p}}(\hat\pi_F,\hat\gamma_F,\bar\epsilon)
	+\left(\frac{\epsilon+K^U_{\mathcal{R}}\delta_{\mathcal{R}}}{\bar{\epsilon}}\right)
	+K^V_{\mathcal{R}}\delta_{\mathcal{R}}
	\]
\end{lemma}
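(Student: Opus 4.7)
The plan is to mirror the first half of the proof of lemma \ref{L5}, but to stop short of invoking the alignment assumption: for lemma \ref{L7} I want the bound to retain the empirical information structure $\hat\gamma_F$ rather than trade it away for a residual-regret term. The payoff of this is that no alignment hypothesis is needed, and the bookkeeping is actually simpler.

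First, within each information context $I$, let $r_I$ denote a representative element of the response context associated with $I$ under the constant mechanism $\sigma^p$. Because the agent's counterfactual internal regret is at most $\epsilon$ (assumption \ref{A3}) and every response in context $I$ lies within $\delta_{\mathcal{R}}$ of $r_I$, regularity (assumption \ref{A1}) applied to $U$ yields
\[
\epsilon_I \geq \max_{r \in \mathcal{R}} \ex[y \sim \hat\pi_I]{U(r, p, y) - U(r_I, p, y)} - K^U_{\mathcal{R}} \delta_{\mathcal{R}}.
\]
Applying regularity to $V$ and the definition of $\beta_p$, then invoking lemma \ref{L1} to inflate the perturbation from $\epsilon_I + K^U_{\mathcal{R}} \delta_{\mathcal{R}}$ up to $\bar\epsilon$ at additive cost $(\epsilon_I + K^U_{\mathcal{R}} \delta_{\mathcal{R}})/\bar\epsilon$, I would derive
\[
\frac{1}{n_I} \sum_{t \in I} V(r_t, p, y_t) \leq \beta_p(\hat\pi_I, \bar\epsilon) + \frac{\epsilon_I + K^U_{\mathcal{R}} \delta_{\mathcal{R}}}{\bar\epsilon} + K^V_{\mathcal{R}} \delta_{\mathcal{R}}.
\]

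Next, I average over $I \in \mathcal{I}_F$ with weights $n_I/n_F$, using $\epsilon_F = n_F^{-1}\sum_{I \in \mathcal{I}_F} n_I \epsilon_I$. The only step where anything really happens is recognizing that the convex combination $n_F^{-1}\sum_{I \in \mathcal{I}_F} n_I \beta_p(\hat\pi_I, \bar\epsilon)$ is itself at most $\beta_p(\hat\pi_F, \hat\gamma_F, \bar\epsilon)$. By Bayes' rule (the very definition of $\hat\gamma_F$), this mixture equals $\ex[y \sim \hat\pi_F]{\ex[J \sim \hat\gamma_F(\cdot, y)]{\beta_p(\hat\pi_{F\mid J}, \bar\epsilon)}}$; exactly as in the corresponding step of lemma \ref{L5}, this is upper bounded by $\max_{e_J} \ex[y \sim \hat\pi_F]{\ex[J \sim \hat\gamma_F(\cdot, y)]{\beta_p(\hat\pi_{F\mid J}, e_J)}}$ subject to the averaged budget $\bar\epsilon = \ex[y \sim \hat\pi_F]{\ex[J \sim \hat\gamma_F(\cdot, y)]{e_J}}$, which by definition of $\beta$ in the stage game with private signals equals $\beta_p(\hat\pi_F, \hat\gamma_F, \bar\epsilon)$.

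To finish, I weaken to $\max_{\tilde p \in \mathcal{P}} \beta_{\tilde p}(\hat\pi_F, \hat\gamma_F, \bar\epsilon)$, sum across forecast contexts $F \in \mathcal{F}$, and apply the CIR bound in the form $T^{-1}\sum_F n_F \epsilon_F \leq \epsilon$ to collapse the agent-regret contributions into the single factor $\epsilon/\bar\epsilon$ displayed in the lemma. I do not anticipate a genuine obstacle: relative to lemma \ref{L5} the only conceptual move is retaining $\hat\gamma_F$ inside $\beta$ rather than using assumption \ref{A10} to peel it off, and that move is essentially a one-line application of the law of total expectation plus the definition of $\beta_p(\cdot,\gamma,\cdot)$.
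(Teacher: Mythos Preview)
Your proposal is correct and follows essentially the same route as the paper's own proof: bound within each information context $I$ using regularity and the definition of $\beta_p$, inflate the perturbation to $\bar\epsilon$ via lemma \ref{L1}, rewrite the weighted average over $I\in\mathcal{I}_F$ as the signal-expectation $\ex[y\sim\hat\pi_F]{\ex[J\sim\hat\gamma_F(\cdot,y)]{\beta_p(\hat\pi_{F\mid J},\bar\epsilon)}}$, and bound that by $\beta_p(\hat\pi_F,\hat\gamma_F,\bar\epsilon)$ via the optimization over budgets $e_J$ before weakening to the max over $\tilde p$ and summing over $F$. The only cosmetic difference is that you apply lemma \ref{L1} at the level of each $I$ before averaging, whereas the paper averages first and then applies it; either order is fine.
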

\begin{proof}
	Let $r_I$ be a representative element in the response context $R$ associated with information $I$ under mechanism $\sigma^p$. By regularity,
	\begin{align*}
		\epsilon_I
		&\geq\max_{r\in\mathcal{R}}\sum_{t\in I}\left(U(r,p,y_t)-U(r_I,p,y_t)\right)-K^U_{\mathcal{R}}\delta_{\mathcal{R}}\\
		&=\max_{r\in\mathcal{R}}\ex[y\sim\hat{\pi}_I]{U(r,p,y)-U(r_I,p,y)}-K^U_{\mathcal{R}}\delta_{\mathcal{R}}
	\end{align*}
	It follows, by regularity and definition of $\beta$, that
	\begin{align*}
		\frac{1}{n_I}\sum_{t\in I}V(r_t,p,y_t)
		&\leq\ex[y\sim\hat{\pi}_I]{V(r_I,p,y)}+K^V_{\mathcal{R}}\delta_{\mathcal{R}}\\
		&\leq\beta_{p}(\hat{\pi}_I,\epsilon_I+K^U_{\mathcal{R}}\delta_{\mathcal{R}})+K^V_{\mathcal{R}}\delta_{\mathcal{R}}
	\end{align*}
	Summing over information $I\in\mathcal{I}_F$, we obtain
	\begin{align*}
		\frac{1}{n_F}\sum_{I\in\mathcal{I}_F}\sum_{t\in I}
		V(r_t,p,y_t)
		&\leq\frac{1}{n_F}\sum_{I\in\mathcal{I}_F}n_I
		\beta_{p}(\hat{\pi}_I,\epsilon_I+K^U_{\mathcal{R}}\delta_{\mathcal{R}})+K^V_{\mathcal{R}}\delta_{\mathcal{R}}\\
		&\leq\frac{1}{n_F}\sum_{I\in\mathcal{I}_F}n_I
		\left(\beta_{p}(\hat{\pi}_I,\bar\epsilon)+\frac{\epsilon_I+K^U_{\mathcal{R}}\delta_{\mathcal{R}}}{\bar{\epsilon}}\right)
		+K^V_{\mathcal{R}}\delta_{\mathcal{R}}\\
		&=\frac{1}{n_F}\sum_{I\in\mathcal{I}_F}n_I
		\beta_{p}(\hat{\pi}_I,\bar\epsilon)
		+\left(\frac{\epsilon_F+K^U_{\mathcal{R}}\delta_{\mathcal{R}}}{\bar{\epsilon}}\right)
		+K^V_{\mathcal{R}}\delta_{\mathcal{R}}
	\end{align*}
	Focus on the first term, i.e.
	\begin{align*}
		\frac{1}{n_F}\sum_{I\in\mathcal{I}_F}n_I\beta_{p}(\hat{\pi}_I,\bar\epsilon)
		&=\ex[y\sim\hat{\pi}_F]{\ex[J\sim\hat{\gamma}(\cdot,y)]{\beta_{p}(\hat\pi_{F\mid J},\bar\epsilon)}}\\
		&\leq\max_{e_J}\ex[y\sim\hat\pi_F]{\ex[J\sim\hat{\gamma}(\cdot,y)]{\beta_{p}(\hat\pi_{F\mid J},e_J)}}
		\quad\mathrm{s.t.}\quad\bar{\epsilon}=\ex[y\sim\hat\pi_F]{\ex[J\sim\hat\gamma{\cdot,y}]{e_J}}\\
		&=\beta_{p}(\hat\pi_F,\hat\gamma_F,\bar\epsilon)\\
		&\leq\max_{\tilde{p}\in\mathcal{P}}\beta_{\tilde{p}}(\hat\pi_F,\hat\gamma_F,\bar\epsilon)
	\end{align*}
	Collapsing these inequalities gives us
	\begin{align*}
		\frac{1}{n_F}\sum_{I\in\mathcal{I}_F}\sum_{t\in I}V(r_t,p,y_t)
		\leq
		\max_{\tilde{p}\in\mathcal{P}}\beta_{\tilde{p}}(\hat\pi_F,\hat\gamma_F,\bar\epsilon)
		+\left(\frac{\epsilon_F+K^U_{\mathcal{R}}\delta_{\mathcal{R}}}{\bar{\epsilon}}\right)
		+K^V_{\mathcal{R}}\delta_{\mathcal{R}}
	\end{align*}
	Summing over forecast contexts $F\in\mathcal{F}$ gives us the desired result.
\end{proof}

From these two lemmas, it immediately follows that
\begin{align*}
	\pregs
	\leq
	\sum_{F\in\mathcal{F}}n_F\nabla(\hat\pi_F,\bar\epsilon)
	+2\left(\frac{\epsilon+2\iota+K^U_{\mathcal{R}}\delta_{\mathcal{R}}+K^U_{\mathcal{P}}\delta_{\mathcal{P}}}{\bar{\epsilon}}\right)
	+\left(2\iota+2K^V_{\mathcal{R}}\delta_{\mathcal{R}}+K^V_{\mathcal{P}}\delta_{\mathcal{P}}\right)
\end{align*}
Therefore, to bound the principal's regret, all that remains is to bound $\iota$. If we use the algorithm from appendix \ref{Ap2-0} to generate $\pi_t$, this follows directly from equation \eqref{E4}, which states
\[
\ex[\sigma^*]{\iota}
\leq
\sqrt{|\mathcal{Y}||\mathcal{F}|\sqrt{\frac{2\log |\mathcal{F}|}{T}}+2|\mathcal{Y}|\delta_{\mathcal{F}}}
\]
Finally, we obtain our bound on the expected principal's regret.
\begin{align*}
	\ex[\sigma^*]{\pregs}
	\leq&\frac{1}{T}\sum_{F\in\mathcal{F}}n_F
	\nabla(\hat{\pi}_F,\bar\epsilon)
	+2\left(\frac{\epsilon+2\sqrt{|\mathcal{Y}||\mathcal{F}|\sqrt{\frac{2\log |\mathcal{F}|}{T}}+2|\mathcal{Y}|\delta_{\mathcal{F}}}
		+K^U_{\mathcal{R}}\delta_{\mathcal{R}}+K^U_{\mathcal{P}}\delta_{\mathcal{P}}}{\bar{\epsilon}}\right)\\
	&+\left(2\sqrt{|\mathcal{Y}||\mathcal{F}|\sqrt{\frac{2\log |\mathcal{F}|}{T}}+2|\mathcal{Y}|\delta_{\mathcal{F}}}+2K^V_{\mathcal{R}}\delta_{\mathcal{R}}+K^V_{\mathcal{P}}\delta_{\mathcal{P}}\right)
\end{align*}

	\subsection{Proof of Theorem \ref{T1}}
	
	We adapt the proof of theorem \ref{T4} to prove theorem \ref{T1}. This will require only relatively minor changes. Let $\hat{\pi}_{I,F}$ denote the empirical distribution among periods $t\in I\cap F$. Let $n_{I,F}$ indicate the number of such periods. Let $\pi_F$ denote the (unique) forecast associated with forecast context $F$.  Previously, we defined
	\[
	\iota\geq\frac{1}{T}\sum_{t=1}^Td_1(\pi_t,\hat{\pi}_I)
	\]
	Now, we define
	\[
	\iota\geq\frac{1}{T}\sum_{t=1}^Td_1(\pi_t,\hat{\pi}_{I,F})
	\]
	Begin at the last line of lemma \ref{L2}, where it says ``focus on the first term''. Rewrite that first term as
	\[
	\frac{1}{T}\sum_{I\in\mathcal{I}}\sum_{F\in\mathcal{F}}\sum_{t\in I\cap F}
	\alpha_{p^*(\pi_F,\bar{\epsilon})}(\hat{\pi}_I,\bar\epsilon)
	\]
	Now we switch $\hat{\pi}_I$ with $\pi_I$, i.e. the convex combination of forecasts,
	\[
	\pi_I=\frac{1}{n_I}\sum_{F\in\mathcal{F}}n_{I,F}\pi_F
	\]
	By lemma \ref{???1}, this gives us
	\[
	\frac{1}{T}\sum_{I\in\mathcal{I}}\sum_{F\in\mathcal{F}}\sum_{t\in I\cap F}
	\alpha_{p^*(\pi_F,\bar{\epsilon})}(\hat{\pi}_I,\bar\epsilon)
	\]
	\[
	\geq\frac{1}{T}\sum_{I\in\mathcal{I}}\sum_{F\in\mathcal{F}}\sum_{t\in I\cap F}
	\left(\alpha_{p^*(\pi_F,\bar{\epsilon})}(\pi_I,\bar\epsilon)-\frac{2d_1(\pi_I,\hat{\pi}_I)}{\bar{\epsilon}}-d_1(\pi_I,\hat{\pi}_I)\right)
	\]
	Note that every forecast $\pi_F$ leads to a policy $p^*(\pi_F,\bar{\epsilon})$ that is in the policy context $P$ associated with information $I$. By assumption \ref{???},
	\[
	\geq\frac{1}{T}\sum_{I\in\mathcal{I}}\sum_{F\in\mathcal{F}}\sum_{t\in I\cap F}
	\left(\alpha_{p^*(\pi_I,\bar{\epsilon})}(\pi_I,\bar\epsilon)-\frac{2d_1(\pi_I,\hat{\pi}_I)}{\bar{\epsilon}}-d_1(\pi_I,\hat{\pi}_I)-O(\delta_{\mathcal{P}})\right)
	\]
	Now we apply lemma \ref{???1} again,
	\[
	\geq\frac{1}{T}\sum_{I\in\mathcal{I}}\sum_{F\in\mathcal{F}}\sum_{t\in I\cap F}
	\left(\alpha_{p^*(\pi_I,\bar{\epsilon})}(\hat\pi_I,\bar\epsilon)-\frac{4d_1(\pi_I,\hat{\pi}_I)}{\bar{\epsilon}}-2d_1(\pi_I,\hat{\pi}_I)-O(\delta_{\mathcal{P}})\right)
	\]
	and then lemma \ref{???2}
	\[
	\geq\frac{1}{T}\sum_{I\in\mathcal{I}}\sum_{F\in\mathcal{F}}\sum_{t\in I\cap F}
	\left(\alpha_{p^*(\hat\pi_I,\bar{\epsilon})}(\hat\pi_I,\bar\epsilon)-\frac{8d_1(\pi_I,\hat{\pi}_I)}{\bar{\epsilon}}-4d_1(\pi_I,\hat{\pi}_I)-O(\delta_{\mathcal{P}})\right)
	\]
	By the homogeneity and subadditivity of the $l_1$ norm,
	\[
	d_1(\pi_I,\hat{\pi}_I)\leq\frac{1}{n_I}\sum_{i=1}^nn_{I,F}d_1(\pi_{I,F},\hat{\pi}_{I,F})
	\]
	which gives us
	\[
	\geq\frac{1}{T}\sum_{I\in\mathcal{I}}\sum_{F\in\mathcal{F}}\sum_{t\in I\cap F}
	\left(\alpha_{p^*(\hat\pi_I,\bar{\epsilon})}(\hat\pi_I,\bar\epsilon)-\frac{8d_1(\pi_{I,F},\hat{\pi}_{I,F})}{\bar{\epsilon}}-4d_1(\pi_{I,F},\hat{\pi}_{I,F})-O(\delta_{\mathcal{P}})\right)
	\]
	\[
	\geq\frac{1}{T}\sum_{I\in\mathcal{I}}\sum_{F\in\mathcal{F}}\sum_{t\in I\cap F}
	\left(\alpha_{p^*(\hat\pi_I,\bar{\epsilon})}(\hat\pi_I,\bar\epsilon)\right)
	-\frac{8\iota}{\bar{\epsilon}}-4\iota-O(\delta_{\mathcal{P}})
	\]
	This is essentially where we were by the end of lemma \ref{L2}, with the addition of an $O(\delta_{\mathcal{P}})$ term and slightly different constants.
	
	Lemma \ref{L3} requires no change. The discussion following lemma \ref{L3} requires very little change. Find the line that begins with ``Consider any two periods''. We rewrite as follows. Consider any two periods $t,\tau$ where $I_t=I_\tau$ and $F_t=F_\tau$ but $C_t\neq C_\tau$. Since $F_t=F_\tau$ we know that $\pi_t=\pi_\tau$. Now, consider
	\[
	n_{F_t,C_t}d_1(\pi_t,\hat{\pi}_{C_t})+n_{F_\tau,C_\tau}d_1(\pi_\tau,\hat{\pi}_{C_\tau})
	\]
	\[
	=n_{F_t,C_t}d_1(\pi_t,\hat{\pi}_{C_t})+n_{F_t,C_\tau}d_1(\pi_t,\hat{\pi}_{C_\tau})
	\]
	\[
	\geq \left(n_{F_t,C_t}+n_{F_t,C_\tau}\right)d_1\left(\pi_t,\frac{1}{n_{F_t,C_t}+n_{F_t,C_\tau}}\left(n_{F_t,C_t}\hat{\pi}_{C_t}+n_{F_t,C_\tau}d_1(\pi_t,\hat{\pi}_{C_\tau})\right)\right)
	\]
	by subadditivity and homogeneity of norms. By continuing this process of combining contexts, we find
	\[
	\frac{1}{T}\sum_{C\in\mathcal{C}}\sum_{F\in\mathcal{F}}n_{F,C}d_1(\pi_t,\hat{\pi}_C)
	\geq\frac{1}{T}\sum_{I\in\mathcal{I}}\sum_{F\in\mathcal{F}}n_{I,F}d_1(\pi_t,\hat{\pi}_{I,F})
	=\iota
	\]
	Therefore, our bound holds except with the addition of an $O(\delta_{\mathcal{P}})$ term and slightly different constants.
	
	\subsection{Proof of Theorem \ref{T2}}
	
	Define 
	\[
	\pi_P=\frac{1}{n_P}\sum_{F\in\mathcal{F}}n_{P,F}\pi_F
	\]
	It is straightforward to adapt the proof of theorem \ref{T5}. Replace all reference to $p^*(\pi_F,\bar{\epsilon})$ with $p^*(\pi_P,\bar{\epsilon})$. This changes $U$ and $V$ (and all derived terms, like $\alpha$) by at most $O(\delta_{\mathcal{P}})$, by assumption \ref{???regularity3}. Replace all remaining references of forecast contexts $F$ to policy contexts $P$. It remains to verify that
	\[
	\iota\geq\frac{1}{T}\sum_{F\in\mathcal{F}}\sum_{t\in F}d_1(\pi_F,\hat{\pi}_F)\geq\frac{1}{T}\sum_{P\in\mathcal{P}}\sum_{t\in F}d_1(\pi_P,\hat{\pi}_P)
	\]
	which follows from the homogeneity and subadditivity of the $l_1$ norm, and the fact that $\pi_P,\hat{\pi}_P$ are convex combinations of $\pi_F,\hat{\pi}_F$ for $F\subseteq P$.
	
	\subsection{Proof of Theorem \ref{T3}}
	
	Define 
	\[
	\pi_P=\frac{1}{n_P}\sum_{F\in\mathcal{F}}n_{P,F}\pi_F
	\]
	It is straightforward to adapt the proof of theorem \ref{T6}. Replace all reference to $p^\dagger(\pi_F,\bar{\epsilon})$ with $p^\dagger(\pi_P,\bar{\epsilon})$. This changes $U$ and $V$ (and all derived terms, like $\alpha$) by at most $O(\delta_{\mathcal{P}})$, by assumption \ref{???regularity3}. Replace all remaining references of forecast contexts $F$ to policy contexts $P$. It remains to verify that
	\[
	\iota\geq\frac{1}{T}\sum_{F\in\mathcal{F}}\sum_{t\in F}d_1(\pi_F,\hat{\pi}_F)\geq\frac{1}{T}\sum_{P\in\mathcal{P}}\sum_{t\in F}d_1(\pi_P,\hat{\pi}_P)
	\]
	which follows from the homogeneity and subadditivity of the $l_1$ norm, and the fact that $\pi_P,\hat{\pi}_P$ are convex combinations of $\pi_F,\hat{\pi}_F$ for $F\subseteq P$.

\end{document}